\DeclareMathOperator*{\argmin}{arg\,min}
\newcommand{\Var}{\operatorname{Var}}
\newcommand{\Cov}{\operatorname{Cov}}
\DeclareRobustCommand{\IPI}{\textnormal{IPI}}
\DeclareRobustCommand{\CIPI}{\textnormal{CIPI}}
\newtheorem{theorem}{Theorem}
\newtheorem{proposition}{Proposition}
\newtheorem{assumption}{Assumption}
\newtheorem{lemma}{Lemma}
\newtheorem{corollary}{Corollary}
\newtheorem{definition}{Definition}
\newtheorem{fact}{Fact}
\definecolor{red2}{rgb}{0.7, 0, 0.1}
\title{Imputation-Powered Inference}
\author{%
  Sarah Zhao\\
  Department of Statistics\\
  Stanford University\\
  Stanford, CA 94305 \\
  \texttt{smxzhao@stanford.edu} \\
  \And
  Emmanuel Candès \\
  Department of Statistics \\
  Stanford University\\
  Stanford, CA 94305 \\
  \texttt{candes@stanford.edu} \\
}
\begin{document}

\maketitle

\begin{abstract}
Modern multi-modal and multi-site data frequently suffer from blockwise missingness, where subsets of features are missing for groups of individuals, creating complex patterns that challenge standard inference methods. Existing approaches have critical limitations: complete-case analysis discards informative data and is potentially biased; doubly robust estimators for non-monotone missingness—where the missingness patterns are not nested subsets of one another—can be theoretically efficient but lack closed-form solutions and often fail to scale; and blackbox imputation can leverage partially observed data to improve efficiency but provides no inferential guarantees when misspecified. To address the limitations of these existing methods, we propose imputation-powered inference (IPI), a model-lean framework that combines the flexibility of blackbox imputation with bias correction using fully observed data, drawing on ideas from prediction-powered inference and semiparametric inference. IPI enables valid and efficient M-estimation under missing completely at random (MCAR) blockwise missingness and improves subpopulation inference under a weaker assumption we formalize as first-moment MCAR, for which we also provide practical diagnostics. Simulation studies and a clinical application demonstrate that IPI may substantially improve subpopulation efficiency relative to complete-case analysis, while maintaining statistical validity in settings where both doubly robust estimators and naive imputation fail to achieve nominal coverage.
\end{abstract}

\section{Introduction}
In modern datasets, missing values are pervasive, arising from multi-modal or multi-source data integration and heterogeneous participation across substudies \cite{sudlow2015ukbiobank, all2019allofus, blockwise_ad_multisource, multisource_neuroimaging}. We focus on a common and practically important structure: blockwise missingness, where entire subsets of variables are jointly missing in recurring patterns across samples. This type of missingness can severely undermine both the statistical efficiency and validity of downstream inference tasks that rely on complete data – even simple regression requires full observation of all outcomes and pertinent covariates.

In many settings, collecting the missing data is costly, time-consuming, or infeasible due to ethical or logistical constraints \cite{little2019statisticalmissing, census_survey_folktables}. A common workaround is to use machine learning models to impute missing entries. However, treating these imputed values as if they were observed
can lead to invalid inference and distorted conclusions.

To address these challenges, we introduce \textit{imputation-powered inference} (IPI), a model-lean framework that leverages flexible blackbox imputation models while ensuring valid statistical inference for a parameter of interest. Our approach integrates tools from semiparametric theory—particularly doubly robust estimation for non-monotone missingness \cite{robins1994, tsiatis2006semiparametric, sun2018inverse, sun2018inverse2}—with scalable and flexible blackbox models. This strategy builds on recent developments in prediction-powered inference (PPI) \cite{ppi}, which have primarily focused on inference in semi-supervised or single-pattern missingness where the missingness is either independent of the underlying data (\textit{missing completely at random}, or MCAR) or explicitly known and estimable by experimental design (\textit{missing at random}, or MAR, with known propensity scores) \cite{ppi, ji2025predictionssurrogatesrevisitingsurrogate_reppi, kluger2025predictionpoweredinferenceimputedcovariates}. In contrast, IPI extends blackbox-powered inference to more general blockwise missingness structures, where missing data may affect both covariates and outcomes in complex, structured patterns. By leveraging high-dimensional predictors—whether fully or partially observed—IPI improves imputation accuracy, yielding greater statistical efficiency while enabling valid inference for low-dimensional M-estimators associated with smooth convex losses, such as means or generalized linear model regression coefficients.

Under MCAR, IPI provides valid population-level inference and improves efficiency over complete-case analysis. When MCAR does not hold, our inferential goal shifts to the subpopulation for whom all variables of interest are observed. This subpopulation is particularly relevant in settings focused on high-risk or clinically prioritized groups—such as patients selected for additional testing—where researchers may wish to augment analyses using data from other sites or historical cohorts. However, combining such datasets often leads to complex, non-monotone missingness patterns. In these contexts, we demonstrate that IPI can still effectively utilize partially observed cases and remains valid under a weaker assumption, which we term first-moment MCAR. We formalize this condition and propose a straightforward diagnostic tool to assess its plausibility in practice.

Our contributions are threefold:
\begin{enumerate}
\item We introduce the \textit{imputation-powered inference} (IPI) and \textit{cross-imputation-powered inference} (CIPI) frameworks for scalable, valid inference for low-dimensional M-estimators under MCAR (Section \ref{sec:mcar_methodology}).
\item We analyze the asymptotic behavior of our estimators under non-MCAR blockwise missingness settings and introduce both the concept of and a diagnostic test for the first-moment MCAR assumption (Section \ref{sec:moving_away_mcar}).
\item We demonstrate IPI's substantial gains in inference efficiency on both simulated and real-world clinical datasets (Section \ref{sec: expts}).
\end{enumerate}

Code is provided at \href{https://github.com/sarmxzh/imputation-powered-inference}{https://github.com/sarmxzh/imputation-powered-inference}.

\section{Problem set-up}
\paragraph{Data and missingness patterns.}
Consider data $\{(D_i,M_i)\}_{i=1}^{N}$ drawn independently and identically distributed (i.i.d.) from a joint distribution $P_{D,M}$, where   
$D_i\in\mathbb{R}^{d}$ is the complete set of features and the missingness indicator vector  
$M_i\in\{1,\texttt{NA}\}^{d}$ specifies which entries are observed.

The \emph{available} data are  
\begin{equation}
O_i \;=\;D_i\circ M_i, 
\qquad\text{with}\quad
(O_i)_j \;=\;
\begin{cases}
D_{ij}, & M_{ij}=1,\\[4pt]
\texttt{NA}, & M_{ij}=\texttt{NA},
\end{cases}\quad j\in[d],
\end{equation}
where ``$\circ$’’ denotes the Hadamard (entry-wise) product and $[d] = \{1, ..., d\}$.

We consider \textit{blockwise} missingness with $R$ distinct nontrivial patterns $m_{1},\dots,m_{R}\in\{1,\texttt{NA}\}^{d}$, where $R$ is typically small relative to the sample size $N$, as in multi-site or multi-modal data fusion (see Figures \ref{fig:fm-results}(a) and \ref{fig:acs-results}(a) for examples of this blockwise missingness structure). The missingness distribution can then be modeled as a categorical variable over $R+1$ pattern indicators:
\begin{align*}
    M_i\sim\text{Categorical}\bigl((m_0 = \bm 1_d,m_{1},\dots,m_{R}),\,(p_{0},p_{1},\dots,p_{R})\bigr),
\qquad p_r := \mathbb{P}[M = m_r]>0.
\end{align*}
The joint distribution is then $P_{D,M} = P_M \times P_{D|M} = \prod\nolimits_{r =0}^R \left(p_r\times P_{D|m_r} \right)^{\bm{1}\{M = m_r\}}$, and under MCAR, $D$ is independent of $M$, so $P_{D|M} = P_D$.

We denote the number of samples for each missing pattern as 
\begin{equation}
    n :=\;\#\{i:M_i=\bm 1_d\},\qquad  
\tilde N_r :=\;\#\{i:M_i=m_r\},\qquad \tilde{N} := \;\sum\nolimits_{r=1}^{R}\tilde N_r,
\end{equation}
with $N$ denoting the total number of samples: $N=n + \tilde{N}$. For any function $g$ taking in observed data with inputs in $\{\mathbb{R} \cup \texttt{NA}\}^d$, we denote the empirical means for each missing pattern as
\begin{equation}
    \mathbb{P}_n g(O) = \frac{1}{n}\sum\nolimits_{i \in [N]: M_{i } = \bm{1}_d} g(O_i),\qquad  
    \mathbb{P}_{\tilde{N}_r}g(O) = \frac{1}{\tilde{N}_r} \sum\nolimits_{i \in [N]: M_i = m_r} g(O_i).
\end{equation}
We assume that $n, \tilde{N_r} > 0$ for $r \in [R]$ are sufficiently large for normal approximations to hold.

\paragraph{Covariates and auxiliary features.}
While the full feature set $D$ may be high-dimensional, our inferential target involves a lower-dimensional subset $X$. For clarity, we partition $D=(X,U)$, where $X\in\mathbb{R}^p$ contains the features of primary interest (this critically  \textit{includes} the response in regression and classification settings) and $U\in\mathbb{R}^{d-p}$ (possibly empty) consists of auxiliary features used solely to boost imputation. Missingness may occur in either $X$ or $U$, but subsetting $U$ to have less missingness can sometimes enhance imputation accuracy and overall sample efficiency.

\paragraph{Object of inference.}
Let $X$ denote the fully observed vector for features of interest and $\ell(X;\theta)$ be a specified convex loss. This work focuses on inference of the M-estimand for the fully observed population:
\begin{equation}
    \tilde{\theta}_\star = \argmin\nolimits_\theta \mathbb{E}[\ell(X; \theta) | M = \mathbf{1}_d],
\end{equation}
where the expectation is taken with respect to the distribution of $X$ conditional on there being no missingness in the feature vector. Estimands include subpopulation means, medians, quantiles and so on. Losses also include generalized linear model likelihoods, which include common estimands such as linear and logistic regression coefficients.

Under the MCAR setting (Section \ref{sec:mcar_methodology}), since $X$ is independent of $M$, $\tilde{\theta}_\star = \theta_\star$, the general population M-estimand:
\begin{equation}
    \theta_\star = \argmin\nolimits_\theta \mathbb{E}[\ell(X; \theta)],
\end{equation}
where the expectation is taken with respect to the true distribution of $X$. 

When we drop the MCAR assumption in Section \ref{sec:moving_away_mcar}, the shifted estimand $\tilde{\theta}_\star$ will often not be equal to $\theta_\star$. Here, $\tilde{\theta}_\star$ remains the primary object of inference when interest lies specifically in flagged subpopulations with more thorough measurements—such as a rare-disease focus group or patients who undergo further testing—rather than the full cohort.

\paragraph{Blackbox imputation.}
We assume access to a row-wise imputation method  
\begin{equation*}
    f:\bigl(\mathbb{R}\cup\{\texttt{NA}\}\bigr)^{d}\longrightarrow\mathbb{R}^{p},
\end{equation*}
which takes in the $d$-dimensional observed data vector $O$ (may contain \texttt{NA} entries) and fills in the missing entries for any missing entries in $X$. We remark that global procedures that exploit the \emph{entire} dataset at once must be adapted to variants that fit row-wise models on a training subset—such as MissForest \cite{stekhoven2012missforest}, EM methods \cite{dempster1977maximum}, or optimal-transport imputers \cite{muzellec2020missing_OTimpute}. 

\section{Related Work}\label{sec:related-works}

\paragraph{Prediction-powered inference.}
This work draws much of its inspiration from the recent line of PPI literature \cite{ppi, zrnic_cppi, angelopoulos2024ppi++, zrnic2024activestatisticalinference, kluger2025predictionpoweredinferenceimputedcovariates, zrnic2024notepredictionpoweredbootstrap, wang2024constructing_causalppi}, where the main idea is to leverage a fully observed dataset to debias flexible and powerful blackbox predictions in the context of semi-supervised settings. Recent variants exploit connections to surrogate outcomes \cite{ji2025predictionssurrogatesrevisitingsurrogate_reppi} or bootstrap corrections \citep{kluger2025predictionpoweredinferenceimputedcovariates} to provide more efficient or flexible methods. Crucially, all of these methods focus on a \textit{single} missing pattern (typically the response variable). In contrast, we allow for general blockwise missing patterns present in common multi-site and multi-modal data fusion applications.  

\textbf{Semiparametric theory and doubly robust estimators.}
Another line of work from which this paper draws inspiration is semiparametric theory and augmented inverse‐probability weighted (AIPW) estimators for efficient estimation under \textit{missingness-at-random} (MAR) \cite{bickel1993efficient,chernozhukov2024doubledebiasedmachinelearningtreatment,klaassen1987consistent,hasminskii1979nonparametric, levit1976efficiency,newey1994asymptotic, robinson1988root, robins1994, chen2000}. Our approach can be thought of as an instance of these estimators under the MCAR setting; however, our approach deviates in two main ways. Firstly, we provide a simple plug-and-play approach to boost inference compared to the complete-case estimator. This approach differs from the common semiparametric inference approaches discussed \cite{robins1994, tsiatis2006semiparametric}, which are often computationally intractable for today's complex high-dimensional data. Section \ref{expt:mcar} and Appendix \ref{app:aipw_expts} explore this tradeoff in an MCAR factor model simulation setting. Secondly, rather than aiming to protect against violations of MCAR through modeling MAR propensity scores and double machine learning assumptions, we ask what remains identifiable by our estimator when MCAR does not hold. This line of inquiry leads us to provide a minimal and testable assumption that helps characterize the limits of inference in more general missingness settings, without requiring correct specification of the imputation or missingness model.

\paragraph{Single and multiple imputation with ML.}
In practice, analysts frequently replace missing values with predictions from flexible algorithms—random‐forest imputation \citep{stekhoven2012missforest}, chained equations \citep{van2011mice}, OT-based methods \cite{muzellec2020missing_OTimpute}, or deep generative models \citep{yoon2018gain, mattei2019miwae, ivanov2018variational}. These tools sometimes excel at prediction accuracy, yet treating the filled‐in data as truly observed values can introduce severe bias in downstream inference. These impute-then-infer methods are referred to as single-imputation methods and often do not come with valid inference guarantees.

Foundational work by \citet{rubin1987multiple} formalized \emph{multiple imputation} (MI), demonstrating how repeated draws from a generative model can propagate uncertainty in downstream estimates.  MI remains the gold standard when (i) the analyst can posit a correctly specified joint model and (ii) the untestable MAR assumption holds. This approach; however, has been demonstrated to be fragile to violations of these assumptions \cite{bartlett2020bootstrapmi, murray2018multipleimputation, mccaw2024synsurr}. Our work instead aims to loosen these two assumptions about the correct specification of the model and about MAR -- the tradeoff is shifting the object of inference from the full population to the subpopulation that is fully observed in the non-MCAR case.

\textbf{Blockwise missing data.} We study the blockwise missing-data setting, which often arises when integrating multiple modalities or sources. Most prior methods for this regime target (generalized) linear models \citep{song2024semi,xue2021statistical,jin2023modular} or rely on specific missingness structures \citep{yu2020optimal,li2024adaptiveefficientlearningblockwise}. In contrast, IPI makes no assumptions about which covariates or outcomes are missing and accommodates general M-estimation with smooth convex losses.

\section{Imputation-powered inference (IPI) under MCAR}\label{sec:mcar_methodology}
To clarify the motivation and theoretical underpinnings of our framework, we introduce imputation-powered inference (IPI) and cross-imputation-powered inference (CIPI) under the MCAR assumption, presenting them as natural generalizations of PPI to blockwise missingness settings. Under MCAR, the IPI estimator can also be interpreted as a special case of AIPW estimators for non-monotone missingness in semiparametric inference. In contrast to prior work in semiparametrics, which emphasizes constructing semiparametrically efficient estimators \cite{robins1994, tsiatis2006semiparametric, sun2018inverse, sun2018inverse2}, our focus is on developing an implementation that is simple, scalable, and more stable. An empirical comparison with existing AIPW estimators is provided in Section~\ref{expt:mcar} and Appendix \ref{app:aipw_expts}, and a detailed discussion of their similarities and differences can be found in Appendix~\ref{app:semiparam}.

\subsection{Motivation for IPI from the PPI viewpoint}
We aim to leverage the main workflow of PPI: (i) enlarge the dataset with blackbox predictions, and (ii) debias it using fully labeled samples to ensure valid inference. Our aim is to increase effective sample size—the number of fully labeled observations that would yield the same precision of our method—while retaining nominal coverage. Indeed, intuitively, if the ML model can faithfully predict the missing values, we essentially have $\tilde{N}$ instead of $n$ datapoints, drastically boosting sample efficiency and inference power when $n \ll \tilde{N}$. If the model is misspecified, the debiasing step ensures unbiasedness and maintains coverage; we still harvest efficiency to the extent the predictions carry signal, without introducing bias.

Multiple missing-pattern settings leave (i) unchanged but complicate (ii) because imputation accuracy varies by pattern. A simple fix is to recalibrate each imputed sample against the complete cases by estimating the bias–we apply the sample’s missing pattern to fully observed data and compare the imputed values to the true ones:

\begin{align*}
    L_{\text{IPI pooled}} := \frac{1}{\tilde{N}} \sum_{i = 1}^{\tilde{N}} \bigg(\underbrace{\ell(f(O_i); \theta)}_{\text{imputed loss for partially observed sample $i$}} + \underbrace{\mathbb{P}_n \left[\ell(X; \theta) - \ell(f(O \circ M_i); \theta)\right]}_{\text{debiasing each imputation with fully observed dataset}}\bigg).
\end{align*}

Grouping by missing patterns gives
\begin{align*}
    L_{\text{IPI pooled}} = \sum_{r =1}^R \left(\frac{\tilde{N}_r}{\tilde{N}}\mathbb{P}_{\tilde{N}_r} \ell(f(O); \theta) + \mathbb{P}_n \left[\frac{1}{R} \ell(X; \theta) - \frac{\tilde{N}_r}{\tilde{N}}\ell(f(O \circ m_r); \theta)\right] \right).
\end{align*}

Setting $\tilde{\lambda}_r := R {\tilde{N}_r}/{\tilde{N}}$, we have a form akin to PPI++ \cite{angelopoulos2024ppi++}. This pooled loss can be efficient when common missing patterns are well imputed; however, that is not always the case—motivating our proposed estimator in the next section.

\subsection{IPI under MCAR}
We propose the following IPI loss with general tuning parameter $\lambda$.
\begin{definition} [IPI loss] Given $\theta \in \mathbb{R}^p$ and ${\lambda} = [\lambda_1, ..., \lambda_R]^\top \in \mathbb{R}^R$, we define the associated loss as:
\begin{equation}
    L_{\text{\em IPI}}(\theta; \lambda) 
    := \frac{1}{R}\sum_{r =1}^R \bigg(\lambda_r \mathbb{P}_{\tilde{N}_r} \left[\ell(f(O); \theta)\right]  + \mathbb{P}_{n} \bigg[\ell(X; \theta) - \lambda_r \ell(f(O \circ m_r); \theta)\bigg] \bigg).
\end{equation}
\end{definition}
One might first try to estimate $\theta_\star$ by minimizing $L_{\text{IPI}}$. However, unlike the PPI \cite{ppi} and PPI++ \cite{angelopoulos2024ppi++} semi-supervised objectives, arbitrary missingness in $D$ leads to a generally non-convex empirical risk—even for GLMs.

Under MCAR, though, $L_{\text{IPI}}$ is an unbiased estimator of $\mathbb{E}[\ell(X;\theta)]$, so a one-step update \cite{Vaart_1998, cam1960locally} recovers consistency and asymptotic normality, which we use for valid inference.

Concretely, we begin with a $\sqrt{n}$-consistent estimate—e.g., the complete-case estimator $\hat{\theta}_n$ in Eq.~\eqref{eq:complete-case-estimator}—which places us within an $O_p(n^{-1/2})$ neighborhood of the true minimizer $\theta_\star$. A single Newton update from this starting point then yields the IPI point estimate:

\begin{definition}[IPI point estimate]
\begin{equation}
    \hat{\theta}_{\text{\em IPI}, \lambda}
    := \hat{\theta}_n - \hat{H}_n^{-1} \nabla L_{\text{\em IPI}}(\hat{\theta}_n; \lambda),
\end{equation}
where the complete-case estimator is given by:
\begin{equation}\label{eq:complete-case-estimator}
    \hat{\theta}_n := \argmin\nolimits_\theta \mathbb{P}_n \left[\ell(X; \theta)\right],
\end{equation}
and $\hat{H}_{n}$ is a consistent estimator for $\mathbb{E}[\nabla^2 L_{\IPI}(\theta_\star; \lambda)]$. Under MCAR, an option is $\hat{H}_n := \mathbb{P}_n [\nabla^2 \ell(X; \hat{\theta}_n)]$, the sample Hessian on the fully observed samples. 

\end{definition}

\begin{theorem}[IPI central limit theorem under MCAR]\label{thm:ipi_clt_mcar} Given MCAR and regularity assumptions in Appendix \ref{app:IPI}, if $\hat{\lambda} = \lambda + o_p(1)$ for some fixed $\lambda$,
\begin{align*}
\sqrt{n}\left(\hat{\theta}_{\text{\em IPI}, \hat{\lambda}} - \theta_\star\right) \overset{d}{\rightarrow} \mathcal{N}\left(0, \Sigma\right),
\end{align*}
where
\begin{equation}\label{eq:IPI_asymp_var}
    \Sigma = H_{\theta_\star}^{-1}V_{\theta_\star, \lambda} H_{\theta_\star}^{-1},
\end{equation}
in which $ H_{\theta_\star} = \mathbb{E}\left[\nabla^2 \ell(X; \theta_*)\right]$ is the population Hessian and
\begin{align}\label{eq:ipi_loss_asymp_var}
    V_{\theta_\star, \lambda}
    &:= \operatorname{Var} \left(\nabla \ell(X; {\theta_\star}) - \frac{1}{R}\sum_{r=1}^R \lambda_r \nabla \ell(f(O\circ m_r); {\theta_\star})\Bigg|M = \bm{1}_d\right)\\
    &\hspace{5mm} + \sum_{r=1}^R \left(\frac{\lambda_r}{R}\right)^2 \frac{p_0}{p_r} \Var\left(\nabla \ell(f(O \circ m_r); {\theta_\star})\mid M = m_r\right). \nonumber
\end{align}
\end{theorem}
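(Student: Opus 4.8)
The plan is to treat $\hat{\theta}_{\text{IPI},\hat\lambda}$ as a one-step Newton correction of the $\sqrt{n}$-consistent complete-case estimator and expand everything around $\theta_\star$. First I would write $\hat\theta_{\text{IPI},\hat\lambda} - \theta_\star = (\hat\theta_n - \theta_\star) - \hat H_n^{-1}\nabla L_{\text{IPI}}(\hat\theta_n;\hat\lambda)$ and Taylor-expand $\nabla L_{\text{IPI}}(\hat\theta_n;\hat\lambda)$ about $\theta_\star$: $\nabla L_{\text{IPI}}(\hat\theta_n;\hat\lambda) = \nabla L_{\text{IPI}}(\theta_\star;\hat\lambda) + \nabla^2 L_{\text{IPI}}(\bar\theta;\hat\lambda)(\hat\theta_n - \theta_\star)$ for some intermediate $\bar\theta$. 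Using that $\hat H_n$ is consistent for $H_{\theta_\star} = \mathbb{E}[\nabla^2\ell(X;\theta_\star)]$ and that (under MCAR) $\mathbb{E}[\nabla^2 L_{\text{IPI}}(\theta_\star;\lambda)] = H_{\theta_\star}$ — because each debiasing term has expectation zero and the $\tfrac1R\sum_r$ averages the $R$ copies of $\mathbb{E}[\nabla^2\ell(X;\theta_\star)]$ back to $H_{\theta_\star}$ — the Hessian-difference terms telescope, leaving $\hat\theta_{\text{IPI},\hat\lambda} - \theta_\star = -H_{\theta_\star}^{-1}\nabla L_{\text{IPI}}(\theta_\star;\lambda) + o_p(n^{-1/2})$, where I have also used $\hat\lambda = \lambda + o_p(1)$ together with continuity of $\lambda\mapsto\nabla L_{\text{IPI}}(\theta_\star;\lambda)$ and a uniform-in-$\lambda$ control to replace $\hat\lambda$ by $\lambda$. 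This is the standard one-step/Z-estimator argument and I would cite \cite{Vaart_1998} for the regularity bookkeeping (stochastic equicontinuity of the empirical process, local boundedness of third derivatives or a Lipschitz Hessian).

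Next I would compute the asymptotic distribution of $\sqrt{n}\,\nabla L_{\text{IPI}}(\theta_\star;\lambda)$. Writing $g_r(O;\theta_\star) := \nabla\ell(f(O\circ m_r);\theta_\star)$ and $h_r(O;\theta_\star):=\nabla\ell(f(O);\theta_\star)$, we have
\begin{equation*}
\nabla L_{\text{IPI}}(\theta_\star;\lambda) = \frac1R\sum_{r=1}^R\Big(\lambda_r\,\mathbb{P}_{\tilde N_r}\,h_r + \mathbb{P}_n\big[\nabla\ell(X;\theta_\star) - \lambda_r\, g_r\big]\Big).
\end{equation*}
Under MCAR the complete-case block $\{i:M_i=\bm 1_d\}$ and the pattern blocks $\{i:M_i=m_r\}$ are disjoint subsamples whose contributions are, conditionally on the counts $n,\tilde N_1,\dots,\tilde N_R$, mutually independent i.i.d. averages; moreover $\mathbb{E}[h_r\mid M=m_r] = \mathbb{E}[g_r\mid M=\bm 1_d]$ by MCAR (the imputer sees the same marginal law of $O\circ m_r$ in both blocks), so the $\lambda_r$-weighted terms are centered. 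Applying the CLT to each block and summing variances (independence kills cross-covariances), the complete-case block contributes $\tfrac1n\operatorname{Var}\!\big(\nabla\ell(X;\theta_\star) - \tfrac1R\sum_r\lambda_r g_r \mid M=\bm 1_d\big)$ and the $r$-th pattern block contributes $\big(\tfrac{\lambda_r}{R}\big)^2\tfrac1{\tilde N_r}\operatorname{Var}(h_r\mid M=m_r)$. Rescaling by $\sqrt n$ and using $\tilde N_r/n \to p_r/p_0$ in probability (law of large numbers on the multinomial counts, conditioning via Slutsky) turns $n/\tilde N_r$ into $p_0/p_r$, which yields exactly $V_{\theta_\star,\lambda}$ in Eq.~\eqref{eq:ipi_loss_asymp_var}. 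Combining with the previous paragraph and Slutsky gives $\sqrt n(\hat\theta_{\text{IPI},\hat\lambda}-\theta_\star)\overset{d}{\to}\mathcal N(0,H_{\theta_\star}^{-1}V_{\theta_\star,\lambda}H_{\theta_\star}^{-1})$.

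I expect the main obstacle to be the careful handling of the random, data-dependent sample sizes $n$ and $\tilde N_r$ and the fact that $\hat\lambda$ is itself estimated: one must argue that conditioning on the multinomial pattern counts is legitimate, that $\tilde N_r/n$ concentrates at $p_r/p_0$, and that the joint CLT across blocks goes through after this conditioning (e.g.\ via a conditional Lindeberg argument or by appealing to a triangular-array CLT), then lift the conditional statement to an unconditional one by Slutsky. The $\hat\lambda=\lambda+o_p(1)$ piece requires a stochastic-equicontinuity bound showing $\sqrt n\big(\nabla L_{\text{IPI}}(\theta_\star;\hat\lambda)-\nabla L_{\text{IPI}}(\theta_\star;\lambda)\big)=o_p(1)$, which follows because the map is linear in $\lambda$ with $\sqrt n$-bounded coefficients (each $\mathbb{P}_n g_r$ and $\mathbb{P}_{\tilde N_r}h_r$ is $O_p(1)$ after centering, and the centering constants cancel in the difference). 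The remaining ingredients — existence and invertibility of $H_{\theta_\star}$, consistency of $\hat H_n$, and interchange of expectation and differentiation — are the standard M-estimation regularity conditions deferred to Appendix~\ref{app:IPI}, and I would simply invoke them.
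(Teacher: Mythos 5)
Your proposal is correct and mirrors the paper's own argument: the same one-step decomposition with the $(\hat\theta_n-\theta_\star)$ term cancelling against the Hessian-expansion term (justified by the locally Lipschitz Hessian and MCAR making $\nabla^2 L_{\mathrm{IPI}}(\theta_\star;\lambda)\to H_{\theta_\star}$), the same treatment of $\hat\lambda$ via linearity in $\lambda$ and the $O_p(n^{-1/2})$ centered pattern-wise differences, and the same conditional (on the pattern counts) block-independent CLT with $\tilde N_r/n\to p_r/p_0$ and Slutsky to obtain $V_{\theta_\star,\lambda}$. The only cosmetic difference is that you phrase the expansion via an intermediate point $\bar\theta$ where the paper uses an explicit Lipschitz remainder bound, which is equivalent bookkeeping.
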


\begin{corollary}[IPI valid inference under MCAR]
\label{cor:ipi_clt_mcar}
    In the setting of Theorem \ref{thm:ipi_clt_mcar}, the confidence set 
    \[
        \mathcal{C}_{\IPI, \alpha}  = \left\{\theta: \left\|\Sigma^{-1/2} \left(\hat{\theta}_{\IPI, \hat{\lambda}} - \theta\right)\right\|_2^2 \leq \frac{\chi^2_{p, 1-\alpha}}{n}\right\}
    \]
    has $\lim_{n \rightarrow \infty} \mathbb{P}(\theta_\star \in \mathcal{C}_{\IPI, \alpha}) = 1-\alpha$ and is thus asymptotically valid. 
\end{corollary}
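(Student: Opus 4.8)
The plan is to reduce the claim to a direct consequence of Theorem~\ref{thm:ipi_clt_mcar} together with the continuous mapping theorem and the continuity of the $\chi^2_p$ distribution function. First I would rewrite the coverage event in a convenient form: by definition of $\mathcal{C}_{\IPI,\alpha}$, the event $\{\theta_\star \in \mathcal{C}_{\IPI,\alpha}\}$ is exactly $\{\, n\,\|\Sigma^{-1/2}(\hat{\theta}_{\IPI,\hat{\lambda}}-\theta_\star)\|_2^2 \le \chi^2_{p,1-\alpha}\,\}$, i.e.\ it is the sublevel set of the scalar statistic $T_n := \|\sqrt{n}\,\Sigma^{-1/2}(\hat{\theta}_{\IPI,\hat{\lambda}}-\theta_\star)\|_2^2$ at threshold $\chi^2_{p,1-\alpha}$.

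Next I would establish the limiting distribution of $T_n$. Under the regularity assumptions of Appendix~\ref{app:IPI}, $H_{\theta_\star}$ is invertible and $V_{\theta_\star,\lambda}$ is positive definite, so $\Sigma$ in Eq.~\eqref{eq:IPI_asymp_var} is positive definite and $\Sigma^{-1/2}$ is well defined. Applying the linear map $z \mapsto \Sigma^{-1/2} z$ to the conclusion of Theorem~\ref{thm:ipi_clt_mcar} gives $\sqrt{n}\,\Sigma^{-1/2}(\hat{\theta}_{\IPI,\hat{\lambda}}-\theta_\star) \overset{d}{\to} \mathcal{N}(0, I_p)$. Since $T_n$ is the squared Euclidean norm of this vector, a continuous function, the continuous mapping theorem yields $T_n \overset{d}{\to} \|Z\|_2^2 \sim \chi^2_p$ for $Z \sim \mathcal{N}(0,I_p)$.

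Finally I would pass to the limit in the coverage probability: $\mathbb{P}(\theta_\star \in \mathcal{C}_{\IPI,\alpha}) = \mathbb{P}(T_n \le \chi^2_{p,1-\alpha})$, and since $T_n \overset{d}{\to} \chi^2_p$ and the $\chi^2_p$ cumulative distribution function $F_p$ is continuous everywhere (in particular at the point $\chi^2_{p,1-\alpha}$, which by definition of the quantile satisfies $F_p(\chi^2_{p,1-\alpha}) = 1-\alpha$), weak convergence gives $\mathbb{P}(T_n \le \chi^2_{p,1-\alpha}) \to F_p(\chi^2_{p,1-\alpha}) = 1-\alpha$, which is the claimed asymptotic validity.

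The argument is essentially mechanical once Theorem~\ref{thm:ipi_clt_mcar} is available; the only points needing attention are (i) checking that $\Sigma$ is nonsingular so the quadratic form and $\Sigma^{-1/2}$ are well posed, which I expect to follow from the stated regularity conditions (invertibility of $H_{\theta_\star}$ and positive-definiteness of $V_{\theta_\star,\lambda}$), and (ii) if one wants a fully data-driven interval that replaces the oracle $\Sigma$ by a plug-in $\hat{\Sigma}$, one additionally needs $\hat{\Sigma} \overset{p}{\to} \Sigma$ and an application of Slutsky's theorem; a natural estimator uses $\hat{H}_n$ for $H_{\theta_\star}$ and the empirical analogue of $V_{\theta_\star,\lambda}$ formed from the complete cases and the per-pattern partially observed samples. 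That robustification is the main (and still modest) obstacle beyond the bare statement.
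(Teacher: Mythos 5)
Your proposal is correct and follows essentially the same route as the paper's own (very brief) proof: the paper likewise notes that $\{\theta_\star \in \mathcal{C}_{\IPI,\alpha}\}$ is the event $n(\hat{\theta}_{\IPI,\hat{\lambda}}-\theta_\star)^\top \Sigma^{-1}(\hat{\theta}_{\IPI,\hat{\lambda}}-\theta_\star) \le \chi^2_{p,1-\alpha}$ and concludes directly from the CLT of Theorem~\ref{thm:ipi_clt_mcar} and convergence in distribution of the quadratic form to $\chi^2_p$. Your added remarks on nonsingularity of $\Sigma$ and on the plug-in version with $\hat{\Sigma}$ and Slutsky are sensible elaborations but not a different argument.
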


Algorithm \ref{alg:ipi} provides steps to constructing a confidence interval for scalar $\theta_\star$. The reader will find proofs of both Theorem \ref{thm:ipi_clt_mcar} and Corollary \ref{cor:ipi_clt_mcar}
in Appendix \ref{app:IPI}.

\begin{algorithm}[t]
\caption{IPI confidence interval (CI) for scalar $\theta_\star$ under MCAR}
\label{alg:ipi}
\begin{algorithmic}[1]
\Require
    Observations $\{O_i=((X_i,U_i)\circ M_i)\}_{i=1}^{N}$;
    Imputation model $f$;
    Error level $\alpha\!\in(0,1)$
\Ensure $(1-\alpha)$ two–sided CI for the target $\theta_\star$\; 

\smallskip
\State \textbf{Initialize at complete-case estimate:}
       $\displaystyle
       \hat\theta_n \gets \argmin\nolimits_{\theta}\;
       \mathbb P_n\!\bigl[\ell(X;\theta)\bigr]$
\State \textbf{Hessian estimate:}
       $\displaystyle
       \hat H_n \gets
       \mathbb P_n\!\bigl[\nabla^2\ell(X;\hat\theta_n)\bigr]$
\State \textbf{Variance estimate:} Calculate $\hat{\sigma}^2$ from Eqs. \eqref{eq:IPI_asymp_var} and \eqref{eq:ipi_loss_asymp_var} with sample covariance, $\hat{H}_n$, and $\hat{\theta}_n$ as plug-in estimates

\State \textbf{Weights:} minimize the plug-in asymptotic variance over 
       $\hat{\lambda}=(\hat\lambda_1,\dots,\hat\lambda_R)$ 
\State $\displaystyle
       \nabla L_\IPI \gets
       \mathbb P_n\!\bigl[\nabla\ell(X;\hat\theta_n)\bigr]$
\For{$r \gets 1$ \textbf{to} $R$} 
    \State $\displaystyle
        \nabla L_\IPI \gets \nabla L_{\IPI}
        + \hat\lambda_r
          \Bigl\{
            \mathbb P_{\tilde N_r}\!\bigl[\nabla\ell(f(O);\hat\theta_n)\bigr]
            - \mathbb P_n\!\bigl[\nabla\ell(f(O\circ m_r); \hat\theta_n)\bigr]
          \Bigr\}$
\EndFor

\State \textbf{Calculate one-step estimator:} $\hat{\theta}_\IPI \gets \hat{\theta}_n - \hat{H}_n^{-1} \nabla L_\IPI$

\State \textbf{Critical value:} $z_{1-\alpha/2}$ (standard normal)
\State \Return
       $\displaystyle
       \bigl[
         \hat{\theta}_\IPI
         \;\pm\;
         z_{1-\alpha/2}\,
        \widehat{\sigma}/ \sqrt{n}
       \bigr]$
\end{algorithmic}
\end{algorithm}

\subsection{CIPI under MCAR}
IPI assumes access to a pre-trained imputation model $f$, but in practice the imputer is typically learned from the data itself. To avoid the sample inefficiency of a single train-test split---without imposing stronger assumptions (e.g., Donsker) on the imputer---we use a common technique known as \textit{cross-fitting} (also leveraged in CPPI \cite{zrnic_cppi}): partition the data into $K$ folds, train an imputer $f^{(k)}$ on all but the $k$th fold, and apply it only to that fold. This way, all observations contribute to both training and inference, yielding more stable and efficient estimates.

Unlike CPPI, however, we assume the imputation model can be trained on partially observed data. As a result, we split the \emph{entire} dataset---including partially observed samples---into $K$ folds. This increases the training sample size from $n(K-1)/K$ to $N(K-1)/K$, which can yield substantial gains when partially observed data far outnumber fully observed data ($\tilde{N} \gg n$).

\begin{definition}[CIPI loss]
Let $I_k$ denote the $k$th fold of the data and let $L_{\IPI}^{(k)}(\theta; \lambda)$ denote $L_{\IPI}(\theta; \lambda)$ calculated on the $k$th fold with imputation algorithm $f^{(k)}$. We define the CIPI loss as:
\begin{align*}
    L_{\CIPI}(\theta; \lambda) := \frac{1}{K}\sum\nolimits_{k =1}^K  L_{\IPI}^{(k)}(\theta; \lambda).
\end{align*}    
\end{definition}

\begin{definition}[CIPI point estimate] For any given $\lambda\in \mathbb{R}^R$, the corresponding one-step point estimate then simply replaces the IPI loss with the CIPI loss:
\begin{align*}
    \hat{\theta}_{\CIPI, \lambda} := \hat{\theta}_n - \hat{H}_n^{-1} \nabla L_{\CIPI}(\hat{\theta}_n; \lambda).
\end{align*}
\end{definition}

Constructing valid confidence intervals with this estimate requires an extra stability assumption – much like that in  CPPI \cite{zrnic_cppi}. In particular, for each realized observation $o \in (\mathbb{R} \cup {\texttt{NA}})^d$, let $\bar{f}(o) := \mathbb{E}_{\text{train}}[f^{(1)}(o)]$ denote the \textit{average imputer} over training sets of size $N(K-1)/K$. The stability requirement (Assumption \ref{assump:stability assumptions}) requires the gradient loss of any realized $f^{(1)}$ to converge to that of $\bar{f}$ across missing patterns. Note that this does not require $f$ to be well-specified—only stable. 
\begin{assumption}[Stability requirements]\label{assump:stability assumptions}
    For each $r \in [R]$, $\tilde{r} \in \{0, r\}$, and for each $\theta$,
    \begin{align*}
        \sqrt{K\Var \left(\nabla \ell(f^{(1)}(O \circ m_r); \theta) - \nabla {\ell}(\bar{f}(O\circ m_r); \theta)|M = m_{\tilde{r}}, f^{(1)}\right)} \overset{L^1}{\rightarrow} 0.
    \end{align*}
\end{assumption}

With Assumption \ref{assump:stability assumptions}, we get asymptotic normality as in Theorem \ref{thm:ipi_clt_mcar} –  please see Theorem \ref{thm:cipi_clt_mcar} in Appendix \ref{app:CIPI}. The following asymptotic guarantees also hold:

\begin{corollary}[CIPI valid inference under MCAR]\label{cor:cipi_ci}
    The confidence set 
    \[
        \mathcal{C}_{\CIPI, \alpha}  = \left\{\theta: \left\|\bar{\Sigma}^{-1/2} \left(\hat{\theta}_{\CIPI, \hat{\lambda}} - \theta\right)\right\|_2^2 \leq \frac{\chi^2_{p, 1-\alpha}}{n}\right\}
    \]
    has $\lim_{n \rightarrow \infty} \mathbb{P}(\theta_\star \in \mathcal{C}_{\CIPI, \alpha}) = 1-\alpha$ and is thus asymptotically valid.
\end{corollary}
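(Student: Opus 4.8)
The plan is to deduce the corollary from the cross-fitted central limit theorem in exactly the way Corollary~\ref{cor:ipi_clt_mcar} is deduced from Theorem~\ref{thm:ipi_clt_mcar}, so that the only genuinely new work sits upstream in Theorem~\ref{thm:cipi_clt_mcar} (established in Appendix~\ref{app:CIPI}) and the corollary itself is a short pivoting argument. First I would invoke Theorem~\ref{thm:cipi_clt_mcar}: under MCAR, the regularity conditions of Appendix~\ref{app:CIPI}, the stability requirement of Assumption~\ref{assump:stability assumptions}, and $\hat\lambda = \lambda + o_p(1)$, one has $\sqrt{n}(\hat\theta_{\CIPI,\hat\lambda} - \theta_\star) \overset{d}{\to} \mathcal{N}(0,\bar\Sigma)$ with $\bar\Sigma = H_{\theta_\star}^{-1}\bar V_{\theta_\star,\lambda}H_{\theta_\star}^{-1}$ invertible, where $\bar V_{\theta_\star,\lambda}$ has the same structure as $V_{\theta_\star,\lambda}$ in~\eqref{eq:ipi_loss_asymp_var} but with each realized cross-fitted imputer replaced by the average imputer $\bar f$.

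Next I would pivotize. Since $\bar\Sigma$ is positive definite, applying the continuous mapping theorem to the map $x \mapsto \|\bar\Sigma^{-1/2}x\|_2^2$ gives $n\,\|\bar\Sigma^{-1/2}(\hat\theta_{\CIPI,\hat\lambda} - \theta_\star)\|_2^2 \overset{d}{\to} \chi^2_p$. If one uses, as in Algorithm~\ref{alg:ipi}, a plug-in $\hat{\bar\Sigma}$ in place of $\bar\Sigma$, an extra Slutsky step is required and reduces to showing $\hat{\bar\Sigma}\overset{p}{\to}\bar\Sigma$; here I would again lean on Assumption~\ref{assump:stability assumptions}, together with the fold-wise laws of large numbers and the consistency of $\hat\theta_n$ and $\hat H_n$, since stability guarantees that the empirical second moments of the realized cross-fitted gradient losses $\nabla\ell(f^{(k)}(O\circ m_r);\cdot)$ agree in the limit with those built from $\bar f$. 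Finally I would read off coverage: the event $\theta_\star \in \mathcal{C}_{\CIPI,\alpha}$ is literally $\{\, n\,\|\bar\Sigma^{-1/2}(\hat\theta_{\CIPI,\hat\lambda} - \theta_\star)\|_2^2 \le \chi^2_{p,1-\alpha}\,\}$, and because $\chi^2_{p,1-\alpha}$ is a continuity point of the $\chi^2_p$ distribution function, the portmanteau theorem yields $\mathbb{P}(\theta_\star \in \mathcal{C}_{\CIPI,\alpha}) \to \mathbb{P}(\chi^2_p \le \chi^2_{p,1-\alpha}) = 1-\alpha$.

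I do not expect the corollary itself to present any obstacle: conditional on the CIPI CLT and the variance-consistency statement, it is three lines of continuous mapping and Slutsky. The real difficulty lives in Theorem~\ref{thm:cipi_clt_mcar}, because cross-fitting makes $f^{(1)},\dots,f^{(K)}$ dependent — they are trained on overlapping subsamples — so one cannot treat $\nabla L_{\IPI}^{(k)}$ as an i.i.d.\ average of a fixed transform the way one can for IPI with a pre-trained $f$. That argument must condition on the realized imputers, expand the one-step update fold by fold, and then use Assumption~\ref{assump:stability assumptions} to show that substituting $\bar f$ for each $f^{(k)}$ perturbs the aggregated score only by $o_p(n^{-1/2})$, a stability/decoupling step in the spirit of CPPI~\cite{zrnic_cppi}; once this is in place, the present corollary follows immediately.
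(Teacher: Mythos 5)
Your proposal is correct and matches the paper's route: the paper likewise deduces the corollary immediately from the CIPI CLT (Theorem~\ref{thm:cipi_clt_mcar}), with coverage following from convergence in distribution of the pivotal quadratic form $n\,(\hat{\theta}_{\CIPI,\hat{\lambda}}-\theta_\star)^\top\bar{\Sigma}^{-1}(\hat{\theta}_{\CIPI,\hat{\lambda}}-\theta_\star)$ to $\chi^2_p$, exactly as in Corollary~\ref{cor:ipi_clt_mcar}. Your added Slutsky step for a plug-in variance is not needed for the corollary as stated (it uses the true $\bar{\Sigma}$; the paper handles variance estimation separately via the bootstrap in Appendix~\ref{app:boot_var_est}), and your assessment that the substantive work lives in Theorem~\ref{thm:cipi_clt_mcar} via the stability/decoupling argument is accurate.
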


\textbf{Bootstrap variance estimation.} Masked in the above statement is the dependence of the asymptotic variance $\bar{\Sigma}$ on the true average imputer $\bar{f}$. Since $\bar{f}$ is not available in practice, we follow the bootstrap-based approach introduced in CPPI \cite{zrnic_cppi} to estimate the variance; further details are provided in Appendix~\ref{app:boot_var_est}. This procedure requires retraining the imputation algorithm at least 50--100 times, which can be computationally expensive. Developing more efficient methods for estimating this variance remains an interesting direction for future research aimed at improving the computational scalability of our framework.

\subsection{Power tuning}\label{sec:power-tuning}
Following \citet{angelopoulos2024ppi++}, we refer to $\lambda \in  \mathbb{R}^R$ 
as a power-tuning parameter, with the pooled estimate $\hat{\theta}_{\text{IPI pooled}}$ using $\lambda_r$ proportional to the missing pattern frequency. We note that the pooled estimate's performance may be poor when the most common patterns are the ones on which $f$ imputes poorly. Here, we leverage similar ideas as discussed in the power-tuning literature \cite{angelopoulos2024ppi++, ji2025predictionssurrogatesrevisitingsurrogate_reppi, kluger2025predictionpoweredinferenceimputedcovariates, angelopoulos2024ppi++, miao2024assumptionleandataadaptivepostpredictioninference_pspa, gan2024prediction, gronsbell2024another} to choose $\lambda$ to ensure sample efficiency never falls below the complete-case baseline under MCAR, even in the case where the ML model is poorly aligned with the missing data imputation problem and only adds noise.

In particular, we seek $\lambda_\star$ minimizing $\Sigma$, the asymptotic variance of the IPI estimate $\theta_{\IPI}$:
\begin{equation}
    \lambda_\star := \argmin\nolimits_{\lambda} \text{Tr}({\Sigma}),
\end{equation}
or if we are interested in just one coordinate, say  $\theta_{\star, j}$, we can use $ \lambda_\star := \argmin\nolimits_{\lambda} {\Sigma}_{jj}$.

Because $\Sigma$ is unknown in practice, we approximate it using the sample variances and covariances evaluated at the complete-case estimator $\hat{\theta}_n$. We denote the minimizer of the resulting objective by $\hat{\lambda}_\star$.

We provide closed forms for both $\lambda_\star$ and $\hat{\lambda}_\star$ in Appendix \ref{app:Tuning}.

Note that this choice of $\lambda$ guarantees improved sample efficiency and therefore tighter confidence intervals compared to other valid approaches under MCAR. These alternatives are the complete-case estimator $\hat{\theta}_n$ (discarding samples with missing data), and PPI/IPI applied to a single missing pattern. Results for the tuned versus pooled estimators for both IPI and CIPI under MCAR settings are provided in Section \ref{expt:mcar}.

\section{IPI and CIPI under more general missing mechanisms}\label{sec:moving_away_mcar}
The MCAR assumption is often too strong to hold in practice. While most of the PPI literature implicitly adopts this assumption—aside from a few exceptions that assume full knowledge of the missingness mechanism \cite{ppi, kluger2025predictionpoweredinferenceimputedcovariates}—we examine how our method performs under more general forms of missingness. This contrasts with prior PPI-related and semiparametric approaches, which typically adapt their methods to AIPW-style estimators with either known or correctly specified propensity scores \cite{kluger2025predictionpoweredinferenceimputedcovariates, tsiatis2006semiparametric}. Rather than modifying our method to accommodate these assumptions, we take a different viewpoint: we investigate what remains identifiable under non-MCAR missingness settings without relying on untestable assumptions or potentially misspecified propensity scores.

A key ingredient for the validity of IPI is the ability to debias each missingness pattern using the subset of fully observed data. In general missing data settings, valid debiasing remains possible if, for each missingness pattern, the average imputation bias can be approximated by the average behavior of the imputer applied to masked complete data. This observation motivates a relaxation of MCAR that we term the first-moment MCAR assumption.

\begin{assumption}[MCAR first moment assumption for imputation algorithm $f$]\label{assump:MCAR_first_moment} For each missingness pattern $r \in [R]$,
\begin{equation}
    \mathbb{E}[\nabla \ell(f(O); \tilde{\theta}_\star) \mid M = m_r]
         = \mathbb{E}[\nabla \ell(f(O\circ m_r); \tilde{\theta}_\star) \mid M = \bm{1}_d].
\end{equation}
\end{assumption}
Note that this is a strict relaxation of the MCAR assumption, which requires the \textit{distributions} for our underlying distribution $D$ to be independent of observed missing pattern. Instead, we just require that the imputed score functions are equal on average.

Provided Assumption \ref{assump:MCAR_first_moment} holds, under general missing mechanisms, IPI can still provide valid inference for the subpopulation with $D$ fully observed, shifting the estimand to
\begin{equation}\label{eq:shifted_target}
    \tilde{\theta}\star := \argmin_\theta \mathbb{E}\left[\ell(X; \theta) \mid M = \bm{1}_d\right].
\end{equation}
This shift is natural in applications where the focus lies on high-risk or clinically flagged groups—such as patients selected for additional testing—rather than the entire population. Researchers in these scenarios often combine focused studies with broader historical datasets, which introduces complex, nonrandom missingness that IPI is designed to handle.

Asymptotic results similar to Theorems \ref{thm:ipi_clt_mcar} and \ref{thm:cipi_clt_mcar} hold when replacing MCAR and $\theta_\star$ with Assumption \ref{assump:MCAR_first_moment} (for certain choices of $f$) and $\tilde{\theta}_\star$, respectively (see Appendix \ref{app:away-mcar}, Theorems \ref{thm:general_ipi} and \ref{thm:general-cipi} for full statements and proofs).

\subsection{Diagnostics}
Since Assumption \ref{assump:MCAR_first_moment} is simply a requirement on the equivalence of estimable moments, it is asymptotically testable (e.g., for the semi-supervised setting, this is a single $p$-dimensional z-test). A direct diagnostic for blockwise missingness involves the test statistic
\begin{align*}
    \hat{T}_\mathrm{full} := \begin{bmatrix}
        \mathbb{P}_{\tilde{N}_1}[\nabla \ell (f(O); \hat{\theta}_n)] - \mathbb{P}_{n}[\nabla \ell (f(O \circ m_1); \hat{\theta}_n)]\\
        \vdots \\
         \mathbb{P}_{\tilde{N}_R}[\nabla \ell (f(O); \hat{\theta}_n)] - \mathbb{P}_{n}[\nabla \ell (f(O \circ m_R); \hat{\theta}_n)]
    \end{bmatrix} \in \mathbb{R}^{pR},
\end{align*}
but its dimensionality scales with both $p$ and $R$. This motivates a simpler, lower-dimensional diagnostic:
\begin{align*}
    \hat{T}_\IPI := \frac{1}{R}\sum_{r =1}^R \hat{\lambda}_{\star, r} (\mathbb{P}_{\tilde{N}_r}[\nabla \ell (f(O); \hat{\theta}_n)] - \mathbb{P}_{n}[\nabla \ell (f(O \circ m_r); \hat{\theta}_n)]) \in \mathbb{R}^{p}.
\end{align*}

Proposition \ref{prop:asymp_behav_ipi_test} ensures that testing for large values of $\|\hat{V}_T^{-1/2} \hat{T}_{\IPI}\|_2^2$ yields an asymptotically valid level-$\alpha$ test, where $\hat{V}_T^{-1/2}$ is consistent for $V_T$, and an asymptotically uniformly-distributed p-value under Assumption \ref{assump:MCAR_first_moment}.

\begin{proposition}[Asymptotic behavior of test statistic $\hat{T}_\IPI$]
\label{prop:asymp_behav_ipi_test}
Under Assumption \ref{assump:MCAR_first_moment} and regularity assumptions in Appendix \ref{app:IPI}, we have
\begin{align*}
    \sqrt{n} \hat{T}_\IPI \overset{d}{\rightarrow} \mathcal{N} \left(0, V_T \right)
\end{align*}
where 
\begin{multline*}
    V_T := \Var \left(\frac{1}{R}\sum_{r = 1}^R \lambda_r \left((\tilde{H}_{\tilde{\theta}_\star}^r - {H}_{\tilde{\theta}_\star}^r) {H}_{\tilde{\theta}_\star}^{-1}\nabla \ell(X; \tilde{\theta}_\star) +  \nabla \ell(f(O\circ m_r); \theta_\star)\right) \bigg| M = \bm{1}_d\right) \\
     + \sum_{r =1}^R \frac{p_0}{p_r}\left(\frac{\lambda_r}{R}\right)^2 \Var\left(\nabla \ell(f(O\circ m_r); \theta_\star) | M = m_r\right)
\end{multline*}
with 
\begin{align*}
    H_{\tilde{\theta}_\star} &:= \mathbb{E}\left[\nabla^2 \ell(X; \tilde{\theta}_\star)\big| M = \bm{1}_d\right],\\
    H_{\tilde{\theta}_\star}^r &:= \mathbb{E}\left[\nabla^2 \ell(f(O \circ m_r); \tilde{\theta}_\star)\big| M = \bm{1}_d\right], \text{ and }\\
    \tilde{H}_{\tilde{\theta}_\star}^r &:= \mathbb{E}\left[\nabla^2 \ell(f(O); \tilde{\theta}_\star)\big| M = m_r\right].
\end{align*}
\end{proposition}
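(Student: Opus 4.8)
The plan is to perform the standard one-step / Taylor-expansion analysis of $\hat{T}_{\IPI}$ around $\hat{\theta}_n$, express it as an asymptotically linear statistic, and then apply a multivariate CLT to the resulting influence function. Recall $\hat{T}_{\IPI}$ is built from the per-pattern debiasing differences $\mathbb{P}_{\tilde{N}_r}[\nabla\ell(f(O);\hat\theta_n)] - \mathbb{P}_n[\nabla\ell(f(O\circ m_r);\hat\theta_n)]$, and under Assumption~\ref{assump:MCAR_first_moment} each such difference has population mean zero when evaluated at the true $\tilde\theta_\star$. So the first step is to write $\hat{T}_{\IPI} = \hat{T}_{\IPI}(\hat\theta_n)$ and Taylor-expand in $\theta$ about $\tilde\theta_\star$: $\hat{T}_{\IPI}(\hat\theta_n) = \hat{T}_{\IPI}(\tilde\theta_\star) + \big[\tfrac{1}{R}\sum_r \hat\lambda_{\star,r}(\tilde{H}^r_{\tilde\theta_\star} - H^r_{\tilde\theta_\star}) + o_p(1)\big](\hat\theta_n - \tilde\theta_\star) + o_p(n^{-1/2})$, using that the empirical Hessians of $\nabla\ell(f(O);\cdot)$ on the $\tilde N_r$-samples and of $\nabla\ell(f(O\circ m_r);\cdot)$ on the $n$-samples converge to $\tilde H^r_{\tilde\theta_\star}$ and $H^r_{\tilde\theta_\star}$ respectively (regularity assumptions from Appendix~\ref{app:IPI}), and that $\hat\lambda_\star = \lambda + o_p(1)$ and $\hat\theta_n - \tilde\theta_\star = O_p(n^{-1/2})$.

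Next I would substitute the well-known asymptotically linear expansion of the complete-case estimator, $\hat\theta_n - \tilde\theta_\star = -H_{\tilde\theta_\star}^{-1}\,\mathbb{P}_n[\nabla\ell(X;\tilde\theta_\star)] + o_p(n^{-1/2})$, so that the drift term contributes $-\tfrac{1}{R}\sum_r\lambda_r(\tilde H^r_{\tilde\theta_\star} - H^r_{\tilde\theta_\star})H_{\tilde\theta_\star}^{-1}\,\mathbb{P}_n[\nabla\ell(X;\tilde\theta_\star)]$. Combining this with the leading term $\hat{T}_{\IPI}(\tilde\theta_\star) = \tfrac{1}{R}\sum_r\hat\lambda_{\star,r}\big(\mathbb{P}_{\tilde N_r}[\nabla\ell(f(O);\tilde\theta_\star)] - \mathbb{P}_n[\nabla\ell(f(O\circ m_r);\tilde\theta_\star)]\big)$ and regrouping the $\mathbb{P}_n$-pieces (those over the fully observed samples) with the Hessian drift, one sees that $\sqrt n\,\hat{T}_{\IPI}$ is, up to $o_p(1)$, a sum of two independent pieces: (i) an average over the $n$ fully observed rows of the influence function $\tfrac{1}{R}\sum_r\lambda_r\big((\tilde H^r_{\tilde\theta_\star}-H^r_{\tilde\theta_\star})H_{\tilde\theta_\star}^{-1}\nabla\ell(X;\tilde\theta_\star) + \nabla\ell(f(O\circ m_r);\tilde\theta_\star)\big)$ conditional on $M=\bm 1_d$, and (ii) for each $r$, a rescaled average over the $\tilde N_r$ rows of pattern $m_r$ of $\nabla\ell(f(O);\tilde\theta_\star)$, which after centering (using Assumption~\ref{assump:MCAR_first_moment}) contributes variance $\frac{p_0}{p_r}\big(\frac{\lambda_r}{R}\big)^2\Var(\nabla\ell(f(O\circ m_r);\tilde\theta_\star)\mid M=m_r)$ once one accounts for $\tilde N_r/n \to p_r/p_0$ and the independence across patterns. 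Applying Lindeberg–Feller (or the i.i.d.\ multivariate CLT within each pattern block) and adding the independent variance contributions yields $\sqrt n\,\hat T_{\IPI}\xrightarrow{d}\mathcal N(0,V_T)$ with $V_T$ as stated; the claimed validity of the test and the uniformity of the p-value then follow from Slutsky together with consistency of $\hat V_T$ for $V_T$.

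The main obstacle I anticipate is bookkeeping the cross-pattern dependence and the sample-size ratios correctly: the complete-case term $\mathbb{P}_n[\nabla\ell(X;\tilde\theta_\star)]$ appears simultaneously in $\hat\theta_n$'s expansion (hence in every summand's Hessian drift) and in the debiasing subtraction $\mathbb{P}_n[\nabla\ell(f(O\circ m_r);\cdot)]$, so these must be collected into a single influence function averaged over the $n$ fully observed rows rather than double-counted; and the pattern-$r$ averages $\mathbb{P}_{\tilde N_r}[\cdot]$ are independent of the fully observed block and of each other, so their variances add with the correct $p_0/p_r$ inflation coming from $n/\tilde N_r \to p_0/p_r$. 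A secondary technical point is justifying the uniform convergence of the empirical Hessians of the \emph{imputed} losses $\nabla^2\ell(f(O\circ m_r);\theta)$ near $\tilde\theta_\star$, but this is exactly the content of the regularity conditions invoked for Theorem~\ref{thm:ipi_clt_mcar}, so it can be cited rather than re-proved.
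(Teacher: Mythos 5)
Your proposal is correct and follows essentially the same route as the paper's proof: expand $\hat{T}_\IPI$ around $\tilde\theta_\star$, use local Lipschitzness of the Hessian so the drift matrices converge to $\tilde H^r_{\tilde\theta_\star}$ and $H^r_{\tilde\theta_\star}$, substitute the asymptotically linear expansion $\hat\theta_n-\tilde\theta_\star=-H_{\tilde\theta_\star}^{-1}\mathbb{P}_n[\nabla\ell(X;\tilde\theta_\star)]+o_p(n^{-1/2})$, collect the complete-case terms into a single influence function, and apply the (conditional) CLT with the $n/\tilde N_r\to p_0/p_r$ variance inflation across independent pattern blocks. The only cosmetic difference is that you invoke Lindeberg--Feller per block whereas the paper cites its conditional Lindeberg CLT assumption, which is the same ingredient.
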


We provide a proof of the above proposition as well as a more detailed empirical comparison of $\hat{T}_\mathrm{full}$ with $\hat{T}_\IPI$ in Appendix \ref{app:diagnostics} and \ref{app:expt-diagnostics}, respectively.

\section{Experiments}\label{sec: expts}

In this section, we present two main findings. (i) In controlled MCAR settings, IPI and CIPI exploit multiple missingness patterns to achieve substantial sample-efficiency gains over single-pattern PPI and complete-case estimation, while maintaining nominal coverage; by contrast, naive single-imputation and AIPW baselines are biased or unstable and exhibit undercoverage. (ii) In a clinical dataset that departs from MCAR, we obtain reasonable inference for the shifted target $\tilde{\theta}_\star$ and our diagnostics detect the resulting distributional shift.

Throughout, we focus on regimes with a small number of missingness patterns $R$ relative to the sample sizes $n$ and $\tilde{N}_r$, as is common when merging multi-modal datasets that induce blockwise missingness. For each $n$, we define a method’s effective sample size as the number of complete-case samples required to match its confidence-interval width: \[N_\text{eff} = n ({\text{complete-case baseline interval width}}/{\text{new method interval width}})^2.\] 

Additional experimental details, formal baseline definitions, and supplemental results are shown in Appendix \ref{app:exptl_details}.

\subsection{Proof-of-concept experiments under MCAR }\label{expt:mcar}
We first evaluate IPI and CIPI under controlled MCAR simulations against up to four baselines: (i) complete-case analysis; (ii) single-pattern IPI (equivalently PPI in the single-pattern setting); (iii) naïve single-imputation (impute with a specified method, then analyze as if fully observed); and, when applicable, (iv) an AIPW estimator (details in Appendix \ref{app:aipw_expts}). 

Because the true parameter $\theta_\star$ is known in these simulated settings, we can directly assess bias and coverage. As shown in the following experiments, naive single-imputation yields biased estimates, and the AIPW baseline is unstable, both leading to undercoverage. By contrast, complete-case and IPI-based methods achieve nominal coverage; moreover, tuned IPI and CIPI consistently deliver substantial sample-efficiency gains over the valid baselines (complete-case and single-pattern IPI).

We study two MCAR settings—synthetic data (\textbf{factor model}, $n=200$) and real data with simulated missingness (\textbf{census}, $n=2000$)—averaging results over $1{,}000$ trials. We vary the ratio of partially to fully observed samples from 10 to 50 to examine how additional partial information affects inference. IPI uses a 10\% train–test split, and CIPI is implemented with $K=10$-fold cross-fitting. All results are reported for 90\% confidence intervals.

\textbf{Experiment 1: factor model.}
To evaluate our method in a controlled setting, we simulate data from a factor model: $X \sim \mathcal{N}(0, \Sigma_X) \in \mathbb{R}^{20}$ with $\Sigma_X = FF^\top + \sigma^2 I$, where $\sigma^2$ is chosen so that $q = 2$ latent factors explain 50\% of the variance (introducing heterogeneity that makes imputation nontrivial). 

The factor matrix $F$ and idiosyncratic variance $\sigma^2$ are unknown. Missingness is generated by independently masking each feature with probability 0.2, then retaining ten resulting patterns with equal probability, yielding an average missing rate of 25\% with missingness patterns in Figure \ref{fig:fm-results}(a).

For IPI-based methods, we adapt the EM algorithm, which typically performs in-place imputation \cite{dempster1977maximum}. Specifically, we first estimate the covariance structure from a held-out training set and then impute missing entries using conditional means based on the learned covariance. This procedure ensures that the imputation function $f$ is independent of the test data used to construct the prediction interval—a key requirement for the validity of general PPI methodology \cite{ppi}.

Our target parameter is the regression coefficient of $X_0$ in a linear regression of $X_2$ on $(X_0, X_1)$. As seen in Figure \ref{fig:fm-results}(a), we observe multiple missing patterns in these features with at least one of these features missing for almost all the dataset. 

Baselines include: (1) complete-case; (2) IPI using the best single-pattern; (3) AIPW; and (4) naive EM single imputation without bias correction. Because the latter two methods do not achieve nominal 
90\% coverage, we omit effective sample-size gains for them from the main text. See Appendix \ref{app:exptl_details} Figure \ref{fig:fm_ipi_classical_full_comparison} for the full comparison including five different imputation strategies and additional results, where the IPI framework both recovers nominal coverage when naive imputation methods fail and consistently substantially improves sample efficiency over complete-case analysis.

\begin{figure}[h!]
    \centering
    \includegraphics[width=\linewidth]{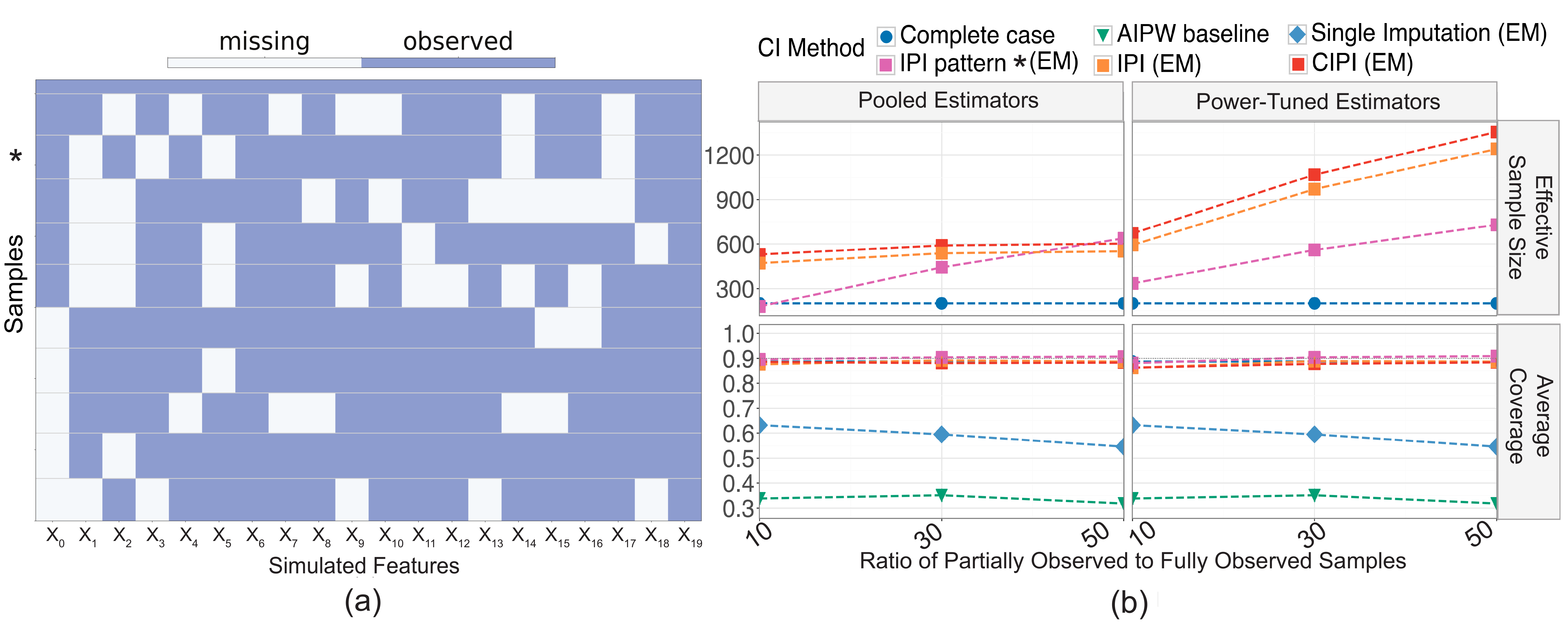}
    \caption{\textbf{Results for the factor model.} (a) Missing-data patterns from a representative run for $\tilde{N}/n = 30$. Each horizontal block is a distinct pattern, with row height proportional to its frequency; the star marks the top-performing pattern. (b) Average coverage and effective sample size across four methods: complete-case, best single-pattern imputation, IPI, and CIPI. Average coverage is also reported for other baselines that do not achieve nominal coverage.}
    \label{fig:fm-results}
\end{figure}

Figure~\ref{fig:fm-results} (b) summarizes the experimental results. Complete-case and IPI-based methods achieve nominal coverage (bottom row) while other baselines fail to achieve this coverage. Comparing effective sample sizes across pooled and power-tuned estimators (top row) demonstrates (i) power-tuning with $\hat{\lambda}_\star$ is crucial for effectively aggregating information across multiple missing patterns and (ii) CIPI proves to be more sample efficient than IPI by exploiting cross-fitting to improve efficiency over train-test splitting.

\textbf{Experiment 2: studying schooling level and income in census surveys.} Survey data often exhibit complex, multi-pattern missingness. We evaluate IPI on the American Community Survey (ACS) Public Use Microdata Sample (PUMS) \cite{census_survey_folktables}, a representative setting with diverse missing patterns. Our target is the coefficient of schooling level in a linear regression of personal income on schooling and age. We include 14 auxiliary demographic features (e.g., sex, marital status, disability, parental employment) to boost imputation. For blackbox imputation, we use a version of MissForest \cite{stekhoven2012missforest}, which performs iterative imputation with gradient-boosted trees \cite{ke2017lightgbm} to handle mixed data types.

The ACS data exhibit structured missingness due to non-response and privacy-related masking. To simulate missingness under the MCAR assumption–while preserving the marginal distributions of both the data and the missingness patterns–we first estimate empirical patterns from partially observed records and apply them, with their observed frequencies, to a fully observed subset. This results in a 20\% overall missing rate while maintaining realistic marginal structure. This synthetic masking preserves access to ground truth values, enabling rigorous coverage evaluation. We treat the fully observed finite population as the true distribution, with the corresponding regression coefficient taken as the true $\theta_\star$. Coverage is assessed by whether this finite-population value falls within the estimated interval.

Baselines include: (1) complete-case analysis; (2) the best single-pattern IPI method; and (3) naive MissForest imputation without any bias correction. Due to the low coverage observed for the latter two baselines, we do not report their effective sample size gains. The AIPW baseline is omitted, as it is not well-suited to these complex data settings, where - for instance - one of the features has over 100 distinct categories.

Figure~\ref{fig:acs-results}(b) shows that IPI remains robust in this more complex setting, achieving greater effective sample size and maintaining valid confidence interval coverage while naive single imputation with the same method has less than 20\% average coverage (bottom row). Power-tuning further amplifies this gain, boosting the effective sample size from 2$\times$ to 4$\times$ (top row). 

\begin{figure}[h!]
    \centering
    \includegraphics[width=\linewidth]{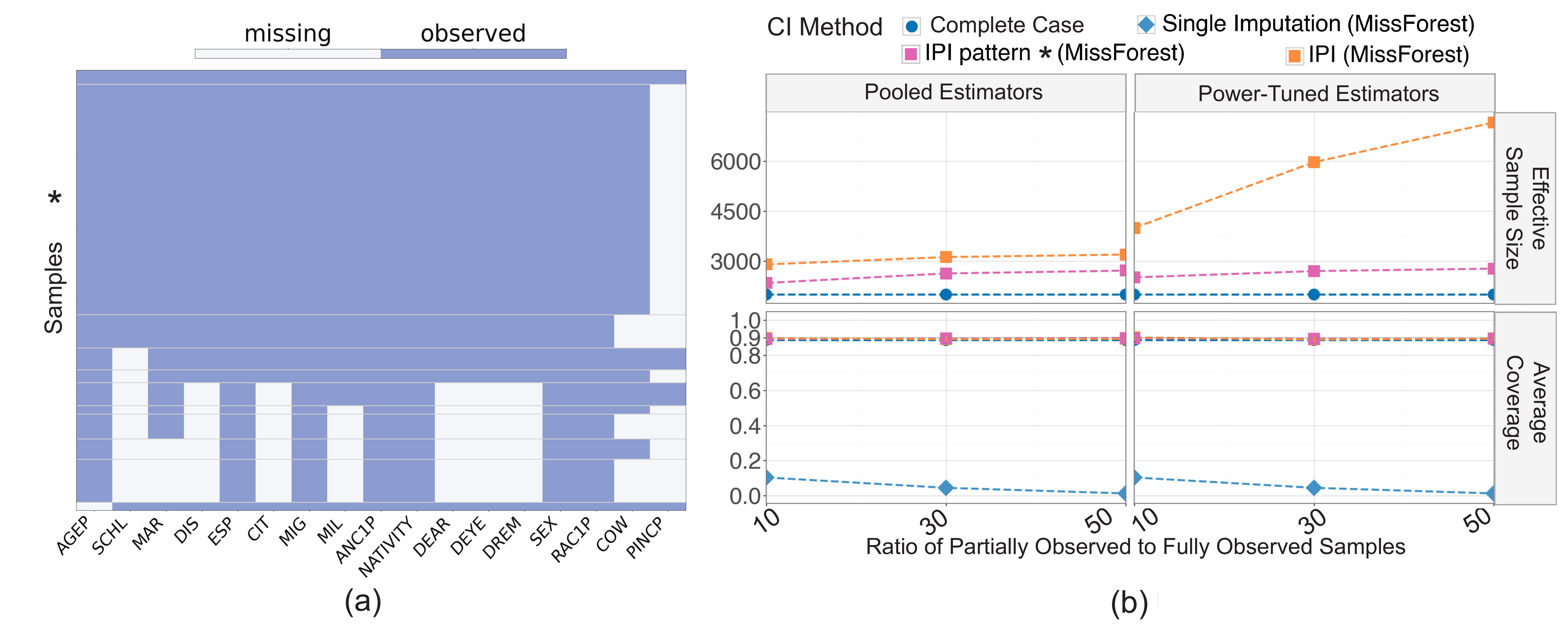}
    \caption{\textbf{Results for census.} (a) 
    Missing-data patterns from a representative run for $\tilde{N}/n = 30$. Each horizontal block is a distinct pattern, with row height proportional to its frequency; star marks the top-performing pattern. 
    Feature codes are provided in Appendix \ref{app:exptl_details}.  (b) Average coverage and effective sample size across three methods: complete-case, best single-pattern imputation, and IPI. Average coverage is also reported for other baselines that do not achieve nominal coverage.}
    \label{fig:acs-results}
\end{figure}

\subsection{Real data: allergen dataset}\label{sec:allergen_expt}
In clinical datasets, missing data often occur due to patients skipping certain demographic questions or lab results being unavailable for certain subjects. These settings are not only practically important but also highlight scenarios where leveraging informative machine learning imputations can be a cost-effective way to boost effective sample size and meaningfully improve inference. We study one such significant example: the allergen chip challenge dataset \cite{allergen_chip}, which combines allergen-specific IgE chip readings with clinical and demographic data—where missingness primarily affects the latter two modalities. After preprocessing the allergen dataset, we obtain $N = 1760$ samples, $d = 30$ features, $n = 429$ fully observed samples across all $30$ features, and 6 distinct missing patterns. Imputation is performed using MissForest.

We perform inference on the shifted target $\tilde{\theta}_\star$: the coefficient of a feature in the linear regression of an IgE response on age, sex and said feature for the fully observed population. For the IgE responses, we focus on five separate regressions for the peanut allergens Ara h 1, Ara h 2, Ara h 3, Ara h 6, and Ara h 8; and for the features, experiment Figure \ref{fig:allergen-dataset}(a) depicts coefficient confidence intervals when the feature of interest is rhinitis severity (ranging from 0 to 5) and experiment Figure \ref{fig:allergen-dataset}(b) depicts results when the feature of interest is asthma severity (ranging from 0 to 5).

Figure \ref{fig:allergen-dataset}(a) (rhinitis severity) shows that CIPI consistently shortens CIs relative to complete-case analysis, with IPI improving analysis for a few regressions including that for Ara h 1. In this regression, the CI is visibly shorter and the diagnostic is large (Figure \ref{fig:allergen-dataset}(c), Expt (a) Ara h 1 has $p=0.919$), indicating no strong evidence of the violation of the MCAR first moment assumption and that IPI/CIPI is likely performed under the stated null.

In contrast, Figure \ref{fig:allergen-dataset}(b) (asthma severity) exhibits more disagreement — most notably for Ara h 2, Ara h 3, and Ara h 6—mirrored by the smaller diagnostic p-values in panel c (Expt (b)). These warnings suggest that, for asthma severity, the combination of MissForest and linear adjustment may be extrapolating in regions poorly supported by the fully observed data, even when CIs appear narrow.

Together, the figure highlights two desirable properties for the IPI framework and diagnostics: (i) CIPI can boost precision over complete-case analysis by leveraging information from partially observed units; and (ii) unlike prior PPI workflows that offer no guardrails, our diagnostic alerts users when imputation-assisted gains may come at the expense of validity.

\begin{figure}[h!]
    \centering    \includegraphics[width=1\linewidth]{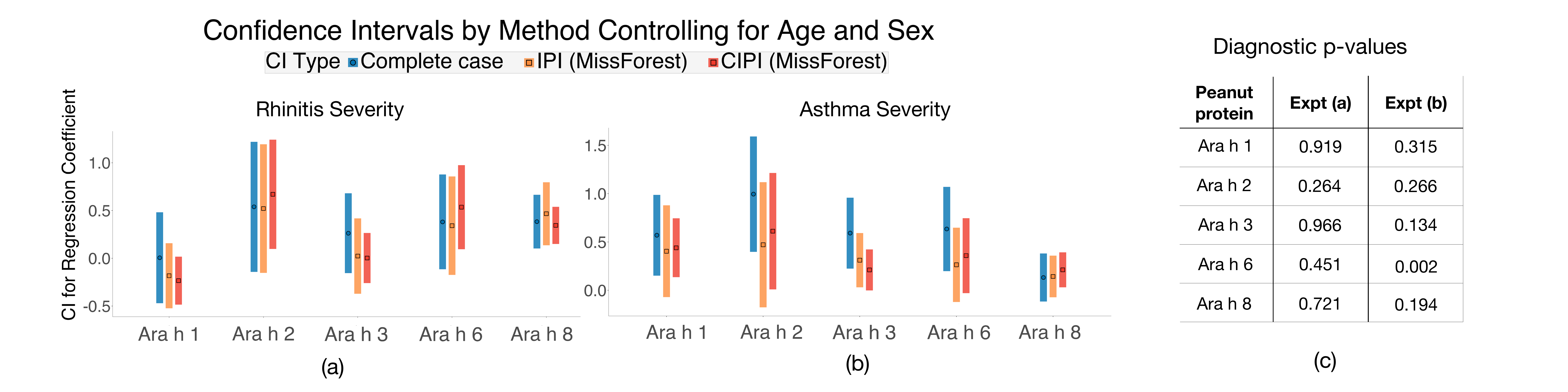}
    \caption{\textbf{Allergen chip dataset inference}. (a) Confidence intervals when regressing each peanut protein on age, sex, and rhinitis severity. In plot (b), regressing instead on age, sex, and asthma severity results in more disagreement between the IPI-based and complete-case methods. This shift is echoed by the diagnostic p-values (c).
    }
    \label{fig:allergen-dataset}
\end{figure}
\section{Discussion}
We introduced IPI and CIPI, extending prediction-powered inference to accommodate multiple missing-data patterns through blackbox imputation. By leveraging a fully observed subset to debias imputed samples, our methods substantially boost sample efficiency while preserving valid inference under MCAR and under a relaxed moment condition that extends beyond MCAR. We demonstrate these gains empirically, including on a real clinical dataset, and propose a diagnostic to assess when such debiasing is valid. Future work could seek to reduce the reliance on fully observed auxiliary features $U$ in order to develop even more sample-efficient and valid methods. When missing patterns each have few observations, clustering patterns or adding regularization to the power-tuning weights may improve stability of this framework. Our results highlight that principled debiasing enables machine-learning imputations to support valid and powerful inference, even in complex and heterogeneous data settings.

{\small
\bibliographystyle{unsrtnat}
\bibliography{references}
}

\appendix
\section{Connection with AIPW estimators}\label{app:semiparam}

In this section, we compare the IPI method to the broader class of augmented inverse probability weighting (AIPW) estimators for non-monotone missingness in semiparametric theory \cite{robins1994, tsiatis2006semiparametric}. In particular, for ease of comparison, we focus on estimating the population linear regression coefficient $\theta_\star$, which satisfies the moment condition $\mathbb{E}[X(Y - X^\top \theta)] = 0$.

A key feature of our approach is that we do not assume that the restricted moment model $\mathbb{E}[Y|X] = X^\top \theta$ holds. This is a distinction from the methods discussed in \citet{robins1994, sun2018inverse, sun2018inverse2}, where this assumption is leveraged to further optimize estimator efficiency. Our method instead simply targets the coefficient of the best linear predictor for $Y$. Thus, to maintain consistency of the estimate, the AIPW baseline starts with  the basic inverse probability weighting (IPW) estimator (equivalent to the complete-case estimator under MCAR) that is the solution to $ \frac{1}{N}\sum_{i =1 }^N \frac{\mathbf{1}(M_i = \mathbf{1}_d)}{\hat{\mathbb{P}}[M = \mathbf{1}_d]}X_i(Y_i - \theta^\top X_i) =\mathbb{P}_n [X(Y- \theta^\top X)] = 0$, and seeks to improve its precision by projecting it onto a space of mean-zero functions of the observed data – in particular, the augmentation space $\mathcal{A} := \left\{\phi: E[\phi(O, M)|D] = 0\right\}$. 

The most efficient estimator is the solution to 
\begin{align*}
    \mathbb{P}_N \left[\frac{\mathbf{1}(M = \mathbf{1}_d)}{\mathbb{P}[M = \mathbf{1}_d]} C_1^{\mathrm{opt}} X(Y - \theta^\top X) - \mathcal{P}_\mathcal{A}\left( \frac{\mathbf{1}(M = \mathbf{1}_d)}{\mathbb{P}[M = \mathbf{1}_d]}C_1^{\mathrm{opt}} X(Y - \theta^\top X)\right)\right] = 0,
\end{align*}
where  $\mathcal{P}_\mathcal{A}$ is the $L^2$ projection of the IPW estimator onto the augmentation space $\mathcal{A}$, and $C_1^{\mathrm{opt}}\in\mathbb{R}^{p\times p}$ is a variance-minimizing preconditioner (it is irrelevant without augmentation, but matters once you project).

As noted in \citet{robins1994} and \citet{tsiatis2006semiparametric}, this projection lacks a closed-form in the general non-monotone case (Sec. 7.2 \cite{robins1994}), even when the full data distribution is known. The resulting estimator construction, further detailed in Tsiatis \cite{tsiatis2006semiparametric}, is theoretically sound but often computationally demanding and unstable in practice.

A common, more tractable alternative is to project the IPW estimator onto a well-chosen, finite-dimensional linear subspace of $\mathcal{A}$. For the AIPW baseline method, we follow the common approach \cite{tsiatis2006semiparametric, sun2018inverse, sun2018inverse2} to project onto a finite-dimensional linear subspace spanned by low-degree monomials of the observed features. Define
\begin{align*}
    \mathcal{A}_\mathrm{AIPW}(O, M) := \Bigg\{C_2 \cdot A_\mathrm{AIPW}(O, M):C_2 \in \mathbb{R}^{p \times \sum_{r = 1}^R \left(1 + d_r + {d_r \choose 2}\right)} \Bigg\},
\end{align*}
where $d_r\leq d$ is the number of observed features in missingness pattern $m_r$, and 
\begin{equation}\label{app-eq:aipw_class1_augmentation}
\begin{split}
    A_{\mathrm{AIPW}}(O, M) &:= 
    \Bigg[
    \left(\frac{\mathbf{1}\{M = \mathbf{1}_d\}}{\mathbb{P}[M = \mathbf{1}_d]} - \frac{\mathbf{1}\{M = m_1\}}{\mathbb{P}[M = m_1]}\right) V_1\\
    &\hspace{5mm}\left(\frac{\mathbf{1}\{M = \mathbf{1}_d\}}{\mathbb{P}[M = \mathbf{1}_d]} - \frac{\mathbf{1}\{M = m_2\}}{\mathbb{P}[M = m_2]}\right) V_2\\
    &\hspace{30mm} \vdots \\
    &\hspace{5mm} \left(\frac{\mathbf{1}\{M = \mathbf{1}_d\}}{\mathbb{P}[M = \mathbf{1}_d]} - \frac{\mathbf{1}\{M = m_R\}}{\mathbb{P}[M = m_R]}\right) V_R\Bigg] \in \mathbb{R}^{\sum_{r = 1}^R \left(1 + d_r + {d_r \choose 2}\right)},
\end{split}
\end{equation}
with $V_r$ consisting of monomials up to degree two in the components observed under $m_r$. For example, if $D = (X_1, X_2, U_1)$ and $m_r = (1, \texttt{NA}, 1)$, i.e., only $X_1$ and $U_1$ are observed, then $V_r := (1\hspace{2mm} X_1\hspace{2mm}  U_1\hspace{2mm} X_1^2\hspace{2mm}  X_1U_1\hspace{2mm}  U_1^2)$.

The corresponding AIPW-baseline is the one-step estimator for the solution to 
\begin{align*}
    \mathbb{P}_N \left[\frac{\mathbf{1}(M = \mathbf{1}_d)}{\mathbb{P}[M = \mathbf{1}_d]}C_1^{\mathrm{opt, AIPW}}X(Y - \theta^\top X) - \mathcal{P}_\mathcal{A_\mathrm{AIPW}}\left(\frac{\mathbf{1}(M = \mathbf{1}_d)}{\mathbb{P}[M = \mathbf{1}_d]}C_1^{\mathrm{opt, AIPW}}X(Y - \theta^\top X)\right)\right] = 0,
\end{align*}
where $C_1^{\mathrm{opt, AIPW}} \in \mathbb{R}^{p\times p}$ is the matrix optimizing the resulting variance.

However, even in high-dimensional settings and more complex data settings, this more tractable semiparametric approach becomes unstable and computationally demanding due to the need to estimate both a $p\times p$ and a $p \times \sum_{r = 1}^R \left(1 + d_r + {d_r \choose 2}\right)$ matrix.

In contrast, IPI uses a much smaller, structured subspace,
\[
\mathcal{A}_{\IPI}(O,M)
=\Big\{\sum_{r=1}^R \lambda_r\Big(\tfrac{\mathbf{1}\{M=\mathbf{1}_d\}}{\mathbb{P}[M = \mathbf{1}_d]}-\tfrac{\mathbf{1}\{M=m_r\}}{\mathbb{P}(M = m_r)}\Big)\,\nabla_\theta \ell\!\big(f(O\!\circ\! m_r);\theta\big):~\lambda_r\in\mathbb{R}\Big\},
\]
the linear span of $R$ pattern-specific, mean-zero functions. This reduces tuning to $\{\lambda_r\}_{r=1}^R$, yielding a more stable and tractable procedure in high-dimensional or complex settings. (A preconditioner $C_1$ can be included analogously; we set $C_1=I$ for simplicity in our implementation.)

\subsection{Experimental comparisons under the factor model}\label{app:aipw_expts}

We present results for the factor model setting comparing IPI with the AIPW baseline.  For this demonstration, we fix the ratio of partially observed to fully observed to be $\tilde{N}/n = 10$. As before, the number of features is $d = 20$ and the number of latent factors is $q = 2$. Results are reported over 500 trials.

\begin{figure}[H]
    \centering
    \includegraphics[width=0.8\linewidth]{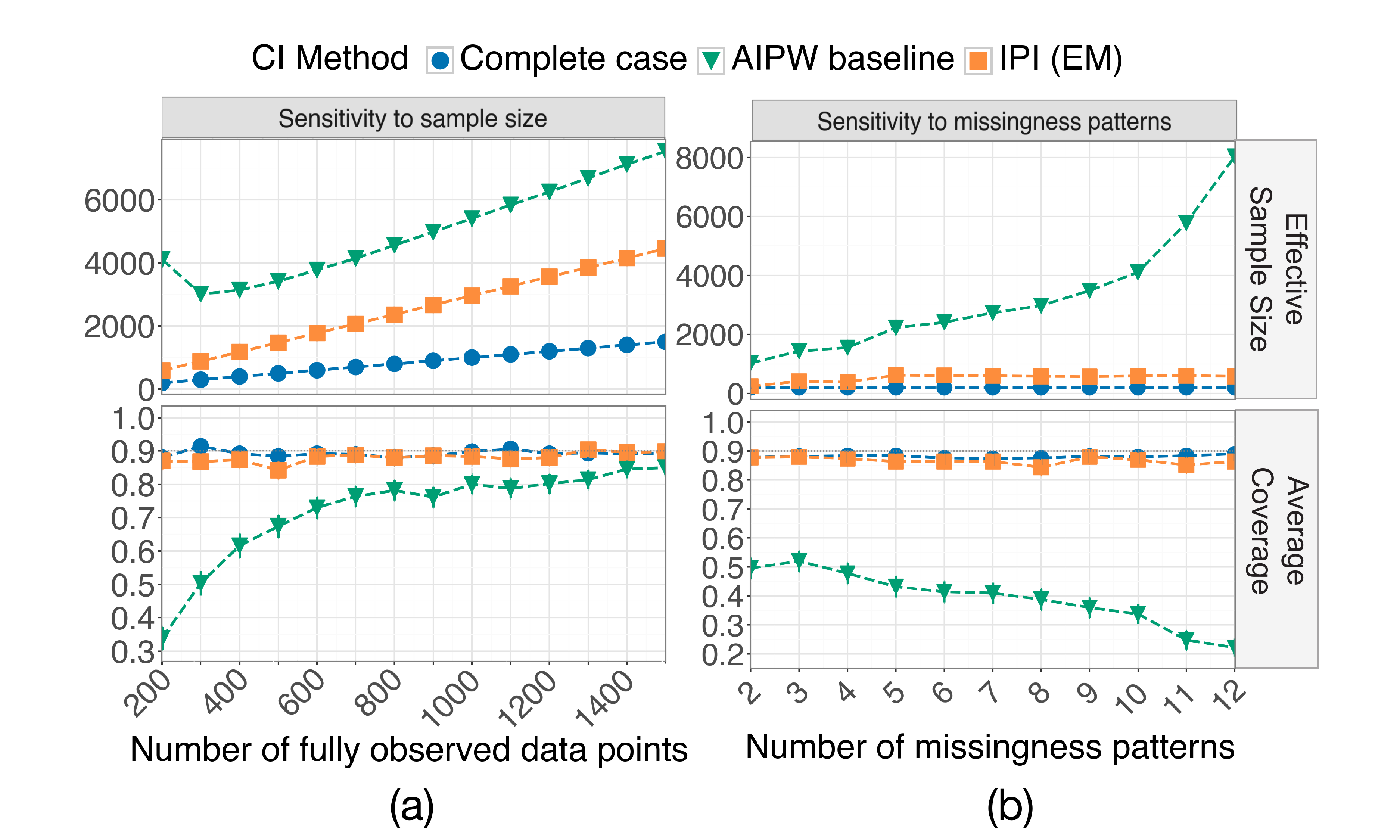}
    \caption{Effective sample size and average coverage for complete-case, AIPW baseline, and IPI tuned methods. (a) varies the number of samples keeping the number of missing patterns to 10 and (b) varies the number of missing patterns keeping the number of fully observed samples to 200.}
    \label{fig:aipw_expts}
\end{figure}

First, we fix the number of missingness patterns to be 10 and increase the number of fully observed samples from 200 to 1000. We see that in Figure \ref{fig:aipw_expts}(a) that the complete-case and IPI methods maintain nominal coverage while the AIPW method requires a large sample size to achieve 90\% average coverage.

Secondly, we fix the number of fully observed samples to be 200 and vary the number of missingness patterns from 2 to 12. We see in Figure \ref{fig:aipw_expts}(b) that as the number of missingness patterns increases, there is a downward trend in the AIPW method  coverage, whereas IPI still achieves relatively close to nominal coverage.

\section{IPI asymptotic normality and confidence interval validity under MCAR}\label{app:IPI}
In this section, we provide the detailed proof for the asymptotic normality of the IPI estimator and the asymptotic validity of the IPI confidence interval. 

We begin by introducing the regularity conditions for our results. Assumption \ref{assump:regularity} enforces common regularity assumptions on our losses. These assumptions permit a variety of losses, including common general linear model log-likelihoods such as squared loss and cross-entropy loss. 

A key distinction from the PPI proofs is the need for conditional central limit theorem (CLT) assumptions (Assumption \ref{assump:cond_lindeberg_mloss}). This arises because masking fully observed data using missing patterns from partially observed data introduces dependence between the fully observed and partially observed sample means. However, conditional on the observed missing patterns, the sample means become conditionally independent. Assumption \ref{assump:cond_lindeberg_mloss} ensures that, under this conditioning, the samples meet standard conditions for asymptotic normality.

\begin{assumption}[Loss regularity conditions]\label{assump:regularity} We state common regularity conditions for the loss and imputation algorithm $f$. 
    \begin{enumerate}
        \item $\ell(X; \theta)$ is a convex function of $\theta$ $P_X$-a.e.
        \item The Hessian $\nabla^2\ell$ exists $P_{D, M}$-a.e. and is locally Lipschitz in the sense that for all $\theta$ in a neighborhood $\mathcal{U}$ of $\theta_\star$, for each missing pattern $r \in [R]$,
        \begin{equation}
            \|\nabla^2 \ell (f(O \circ m_r); \theta) - \nabla^2 \ell(f(O \circ m_r); \theta_\star)\|_{op} \leq B^{(r)} (O\circ m_r) \|\theta - \theta_\star\|_2,
        \end{equation}
        $P_D$-a.e., where $B^{(r)}(O \circ m_r)$ is square integrable.
        \item The population Hessian is invertible: $H_{\theta_\star} := \mathbb{E}[\nabla^2 \ell(X; \theta_\star)] \succ 0$.
    \end{enumerate}
\end{assumption}

\begin{assumption}[Conditional Lindeberg conditions - from \cite{Bulinski2017}]\label{assump:cond_lindeberg_mloss} Here, we provide the regularity conditions for asymptotic normality of the estimator that allow us to construct asymptotic level $1-\alpha$ confidence intervals.

Given the $\sigma$-algebra $\mathcal{F}^M_N := \sigma \left(\{M_i\}_{i =1}^N\right)$, i.e., the $\sigma$-algebra generated from the set of observed missing patterns, define
\begin{align*}
    W_{N, i} &:= \begin{cases}
        \frac{\lambda_r}{R\tilde{N}_r}\nabla \ell(f(O_i); \theta), &\text{\em for } M_i = m_r, r \in [R], \\
        \frac{1}{n} \left(\nabla \ell (X_i; \theta) - \frac{1}{R}\sum_{r = 1}^R \lambda_r \bm{1}\{\tilde{N}_r > 0\}\nabla \ell( f(O_{i}\circ m_r); \theta)\right), & \text{\em for } M_i = \bm{1}_d. 
     \end{cases}
\end{align*}

Note that the gradient of the IPI loss estimator is exactly the sum of the above terms, i.e.:
\begin{align*}
    \nabla L_{\IPI}(\theta; \lambda) = \sum_{i = 1}^{N} W_{N, i}.
\end{align*}

The corresponding conditional means are then defined as
\begin{align*}
  \mu_{N, i, \mathcal{F}^M} &:= \mathbb{E}[W_{N, i}|\mathcal{F}^M] = \mathbb{E}[W_{N, i} | \mathcal{F}_N^M],\\
\mu_{N, \mathcal{F}^M} &:= \mathbb{E}[\nabla L_{\IPI}(\theta; \lambda)|\mathcal{F}^M] = \sum_{i = 1}^{N} \mu_{N, i, \mathcal{F}^M}.
\end{align*}

Moreover, the conditional variances are
\begin{align*}
\Sigma_{N, i, \mathcal{F}^M} &:=\Var(W_{N, i}|\mathcal{F}^M) < \infty \,\, \text{\em  a.e.}\\
S_{N, \mathcal{F}^M} &:= \Var(\nabla L_{\IPI}(\theta; \lambda)|\mathcal{F}^M) = \sum_{i = 1}^{N} \Sigma_{N, i, \mathcal{F}^M} \succ 0,
\end{align*}
where the last equality follows from conditional independence of the terms $W_{N, i}$ given $\mathcal{F}^M$.

The \textbf{conditional Lindeberg condition} is given by
\begin{equation}
    \sum_{i = 1}^{N} \mathbb{E}\left[\left\|{S^{-1/2}_{N, \mathcal{F}^M}}(W_{N, i} - \mu_{N, i, \mathcal{F}^M})\right\|^2\mathbf{1}\{\|S^{-1/2}_{N, \mathcal{F}^M}(W_{N, i} - \mu_{N, i, \mathcal{F}^M})\| > \varepsilon \}\bigg| \mathcal{F}^M\right] \overset{p}{\rightarrow} 0
\end{equation}
for any fixed $\varepsilon > 0$ as $N \rightarrow \infty$.
\end{assumption}

\subsection{Proof of Theorem \ref{thm:ipi_clt_mcar}} \label{app:proof_ipi_clt_mcar}
Here, we provide a proof of Theorem \ref{thm:ipi_clt_mcar}, which follows conventional arguments for one-step estimators \cite{Vaart_1998}.

First, we fix $\lambda$. Given the above regularity conditions and MCAR assumption, 
\begin{equation}
    \sqrt{n} (\hat{\theta}_{\IPI, {\lambda}} - \theta_\star) \overset{d}{\rightarrow} \mathcal{N}(0, \Sigma),
\end{equation}
where we recall that  
\begin{align*}
    \hat{\theta}_{\IPI, {\lambda}} = \hat{\theta}_n - \hat{H}_n^{-1} \nabla L_\IPI(\hat{\theta}; \lambda), \quad \Sigma &= H_{\theta_\star}^{-1}V_{\theta_\star, \lambda} H_{\theta_\star}^{-1},
\end{align*}
and
\begin{multline*}
    V_{\theta_\star, \lambda}
    := \operatorname{Var} \left(\nabla \ell(X; {\theta_\star}) - \frac{1}{R}\sum_{r=1}^R \lambda_r \nabla \ell(f(O\circ m_r); {\theta_\star})\Bigg|M = \bm{1}_d\right)\\
    + \sum_{r=1}^R \left(\frac{\lambda_r}{R}\right)^2 \frac{p_0}{p_r} \Var\left(\nabla \ell(f(O); {\theta_\star})\mid M = m_r\right).
\end{multline*}

\begin{proof}
The proof follows from standard arguments for a one-step estimator \cite{Vaart_1998}. We first conduct the following decomposition:

\begin{align}              
\hat{\theta}_{\IPI, \lambda} - \theta_\star
&= \hat{\theta}_n - \theta_\star - \hat{H}_n^{-1} \nabla L_{\IPI}(\hat{\theta}_n; \lambda) \nonumber\\
&= (\hat{\theta}_n - \theta_\star)\\
&\hspace{3mm} - \hat{H}_n^{-1} \left(\nabla L_{\IPI}(\hat{\theta}_n; \lambda) - \nabla L_{\IPI}(\theta_\star; \lambda)\right)\\
&\hspace{3mm} - \hat{H}_n^{-1} \nabla L_{\IPI}(\theta_\star; \lambda).
\end{align}

\begin{lemma}
    $\nabla L_{\IPI}(\hat{\theta}_n; \lambda) - \nabla L_{\IPI}(\theta_\star; \lambda) = H_{\theta_\star}(\hat{\theta}_n - \theta) + o_p(1/\sqrt{n})$.
\end{lemma}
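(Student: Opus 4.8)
The statement to prove is that $\nabla L_{\IPI}(\hat{\theta}_n; \lambda) - \nabla L_{\IPI}(\theta_\star; \lambda) = H_{\theta_\star}(\hat{\theta}_n - \theta_\star) + o_p(1/\sqrt{n})$, i.e. that the map $\theta \mapsto \nabla L_{\IPI}(\theta;\lambda)$ admits a stochastic first-order Taylor expansion around $\theta_\star$ with derivative equal to the \emph{population} Hessian $H_{\theta_\star} = \mathbb{E}[\nabla^2 \ell(X;\theta_\star)]$, uniformly over the $O_p(n^{-1/2})$ neighborhood in which $\hat{\theta}_n$ lives. The plan is the standard one-step / Z-estimator linearization argument (as in van der Vaart Ch.\ 5), carried out pattern-by-pattern since $L_{\IPI}$ is a finite sum over the $R$ missingness patterns of empirical means of smooth losses.

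First I would write $\nabla L_{\IPI}(\theta;\lambda)$ explicitly as the sum, over $r\in[R]$, of $\tfrac{1}{R}$ times the three empirical-mean terms $\lambda_r \mathbb{P}_{\tilde N_r}[\nabla\ell(f(O);\theta)]$, $\mathbb{P}_n[\nabla\ell(X;\theta)]$, and $-\lambda_r\mathbb{P}_n[\nabla\ell(f(O\circ m_r);\theta)]$. For each such empirical mean $\mathbb{P}_m[g(\cdot;\theta)]$ (with $g$ one of the three gradient integrands and $m\in\{n,\tilde N_r\}$), apply the exact mean-value form of Taylor's theorem in $\theta$: $\mathbb{P}_m[g(\cdot;\hat\theta_n)] - \mathbb{P}_m[g(\cdot;\theta_\star)] = \left(\mathbb{P}_m[\nabla_\theta g(\cdot;\bar\theta)]\right)(\hat\theta_n - \theta_\star)$ for some $\bar\theta$ on the segment (applied coordinatewise, or via the integral form $\int_0^1 \mathbb{P}_m[\nabla_\theta g(\cdot;\theta_\star + t(\hat\theta_n-\theta_\star))]\,dt$ to avoid the coordinatewise-$\bar\theta$ annoyance). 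Then the task reduces to showing each such random Jacobian matrix converges in probability to the corresponding population quantity and that the differences are $o_p(1)$, after which multiplying by $\hat\theta_n-\theta_\star = O_p(n^{-1/2})$ gives the claimed $o_p(n^{-1/2})$ remainder.

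The key convergence inputs are: (i) the local Lipschitz bound on the Hessian from Assumption \ref{assump:regularity}(2), which controls $\|\mathbb{P}_m[\nabla^2\ell(f(O\circ m_r);\bar\theta)] - \mathbb{P}_m[\nabla^2\ell(f(O\circ m_r);\theta_\star)]\|_{op} \le \mathbb{P}_m[B^{(r)}]\cdot\|\bar\theta-\theta_\star\|_2$, with $\mathbb{P}_m[B^{(r)}] = O_p(1)$ by the LLN (square-integrability of $B^{(r)}$) and $\|\bar\theta-\theta_\star\|_2 \le \|\hat\theta_n-\theta_\star\|_2 = o_p(1)$; and (ii) the LLN giving $\mathbb{P}_m[\nabla^2\ell(\cdot;\theta_\star)] \to \mathbb{E}[\nabla^2\ell(\cdot;\theta_\star)\mid M=m_\star]$ for the relevant conditioning event. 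Here MCAR is what lets every one of these conditional expectations collapse to the unconditional $H_{\theta_\star}=\mathbb{E}[\nabla^2\ell(X;\theta_\star)]$: both the fully-observed term $\mathbb{P}_n[\nabla^2\ell(X;\theta_\star)]$ and, crucially, the debiasing cross-terms $\mathbb{E}[\nabla^2\ell(f(O\circ m_r);\theta_\star)\mid M=\mathbf 1_d]$ versus $\mathbb{E}[\nabla^2\ell(f(O);\theta_\star)\mid M=m_r]$ must agree so that the $\lambda_r$-weighted imputed Hessians cancel, leaving only the $\mathbb{P}_n[\nabla^2\ell(X;\theta_\star)]$ contribution summed $R$ times and divided by $R$, i.e.\ exactly $H_{\theta_\star}$. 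I would state this cancellation as a short separate computation.

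The main obstacle is not any single estimate but the bookkeeping of the cancellation together with the $n$-vs-$\tilde N_r$ accounting: one must be careful that the imputed-loss empirical means over the \emph{partially observed} fold, $\mathbb{P}_{\tilde N_r}[\nabla^2\ell(f(O);\theta_\star)]$, and over the \emph{masked fully-observed} fold, $\mathbb{P}_n[\nabla^2\ell(f(O\circ m_r);\theta_\star)]$, converge to the \emph{same} limit — which is exactly the content of MCAR applied to the pushforward of $f$ — so that their contributions to the Jacobian cancel rather than accumulate. A secondary but routine point is that the mean-value point $\bar\theta$ (or the integrand at $\theta_\star + t(\hat\theta_n-\theta_\star)$) is itself random and depends on the data; this is handled by the uniform-in-$t$ Lipschitz control of (i), since the bound $\mathbb{P}_m[B^{(r)}]\cdot\sup_{t}\|\theta_\star+t(\hat\theta_n-\theta_\star)-\theta_\star\|_2 \le \mathbb{P}_m[B^{(r)}]\cdot\|\hat\theta_n-\theta_\star\|_2$ does not require knowing $\bar\theta$. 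Finally, I would assemble: $\nabla L_{\IPI}(\hat\theta_n;\lambda)-\nabla L_{\IPI}(\theta_\star;\lambda) = \big(H_{\theta_\star}+o_p(1)\big)(\hat\theta_n-\theta_\star) = H_{\theta_\star}(\hat\theta_n-\theta_\star)+o_p(1)\cdot O_p(n^{-1/2}) = H_{\theta_\star}(\hat\theta_n-\theta_\star)+o_p(n^{-1/2})$, which is the claim.
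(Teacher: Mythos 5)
Your proposal is correct and follows essentially the same route as the paper's proof: a stochastic Taylor expansion of $\nabla L_{\IPI}$ around $\theta_\star$ controlled by the local Lipschitz Hessian bound of Assumption \ref{assump:regularity}, followed by the law of large numbers plus MCAR to show $\nabla^2 L_{\IPI}(\theta_\star;\lambda) = H_{\theta_\star} + o_p(1)$ (the $\lambda_r$-weighted imputed-Hessian terms over $\mathbb{P}_{\tilde N_r}$ and the masked $\mathbb{P}_n$ means cancel in the limit), and finally $\sqrt{n}$-consistency of $\hat\theta_n$ to absorb the $o_p(1)\cdot O_p(n^{-1/2})$ remainder. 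Your explicit spelling-out of the pattern-wise cancellation and the handling of the random mean-value point only makes more detailed what the paper states compactly.
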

\begin{proof}
    This essentially follows from the local Lipschitz Assumption \ref{assump:regularity}. In particular, 
    \begin{multline*}
    \nabla L_{\IPI}(\hat{\theta}_n; \lambda) - \nabla L_{\IPI}(\theta_\star; \lambda)
    = \mathbb{P}_n \left[\nabla \ell(X; \hat{\theta}_n) - \nabla \ell(X; {\theta}_*)\right]\\
    +\sum_{r =1}^R \frac{\lambda_r}{R} \bigg(
    \mathbb{P}_{\tilde{N}_r} \left[\nabla \ell(f(O); \hat{\theta}_n) - \nabla \ell(f(O); {\theta}_\star)\right] - \mathbb{P}_n\left[\nabla \ell(f(O \circ m_r); \hat{\theta}_n) - \nabla \ell(f(O \circ m_r); {\theta}_*)\right]
    \bigg).
\end{multline*}
For $\hat{\theta}_n \in \mathcal{U}$, we have for each $r= 0, ..., R$,
\begin{multline*}
    \nabla \ell(f(O \circ m_r); \hat{\theta}_n) - \nabla \ell (f(O \circ m_r); \theta_\star) 
    = \nabla^2\ell(f(O \circ m_r); \theta_\star)(\hat{\theta}_n - \theta_\star) \\
    + \text{ Remainder}^{(r)}(O\circ m_r) (\hat{\theta}_n - \theta_\star),
\end{multline*}
where by local Lipschitzness of the Hessian (Assumption \ref{assump:regularity}), we have
\begin{align*}
    \left\|\text{ Remainder}^{(r)}\right\|_{op} \leq \frac{B^{(r)}(O \circ m_r)}{2} \|\hat{\theta}_n - \theta_\star\|_2.
\end{align*}

Therefore, 
\begin{align*}
&\nabla L_{\IPI}(\hat{\theta}_n; \lambda) - \nabla L_{\IPI}(\theta_\star; \lambda) = \left(\nabla^2 L_{\IPI}(\theta_\star;\lambda)  + O_p(\|\hat{\theta}_n - \theta_\star\|_2)\right)\left(\hat{\theta}_n - \theta_\star\right).
\end{align*}

By $\sqrt{n}$-consistency of $\hat{\theta}_n$, we have $O_p(\|\hat{\theta}_n - \theta_\star\|_2) = O_p(1/\sqrt{n}) = o_p(1)$ and $\mathbb{P} [\hat{\theta}_n \in \mathcal{U}] \rightarrow 1$. Moreover, by standard law of large numbers and MCAR, $\nabla^2 L_\IPI(\theta_\star; \lambda) = H_{\theta_*} + o_p(1)$.

Therefore, putting it together, 
\begin{align*}
    \nabla L_{\IPI}(\hat{\theta}_n; \lambda) - \nabla L_{\IPI}(\theta_\star; \lambda) = H_{\theta_\star} \left(\hat{\theta}_n - \theta_\star\right) + o_p\left(\frac{1}{\sqrt{n}}\right).
\end{align*}
\end{proof}

\begin{lemma}\label{lemma:grad_lindeberg_clt}
Assumption  \ref{assump:cond_lindeberg_mloss} and MCAR give 
\begin{align*}
    \sqrt{n} \left(V_{\theta_\star, \lambda}\right)^{-1/2}\nabla L_{\IPI}(\theta_\star; \lambda) \overset{d}{\rightarrow} \mathcal{N}(0, I_d).
\end{align*}
\end{lemma}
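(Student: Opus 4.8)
\textbf{Proof proposal for Lemma \ref{lemma:grad_lindeberg_clt}.}

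The plan is to apply a conditional central limit theorem to the sum $\nabla L_{\IPI}(\theta_\star; \lambda) = \sum_{i=1}^N W_{N,i}$ conditionally on the $\sigma$-algebra $\mathcal{F}^M_N$ of observed missingness patterns, and then deconvolve the conditioning. The reason for working conditionally is exactly the one flagged before Assumption \ref{assump:cond_lindeberg_mloss}: masking the fully observed rows by the patterns $m_1,\dots,m_R$ makes the complete-case contribution and the partially observed contributions dependent through the random counts $\tilde N_r$, but conditional on $\mathcal{F}^M_N$ these are independent, fixed-weight sums of independent terms. First I would invoke the conditional Lindeberg CLT (in the form cited from \cite{Bulinski2017}) with the triangular array $\{W_{N,i}\}$: Assumption \ref{assump:cond_lindeberg_mloss} supplies exactly the conditional Lindeberg condition needed, and the conditional variance $S_{N,\mathcal{F}^M}$ is positive a.e.\ by hypothesis, so $S_{N,\mathcal{F}^M}^{-1/2}\bigl(\nabla L_{\IPI}(\theta_\star;\lambda) - \mu_{N,\mathcal{F}^M}\bigr) \overset{d}{\to} \mathcal N(0, I_p)$, where the convergence is in the conditional (and hence, by bounded convergence of characteristic functions, the unconditional) sense.

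Next I would identify the conditional mean and variance under MCAR. Under MCAR, $D \perp M$, so $\mathbb{E}[\nabla\ell(f(O_i); \theta_\star) \mid M_i = m_r] = \mathbb{E}[\nabla\ell(f(O\circ m_r); \theta_\star)\mid M = \bm 1_d]$; combined with $\frac{1}{R}\sum_r \lambda_r \bm 1\{\tilde N_r > 0\}\to \frac1R\sum_r\lambda_r$ (since $\tilde N_r>0$ eventually a.s.\ as $p_r>0$), the conditional mean $\mu_{N,\mathcal{F}^M}$ equals $\mathbb{E}[\nabla\ell(X;\theta_\star)] - \frac1R\sum_r\lambda_r\mathbb{E}[\nabla\ell(f(O\circ m_r);\theta_\star)] + o_p(n^{-1/2})$, and since $\theta_\star$ minimizes $\mathbb{E}[\ell(X;\theta)]$ the first term vanishes; the remaining fixed vector is precisely what gets subtracted inside each coordinate of $W_{N,i}$, so in fact $\mu_{N,\mathcal{F}^M} = o_p(n^{-1/2})$ after re-centering, and I should be careful to absorb any $O(n^{-1})$ discrepancy from the $\bm 1\{\tilde N_r>0\}$ indicators — this is the kind of routine bookkeeping I will not grind through here. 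For the variance, conditioning on $\mathcal{F}^M_N$ and using independence across rows, $n\, S_{N,\mathcal{F}^M} = \frac{n}{\tilde N_r}$-type weights times conditional variances; plugging $\tilde N_r/N \overset{p}{\to} p_r$, $n/N \overset{p}{\to} p_0$, one reads off $n\,S_{N,\mathcal{F}^M} \overset{p}{\to} V_{\theta_\star,\lambda}$ as given in Eq.~\eqref{eq:ipi_loss_asymp_var}: the complete-case rows contribute $p_0 \cdot \frac{1}{p_0}\Var(\nabla\ell(X;\theta_\star) - \frac1R\sum_r\lambda_r\nabla\ell(f(O\circ m_r);\theta_\star)\mid M=\bm 1_d)$ and each pattern-$r$ block contributes $\tilde N_r \cdot (\frac{\lambda_r}{R\tilde N_r})^2 \Var(\cdot\mid M=m_r)\cdot n = \frac{n}{\tilde N_r}(\frac{\lambda_r}{R})^2\Var \to \frac{p_0}{p_r}(\frac{\lambda_r}{R})^2\Var$.

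Finally I would combine: since $n\,S_{N,\mathcal{F}^M}\overset{p}{\to}V_{\theta_\star,\lambda}$ which is deterministic and positive definite, Slutsky's theorem (applied to the converging conditional characteristic functions, then integrated out) upgrades the conditional CLT to the unconditional statement $\sqrt n\, V_{\theta_\star,\lambda}^{-1/2}\nabla L_{\IPI}(\theta_\star;\lambda)\overset{d}{\to}\mathcal N(0,I_p)$, using that $\mu_{N,\mathcal{F}^M}=o_p(n^{-1/2})$ so the centering term is asymptotically negligible after scaling by $\sqrt n$. The main obstacle I anticipate is the deconvolution step — verifying rigorously that conditional-in-distribution convergence together with the in-probability convergence of the random normalizing matrix $n\,S_{N,\mathcal{F}^M}$ yields unconditional convergence; the clean way is to pass to characteristic functions, note they converge pointwise in probability and are bounded, apply dominated convergence to get convergence of unconditional characteristic functions, and then Lévy's continuity theorem. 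A secondary nuisance is handling the $\bm 1\{\tilde N_r>0\}$ indicators uniformly, but since $\mathbb{P}[\tilde N_r > 0\ \forall r]\to 1$ these can be dropped on an event of probability $\to 1$.
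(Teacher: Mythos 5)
Your proposal is correct and follows essentially the same route as the paper: apply the conditional Lindeberg CLT of \citet{Bulinski2017} given $\mathcal{F}^M_N$, use conditional independence of the pattern-wise sample means to compute $n\,\Var\bigl(\nabla L_{\IPI}(\theta_\star;\lambda)\mid\mathcal{F}^M\bigr) = V_{\theta_\star,\lambda} + o_p(1)$ via $\tilde N_r/N \overset{p}{\to} p_r$ and $n/N \overset{p}{\to} p_0$, and conclude by Slutsky. Your extra bookkeeping (the conditional mean vanishing under MCAR at $\theta_\star$, the $\bm{1}\{\tilde N_r>0\}$ indicators, and the characteristic-function argument for deconditioning) only makes explicit steps the paper leaves implicit.
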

\begin{proof}
By the conditional Lindeberg CLT (\cite{Bulinski2017}) (Assumptions \ref{assump:cond_lindeberg_mloss} and MCAR hold),  
\[
    \Var(\nabla L_\IPI(\theta_\star; \lambda))^{-1/2} \nabla L_\IPI(\theta_\star; \lambda) \overset{d}{\rightarrow} \mathcal{N}(0, I_d).
\]

Note that given $\mathcal{F}^M,$ the sample means for each missing pattern are conditionally independent. Then,
\begin{multline*}
    n\Var\left(\nabla L_{\IPI}(\theta_\star; \lambda)|\mathcal{F}^M\right) 
    = n\Bigg(\sum_{r = 1}^R \frac{\lambda_r^2}{R^2} \cdot \frac{1}{\tilde{N}_r} \Var \left(\nabla \ell(f(O);\theta_\star)|M =m_r\right) \\  + \frac{1}{n}\Var\left(\nabla \ell(f(O);\theta_\star) - \sum_{r =1}^R \frac{\lambda_r}{R}\nabla \ell(f(O \circ m_r);\theta_\star) \bigg| M = \bm{1}_d \right)\Bigg).
\end{multline*}
Taking the total number $N$ of samples to $\infty$ yields 
\begin{align*}
    n\Var\left(\nabla L_{\IPI}(\theta_\star; \lambda)|\mathcal{F}^M\right)  = V_{\theta_\star, \lambda} +o_p(1).
\end{align*}
Thus, by Slutsky's theorem, 
\[
     \sqrt{n} \left(V_{\theta_\star, \lambda}\right)^{-1/2}  \Var(\nabla L_\IPI(\theta_\star; \lambda))^{1/2}  \Var(\nabla L_\IPI(\theta_\star; \lambda))^{-1/2}\nabla L_{\IPI}(\theta_\star; \lambda) \overset{d}{\rightarrow} \mathcal{N}(0, I_d).
\]
\end{proof}

Putting all the components together yields
\begin{align*}
     \hat{\theta}_{\IPI, \lambda} - \theta_\star
    &= (\hat{\theta}_n - \theta_\star) - \hat{H}_n^{-1} \left(\nabla L_{\IPI}(\hat{\theta}_n; \lambda) - \nabla L_{\IPI}(\theta_\star; \lambda)\right) - \hat{H}_n^{-1} \nabla L_{\IPI}(\theta_\star; \lambda)\\
    &= \hat{\theta}_n - \theta_\star - (\hat{\theta}_n - \theta_\star) + o_p(1/\sqrt{n})  - H_{\theta_*}^{-1}\nabla L_\IPI(\theta_\star; \lambda) + o_p(1/\sqrt{n}) \\
    &= -H_{\theta_*}^{-1}\nabla L_\IPI(\theta_\star; \lambda) + o_p(1/\sqrt{n}).
\end{align*}

Thus,
\begin{align*}
    \sqrt{n}(\hat{\theta}_{\IPI, {\lambda}} - \theta_\star) 
    &=- \sqrt{n}H_{\theta_\star}^{-1}\Var\left(\nabla L_{\IPI}(\theta_\star; \lambda)|\mathcal{F}^M\right)^{1/2}\Var\left(\nabla L_{\IPI}(\theta_\star; \lambda)|\mathcal{F}^M\right)^{-1/2} \nabla L_\IPI (\theta_\star; \lambda)\\
    &\overset{d}{\rightarrow} \mathcal{N}(0, \Sigma),
\end{align*}
where the last line follows from Slutsky's theorem.

For a random sequence $\hat{\lambda}$ converging to $\lambda$ in probability, we still achieve the above asymptotic behavior via adding and subtracting one more term:
\begin{align*}              
\hat{\theta}_{\IPI, \hat{\lambda}} - \theta_\star
&= \hat{\theta}_n - \theta_\star - \hat{H}_n^{-1} \nabla L_{\IPI}(\hat{\theta}_n; \hat{\lambda}) \nonumber\\
&= (\hat{\theta}_n - \theta_\star)\\
&\hspace{3mm} - \hat{H}_n^{-1} \left(\nabla L_{\IPI}(\hat{\theta}_n;  \hat{\lambda}) - \nabla L_{\IPI}(\theta_\star;  \hat{\lambda})\right)\\
&\hspace{3mm} - \hat{H}_n^{-1} \left(\nabla L_{\IPI}(\theta_\star;  \hat{\lambda}) -\nabla L_{\IPI}(\theta_\star;  {\lambda})  \right)\\
&\hspace{3mm} -\hat{H}_n^{-1}\nabla L_{\IPI}(\theta_\star;  {\lambda}).
\end{align*}

Since $L_\IPI$ is linear in $\lambda$, by MCAR, $\nabla L_\IPI (\hat{\theta}_n; \hat{\lambda}) - \nabla L_\IPI (\theta_\star; \hat{\lambda}) = (H_{\theta_\star} + o_p(1))(\hat{\theta}_n - \theta_\star)$. 
\begin{align*}
    \nabla L_{\IPI}(\theta_\star;  \hat{\lambda}) -\nabla L_{\IPI}(\theta_\star;  {\lambda})
    &= \frac{1}{R}\sum_{r =1}^R (\hat{\lambda}_r - \lambda_r) \left(\mathbb{P}_{\tilde{N}_r} [\nabla \ell(f(O); \theta_\star)] - \mathbb{P}_n [\nabla \ell(f(O \circ m_r); \theta_\star)]\right)
\end{align*}
is $o_p(1/\sqrt{n})$ if for each $r \in [R]$, $\mathbb{P}_{\tilde{N}_r} [\nabla \ell(f(O); \theta_\star)] - \mathbb{P}_n [\nabla \ell(f(O \circ m_r); \theta_\star)] = O_p(1/\sqrt{n})$, which is guaranteed by MCAR and conditional Lindeberg CLT.
\end{proof}

The proof of Corollary \ref{cor:ipi_clt_mcar} follows immediately from Theorem \ref{thm:ipi_clt_mcar}.
\begin{proof}(of Corollary \ref{cor:ipi_clt_mcar}).
    \begin{align*}
        \lim_{n \rightarrow \infty} \mathbb{P} (\theta_\star \in \mathcal{C}_{\IPI, \alpha}) = \lim_{n \rightarrow \infty} \mathbb{P}\left(n(\hat{\theta}_{\IPI, \hat{\lambda}} - \theta_\star)^\top \Sigma^{-1} (\hat{\theta}_{\IPI, \hat{\lambda}} - \theta_\star) \leq \chi_{p, 1-\alpha}^2\right) = 1-\alpha,
    \end{align*}
    by definition of convergence in distribution.
\end{proof}

\section{CIPI asymptotic normality and confidence interval validity}\label{app:CIPI}

In this section, we extend our results to cross-fitting. To accommodate the more general setting, we slightly refine the assumptions underlying IPI validity. We now require the regularity conditions to hold for every cross-fitted imputation function $f^{(k)}$ and the long-run average imputation $\bar{f}$ (Assumption \ref{assump:cipi_regularity}). The conditional CLT condition holds also for $f^{(k)}$ and $\bar{f}$ (Assumption \ref{assump:cond_lindeberg_cipi}).

\begin{assumption}[Loss regularity conditions for CIPI]\label{assump:cipi_regularity} We state the regularity conditions for the loss under cross-fitting.
    \begin{enumerate}
        \item $\ell(X; \theta)$ is a convex function of $\theta$ $P_X$-a.e.
        
        \item The Hessian $\nabla^2 \ell$ is locally Lipschitz  in the sense of Assumption \ref{assump:regularity} for all $r \in [R]$ and any imputation function $f^{(k)}$ and average $\bar{f}$. In particular, we have square-integrable Lipschitz bounds for each of $f^{(k)}$ and $\bar{f}$.

        \item $H_{\theta_\star} := \mathbb{E} \left[\nabla^2 \ell(X; \theta_\star)\right] \succ 0$. 
        
        For each $r \in [R]$ and $\tilde{r} \in \{0, r\}$, $\mathbb{E}\left[\|\nabla^2 \ell(f^{(1)}(O \circ m_r);\theta_\star)\||f^{(1)}, M= m_{\tilde{r}}\right] = O_p(1)$.
    \end{enumerate}
\end{assumption}
\begin{assumption}[Conditional Lindeberg conditions for CIPI]\label{assump:cond_lindeberg_cipi} Here, we provide the regularity conditions for asymptotic normality of the estimator that allow us to construct asymptotic level $1-\alpha$ confidence intervals when using cross-fitting.

\begin{itemize}
    \item The conditional Lindeberg condition (Assumption \ref{assump:cond_lindeberg_mloss}) holds for $\bar{f}$.
    \item For each of $f^{(k)}$, $k \in [K]$, the corresponding conditional Lindeberg condition holds. In particular, let $\mathcal{F}_{N}^{M,k} := \sigma (\{M_i\}_{i =1}^N, f^{(k)})$ be the $\sigma$-algebra generated from the set of observed missing patterns and cross-fitted imputer $f^{(k)}$. Define for each $r \in [R]$, $k \in [K]$ with the $k$th fold denoted $\mathcal{I}_k$,
    \begin{align*}
    W_{N, i}^{r,k} &:= \begin{cases}
        \frac{1}{\tilde{N}_{r,k}}\bm{1}\{\tilde{N}_{r,k} > 0\}\nabla \ell(f^{(k)}(O_i); \theta), &\text{\em for } M_i = m_r, i \in \mathcal{I}_k \\
        - \frac{1}{n_k} \bm{1}\{\tilde{N}_{r,k} > 0\}\nabla \ell( f^{(k)}(O_{i}\circ m_r); \theta), & \text{\em for } M_i = \bm{1}_d, i \in \mathcal{I}_k\\
        0 & \text{\em otherwise}
     \end{cases}
    \end{align*}
    Note that given $\mathcal{F}_M^{M, k}$, $W_{N, i}^{r,k}$ are conditionally independent.
    
    Following the presentation in Assumption \ref{assump:cond_lindeberg_mloss}, we note 
    \[
        \mathbb{P}_{\tilde{N}_{r,k}}[\nabla \ell(f^{(k)}(O); \theta)] - \mathbb{P}_{n_k}[\nabla \ell(f^{(k)}(O \circ m_r); \theta)] = \sum_{i =1}^N W_{N,i}^{r,k},
    \]
    and denote the conditional means and variances as:
    \begin{align*}
        \mu^r_{N, i, \mathcal{F}^{M,k}} &:= \mathbb{E}[W_{N, i}^{r,k}|\mathcal{F}^{M,k}] = \mathbb{E}[W_{N, i}^{r,k} | \mathcal{F}_N^{M,k}], &\mu^r_{N, \mathcal{F}^{M,k}} := \sum_{i = 1}^{N} \mu^r_{N, i, \mathcal{F}^{M,k}}, \\
        \Sigma^r_{N, i, \mathcal{F}^{M,k}} &:=\Var(W^{r,k}_{N, i}|\mathcal{F}^{M,k}) < \infty \,\, \text{\em  a.e.},
        &S_{N, r, \mathcal{F}^{M,k}} := \sum_{i = 1}^{N} \Sigma^r_{N, i, \mathcal{F}^{M,k}} \succ 0.
    \end{align*}
    Then, the conditional Lindeberg condition for $r$ and $k$ is given by
    \begin{equation}
    \sum_{i = 1}^{N} \mathbb{E}\left[\left\|{S^{-1/2}_{N, r, \mathcal{F}^{M,k}}}(W^{r,k}_{N, i} - \mu^r_{N, i, \mathcal{F}^{M, k}})\right\|^2\mathbf{1}\{\|S^{-1/2}_{N,r,  \mathcal{F}^{M,k}}(W_{N, i}^{r,k} - \mu^r_{N, i, \mathcal{F}^{M,k}})\| > \varepsilon \}\bigg| \mathcal{F}^{M,k}\right] \overset{p}{\rightarrow} 0.
    \end{equation}
\end{itemize}
\end{assumption}

\begin{theorem}[CIPI CLT under MCAR]\label{thm:cipi_clt_mcar} Assume MCAR, regularity (Assumption \ref{assump:cipi_regularity}), stability (Assumption \ref{assump:stability assumptions}), and the conditional Lindeberg condition (Assumption \ref{assump:cond_lindeberg_cipi}). Then, if $\hat{\lambda} = \lambda + o_p(1)$, 
\begin{align*}
    \sqrt{n} \left(\hat{\theta}_{\CIPI, \hat{\lambda}} - \theta_\star\right) \overset{d}{\rightarrow} \mathcal{N}(0, \bar{\Sigma}),
\end{align*}
where
\begin{equation}
    \bar{\Sigma} := H_{\theta_\star}^{-1} \bar{V}_{\theta_\star, \lambda} H_{\theta_\star}^{-1},
\end{equation}
and
\begin{multline}\label{eq:cipi_asymp_var}
    \bar{V}_{\theta_\star, \lambda}
    := \Var \left(\nabla \ell(X; {\theta_\star}) - \frac{1}{R}\sum_{r=1}^R \lambda_r \nabla \ell(\bar{f}(O\circ m_r); {\theta_\star})\bigg| M = \bm{1}_d\right)\\
    + \sum_{r=1}^R \left(\frac{\lambda_r}{R}\right)^2 \frac{p_0}{p_r} \Var\left(\nabla \ell(\bar{f}(O); {\theta_\star})\mid M = m_r\right).
\end{multline}
\end{theorem}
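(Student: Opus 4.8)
The plan is to reproduce the one-step-estimator argument used for Theorem \ref{thm:ipi_clt_mcar}, starting from the decomposition
\begin{align*}
\hat{\theta}_{\CIPI,\lambda} - \theta_\star
&= (\hat{\theta}_n - \theta_\star) - \hat{H}_n^{-1}\bigl(\nabla L_{\CIPI}(\hat{\theta}_n;\lambda) - \nabla L_{\CIPI}(\theta_\star;\lambda)\bigr) - \hat{H}_n^{-1}\nabla L_{\CIPI}(\theta_\star;\lambda),
\end{align*}
and then treating each piece. The linearization term is handled as in the IPI proof: the local-Lipschitz Hessian bounds in Assumption \ref{assump:cipi_regularity} give $\nabla L_{\CIPI}(\hat{\theta}_n;\lambda) - \nabla L_{\CIPI}(\theta_\star;\lambda) = \bigl(\nabla^2 L_{\CIPI}(\theta_\star;\lambda) + O_p(\|\hat{\theta}_n - \theta_\star\|)\bigr)(\hat{\theta}_n - \theta_\star)$, so it suffices to show $\nabla^2 L_{\CIPI}(\theta_\star;\lambda)\overset{p}{\to}H_{\theta_\star}$. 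This is where MCAR enters: for each fold $k$ and pattern $r$, the masked complete-data law $O\circ m_r\mid M=\mathbf{1}_d$ equals the law $O\mid M=m_r$, so conditionally on $f^{(k)}$ the imputed-Hessian terms $\mathbb{P}_{\tilde{N}_{r,k}}[\nabla^2\ell(f^{(k)}(O);\theta_\star)]$ and $\mathbb{P}_{n_k}[\nabla^2\ell(f^{(k)}(O\circ m_r);\theta_\star)]$ have identical conditional means; combining this cancellation with the $O_p(1)$ integrability bound in Assumption \ref{assump:cipi_regularity} and a conditional law of large numbers yields $\nabla^2 L_{\CIPI}(\theta_\star;\lambda)\overset{p}{\to}\mathbb{E}[\nabla^2\ell(X;\theta_\star)]=H_{\theta_\star}$. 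Together with $\sqrt{n}$-consistency of $\hat{\theta}_n$ and $\hat{H}_n\overset{p}{\to}H_{\theta_\star}$ (standard, by continuity plus the LLN on the complete cases), the linearization term cancels the leading $(\hat{\theta}_n-\theta_\star)$ up to $o_p(1/\sqrt{n})$.

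The heart of the proof is the central limit theorem for $\sqrt{n}\,\nabla L_{\CIPI}(\theta_\star;\lambda)$. I would introduce the \emph{oracle} gradient $\nabla\tilde{L}_{\CIPI}(\theta_\star;\lambda)$ obtained by replacing every cross-fitted imputer $f^{(k)}$ with the average imputer $\bar{f}$. Up to asymptotically negligible terms arising from unequal fold sizes (the per-fold counts satisfy $\tilde{N}_{r,k}/(\tilde{N}_r/K)\overset{p}{\to}1$ and $n_k/(n/K)\overset{p}{\to}1$), $\nabla\tilde{L}_{\CIPI}(\theta_\star;\lambda)$ pools back into exactly the IPI gradient $\nabla L_{\IPI}(\theta_\star;\lambda)$ built from $\bar{f}$. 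Hence Lemma \ref{lemma:grad_lindeberg_clt}, applied with the imputer $\bar{f}$ (invoking the first bullet of Assumption \ref{assump:cond_lindeberg_cipi} and MCAR), gives $\sqrt{n}\,\nabla\tilde{L}_{\CIPI}(\theta_\star;\lambda)\overset{d}{\to}\mathcal{N}(0,\bar{V}_{\theta_\star,\lambda})$ with $\bar{V}_{\theta_\star,\lambda}$ as in \eqref{eq:cipi_asymp_var}. It then remains to show $\sqrt{n}\bigl(\nabla L_{\CIPI}(\theta_\star;\lambda) - \nabla\tilde{L}_{\CIPI}(\theta_\star;\lambda)\bigr)=o_p(1)$ and conclude by Slutsky.

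This last display is precisely where the stability hypothesis (Assumption \ref{assump:stability assumptions}) is used, and I expect it to be the main obstacle. The difference equals $\tfrac{1}{KR}\sum_{k}\sum_{r}\lambda_r\bigl(\mathbb{P}_{\tilde{N}_{r,k}}[\nabla\ell(f^{(k)}(O);\theta_\star)-\nabla\ell(\bar{f}(O);\theta_\star)] - \mathbb{P}_{n_k}[\nabla\ell(f^{(k)}(O\circ m_r);\theta_\star)-\nabla\ell(\bar{f}(O\circ m_r);\theta_\star)]\bigr)$. Conditioning on $f^{(k)}$ — which is independent of the fold-$k$ data — and using MCAR so the two conditional means coincide, each summand has conditional mean zero, and its conditional variance is a $1/\tilde{N}_{r,k}$ term plus a $1/n_k$ term, each multiplying one of the two variances appearing in Assumption \ref{assump:stability assumptions}. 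Since $\tilde{N}_{r,k}\asymp\tilde{N}_r/K\asymp N/K$ and $n_k\asymp n/K\asymp N/K$, multiplying through by $n$ converts the $1/\tilde{N}_{r,k}$ factor into $\asymp K$, which the $\sqrt{K}$-scaled $L^1$-vanishing of those variances exactly absorbs; thus $n\cdot\mathbb{E}[(\text{summand})^2\mid f^{(k)},\mathcal{F}^M]\overset{L^1}{\to}0$, and taking unconditional expectations of this nonnegative quantity gives $\sqrt{n}\,(\text{difference})\overset{p}{\to}0$ by Markov. Assembling the three pieces yields $\sqrt{n}(\hat{\theta}_{\CIPI,\lambda}-\theta_\star) = -\sqrt{n}\,H_{\theta_\star}^{-1}\nabla\tilde{L}_{\CIPI}(\theta_\star;\lambda)+o_p(1)\overset{d}{\to}\mathcal{N}(0,\bar{\Sigma})$ with $\bar{\Sigma}=H_{\theta_\star}^{-1}\bar{V}_{\theta_\star,\lambda}H_{\theta_\star}^{-1}$. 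Finally, to pass from a fixed $\lambda$ to a consistent $\hat{\lambda}=\lambda+o_p(1)$, I would add and subtract $\nabla L_{\CIPI}(\theta_\star;\lambda)$ and exploit linearity of $L_{\CIPI}$ in $\lambda$ together with the fact (a byproduct of the CLT just proved) that $\mathbb{P}_{\tilde{N}_{r,k}}[\nabla\ell(f^{(k)}(O);\theta_\star)]-\mathbb{P}_{n_k}[\nabla\ell(f^{(k)}(O\circ m_r);\theta_\star)]=O_p(1/\sqrt{n})$, so that the extra term is $o_p(1)\cdot O_p(1/\sqrt{n})=o_p(1/\sqrt{n})$, exactly as in the IPI proof.
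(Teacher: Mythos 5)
Your overall architecture matches the paper's proof: the same one-step decomposition, linearization of $\nabla L_{\CIPI}$ via the local-Lipschitz Hessian bounds and MCAR, replacement of the cross-fitted imputers by the average imputer $\bar{f}$ so that the conditional Lindeberg CLT (first bullet of Assumption \ref{assump:cond_lindeberg_cipi}) applied to $\nabla L_{\IPI}^{\bar f}(\theta_\star;\lambda)$ delivers $\bar{V}_{\theta_\star,\lambda}$, and the same treatment of $\hat{\lambda}$ versus $\lambda$. The one place where your argument does not go through as written is the key remainder bound $\sqrt{n}\bigl(\nabla L_{\CIPI}(\theta_\star;\lambda)-\nabla \tilde L_{\CIPI}(\theta_\star;\lambda)\bigr)=o_p(1)$. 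You claim that after scaling by $n$ the conditional second moment of each summand, which is of order $K\Var\bigl(\nabla\ell(f^{(k)}(O\circ m_r);\theta_\star)-\nabla\ell(\bar f(O\circ m_r);\theta_\star)\mid M=m_{\tilde r},f^{(k)}\bigr)$, converges to zero in $L^1$, and then you conclude by Markov. But Assumption \ref{assump:stability assumptions} only gives $\mathbb{E}\bigl[\sqrt{K\Var(\cdot)}\bigr]\to 0$; this does not imply $\mathbb{E}\bigl[K\Var(\cdot)\bigr]\to 0$ (take $Z_n=K\Var$ equal to $n$ with probability $1/n$ and $0$ otherwise: $\mathbb{E}[\sqrt{Z_n}]\to 0$ while $\mathbb{E}[Z_n]=1$). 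So the unconditional $L^2$ statement you invoke is strictly stronger than the hypothesis, and the Markov step collapses.

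The conclusion you want is still true, and the fix is exactly the truncation device the paper uses (borrowed from the cross-fitting literature): with $\psi(x)=\min(1,x)$, one has $X_n=o_p(1)$ iff $\mathbb{E}[\psi(|X_n|)]\to 0$; using monotonicity, subadditivity, and concavity of $\psi$ together with a conditional Cauchy--Schwarz step, the quantity to control reduces to $\mathbb{E}\bigl[\min\bigl(K,\sqrt{K\Var(\cdot\mid f^{(k)},M=m_{\tilde r})}\bigr)\bigr]$, which vanishes precisely under the $L^1$ condition of Assumption \ref{assump:stability assumptions}. Equivalently, you could keep your conditional-variance computation but replace the final Markov step by conditional Chebyshev, which bounds the conditional exceedance probability by $\min\bigl(1,\,cK\Var(\cdot)/\varepsilon^2\bigr)$; since $\sqrt{K\Var}\overset{L^1}{\to}0$ implies $K\Var\overset{p}{\to}0$, bounded convergence then gives the unconditional probability bound. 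Either patch restores your argument, after which your assembly of the pieces and the Slutsky conclusion coincide with the paper's proof.
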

\begin{proof}
Like in the IPI setting, we consider the following quantity:
\begin{align}\label{eq:cipi-decomp1}
    \hat{\theta}_{\CIPI, \hat{\lambda}} - \theta_\star
    &= \hat{\theta}_n - \theta_\star - \hat{H}_n^{-1} \nabla L_\CIPI(\hat{\theta}_n; \hat{\lambda})\\
    &= -\hat{H}_n^{-1}\nabla L_\IPI^{\bar{f}}(\theta_\star; \lambda) + \underbrace{\hat{\theta}_n - \theta_\star - \hat{H}_n^{-1} \left(\nabla L_\CIPI(\hat{\theta}_n; \hat{\lambda})- \nabla L_\IPI^{\bar{f}}(\theta_\star; \lambda)\right)}_{\text{we will show: }o_p(1/\sqrt{n})},\nonumber
\end{align}
where $L_{\IPI}^{\bar{f}}$ is the IPI loss with imputation algorithm $\bar{f}$. 

For convenience of presentation, assume $n$ and $\tilde{N}_r$ are divisible by $K$. Let $\mathbb{P}_{n_k}g$ denote the empirical mean of a function $g$ over fully observed samples in fold $k$, and $\mathbb{P}_{\tilde{N}_{r,k}}g$ denote the empirical mean of function $g$ over the samples in fold $k$ with missing pattern $r$.

With this notation in hand, consider: 

\begin{align}\label{eq:cipi-decomp2}
    \nabla L_\CIPI (\hat{\theta}_n; \hat{\lambda}) - \nabla L_\IPI^{\bar{f}}({\theta}_\star; {\lambda}) 
    &= \nabla L_\CIPI(\hat{\theta}_n; \hat{\lambda}) - \nabla L_\CIPI(\theta_\star; \hat{\lambda})\\
    &\hspace{3mm} + \nabla L_\CIPI(\theta_\star; \hat{\lambda}) - \nabla L_\CIPI(\theta_\star; \lambda)\nonumber\\
    &\hspace{3mm} + \nabla L_\CIPI(\theta_\star; \lambda) - \nabla L_\IPI^{\bar{f}}({\theta}_\star; {\lambda}).\nonumber
\end{align}

For the first term, since $\hat{\theta}_n$ is $\sqrt{n}$-consistent for $\theta_\star$, by local Lipschitzness of the Hessian and MCAR, for each $k \in [K]$,
\begin{align*}
    \nabla L_\IPI^{(k)}(\hat{\theta}_n; \hat{\lambda}) - \nabla L_\IPI^{(k)}(\theta_\star; \hat{\lambda})
    &= \left(\mathbb{P}_{n_k} \left[\nabla^2 \ell(X; \theta_\star) - \frac{1}{R} \sum_{r =1}^R\hat{\lambda}_r \nabla^2 \ell(f^{(k)}(O \circ m_r); \theta_\star)\right]  + o_p(1)\right) (\hat{\theta}_n - \theta_\star)\\
    &\hspace{5mm} + \frac{1}{R} \sum_{r =1}^R \hat{\lambda}_r \left(\mathbb{P}_{\tilde{N}_{r,k}}\left[\nabla^2 \ell( f^{(k)}(O); \theta_\star)\right] + o_p(1)\right)(\hat{\theta}_n - \theta_\star)\\
    &= H_{\theta_\star} (\hat{\theta}_n - \theta_\star) + o_p(1/\sqrt{n}) .
\end{align*}

Thus, 
\[
 \nabla L_\CIPI(\hat{\theta}_n; \hat{\lambda}) - \nabla  L_\CIPI({\theta}_\star; \hat{\lambda}) = H_{\theta_\star}(\hat{\theta}_n - \theta_\star) + o_p(1/\sqrt{n}).
\]

Similar to the proof for Theorem \ref{thm:ipi_clt_mcar} in Section \ref{app:proof_ipi_clt_mcar}, we show that the second term is $o_p(1/\sqrt{n})$ by MCAR. In particular, for each $k \in [K]$,

\begin{align*}
    \nabla L_\IPI^{(k)}(\theta_\star; \hat{\lambda}) - \nabla L_\IPI^{(k)} (\theta_\star; \lambda)
    &= \frac{1}{R}\sum_{r =1}^R o_p(1) \underbrace{\left(\mathbb{P}_{\tilde{N}_{r,k}}\left[\nabla \ell(f^{(k)}(O); \theta_\star)\right] - \mathbb{P}_{n_k} \left[\nabla \ell(f^{(k)}(O \circ m_r); \theta_\star)\right]\right)}_{O_p(1/\sqrt{n})}\\ &= o_p(1/\sqrt{n}).
\end{align*}

Lastly, to show the last term is $o_p(1/\sqrt{n})$, we adapt proofs from \citet{zrnic_cppi} and \cite{bayle2020cross}. In particular, 
\begin{align*}
    &\frac{1}{K}\sum_{k =1}^K \left( \nabla L_\IPI^{(k)}(\theta_\star; {\lambda}) - \nabla L_\IPI^{\bar{f}}(\theta_\star; {\lambda})\right)\\ 
    &= \frac{1}{R}\sum_{r =1}^R {\lambda}_r \bigg[\frac{1}{K}\sum_{k =1}^K\left(\mathbb{P}_{\tilde{N}_{r,k}}\left[\nabla \ell(f^{(k)}(O); \theta_\star) - \nabla \ell(\bar{f}(O); \theta_\star)\right] \right)\\
    &\hspace{20mm} -  \frac{1}{K}\sum_{k =1}^K\left(\mathbb{P}_{n_k}\left[\nabla \ell(f^{(k)}(O \circ m_r); \theta_\star) - \nabla \ell(\bar{f}(O \circ m_r); \theta_\star)\right] \right)\bigg].
\end{align*}
We will now show that for each $r$, the centered term within the bracket is $o_p(1/\sqrt{n})$.

\begin{lemma}\label{lemm:cross-fit-partial}
    Under MCAR and Assumption \ref{assump:stability assumptions}, for each $r \in [R]$, 
    \begin{align*}
        \frac{1}{K}\sum_{k =1}^K \bigg(&\mathbb{P}_{\tilde{N}_{r,k}} \left[\nabla \ell\left(f^{(k)}(O); \theta_\star\right) - \nabla \ell(\bar{f}(O); \theta_\star)\right] \\
        &\hspace{3mm} - \left(\mathbb{E}\left[\nabla \ell (f^{(k)}(O); \theta_\star) -\nabla \ell(\bar{f}(O); \theta_\star) | f^{(k)}, M = m_r\right]\right) \bigg) = o_p(1/\sqrt{n}). 
    \end{align*}
\end{lemma}
\begin{proof}
    For ease of notation, let 
    \begin{align*}
        \tilde{F}_{r,k} := \sqrt{\tilde{N}_r}\bigg(&\mathbb{P}_{\tilde{N}_{r,k}} \left[\nabla \ell\left(f^{(k)}(O); \theta_\star\right) - \nabla \ell(\bar{f}(O); \theta_\star)\right] \\
        &\hspace{3mm} - \left(\mathbb{E}\left[\nabla \ell (f^{(k)}(O); \theta_\star) -\nabla \ell(\bar{f}(O); \theta_\star) \big| f^{(k)}, M = m_r\right]\right) \bigg). 
    \end{align*}

    Then, since $\sqrt{{n}/{\tilde{N}_r}} = O_p(1)$, we aim to show: $\frac{1}{K}\sum_{k =1}^K \tilde{F}_{r,k} = o_p(1)$. Like in \citet{zrnic_cppi}, we use the following fact proved in \citet{bayle2020cross} Lemma 2.
    \begin{fact}
    Let $\psi(x) = \min(1,x).$ Then, for a sequence of random variables $X_n$, $X_n = o_p(1)$ if and only if $\mathbb{E}\left[\psi(|X_n|)\right] = o(1)$.      
    \end{fact}

    Thus, we aim to show $\mathbb{E} \left[\psi \left(\left|\frac{1}{K}\sum_{k =1}^K \tilde{F}_{r,k}\right|\right)\right] \rightarrow 0$. First, we leverage the fact that (i) $\psi$ is monotone increasing; (ii) $\psi(\sum_k |x_k|) \leq \sum_k \psi(|x_k|)$; and (iii) $\psi$ is concave.

    \begin{align*}  \mathbb{E}\left[\psi\left(\left|\frac{1}{K}\sum_{k=1}^K \tilde{F}_{r,k}\right|\right)\right] \leq \sum_{k =1}^K \mathbb{E}\left[\psi\left(\frac{1}{K}|\tilde{F}_{r,k}|\right)\right]
        \leq  \sum_{k =1}^K \mathbb{E}\left[\psi\left(\frac{1}{K}\mathbb{E}\left[|\tilde{F}_{r,k}|\big|f^{(k)}, \mathcal{F}^M\right]\right)\right].
    \end{align*}
    Then, by Cauchy-Schwarz,
    \begin{align*}
    \mathbb{E}\left[\psi\left(\left|\frac{1}{K}\sum_{k=1}^K \tilde{F}_{r,k}\right|\right)\right] 
        &\leq \sum_{k =1}^K \mathbb{E} \left[\psi\left(\frac{1}{K}\sqrt{\mathbb{E}\left[|\tilde{F}_{r,k}|^2\big|f^{(k)}, \mathcal{F}^M\right]}\right)\right]\\
        &= \sum_{k=1}^K \mathbb{E}\left[\psi \left(\frac{1}{K} \sqrt{\Var(\tilde{F}_{r,k}\big|f^{(k)}, \mathcal{F}^M)}\right)\right]\\
        &= \sum_{k=1}^K \mathbb{E} \left[\psi \left(\frac{1}{K} \sqrt{K \Var\left(\nabla \ell(f^{(k)}(O); \theta_\star) - \nabla \ell(\bar{f}(O); \theta_\star)\big| f^{(k)}, M = m_r\right)}\right)\right]\\
        &= \mathbb{E} \left[\min\left(K, \sqrt{K \Var\left(\nabla \ell(f^{(k)}(O); \theta_\star) - \nabla \ell(\bar{f}(O); \theta_\star)\big| f^{(k)}, M = m_r\right)}\right)\right] \rightarrow 0,
    \end{align*}
    where the last line follows from Assumption \ref{assump:stability assumptions}.  
\end{proof}

The behavior for the analogous term on the fully observed dataset is argued in exactly the same way.
\begin{lemma}\label{lemm:cross-fit-fullyobs}
    Under MCAR and Assumption \ref{assump:stability assumptions}, for each $r \in [R]$, 
    \begin{align*}
        \frac{1}{K}\sum_{k =1}^K \bigg(&\mathbb{P}_{n} \left[\nabla \ell\left(f^{(k)}(O \circ m_r); \theta_\star\right) - \nabla \ell(\bar{f}(O \circ m_r); \theta_\star)\right] \\
        &\hspace{3mm} - \left(\mathbb{E}\left[\nabla \ell (f^{(k)}(O \circ m_r); \theta_\star) -\nabla \ell(\bar{f}(O \circ m_r); \theta_\star) | f^{(k)}, M = \bm{1}_d\right]\right) \bigg) = o_p(1/\sqrt{n}). 
    \end{align*}
\end{lemma}
\begin{proof}
    One can repeat the steps for Lemma \ref{lemm:cross-fit-partial} to prove Lemma \ref{lemm:cross-fit-fullyobs}.
\end{proof}

Since MCAR implies
\begin{multline*}
    \frac{1}{K} \sum_{k =1}^K \mathbb{E}\left[\nabla \ell (f^{(k)}(O); \theta_\star) -\nabla \ell(\bar{f}(O); \theta_\star) | f^{(k)}, M = m_r\right]\\
    = \frac{1}{K}\sum_{k=1}^K\mathbb{E}\left[\nabla \ell (f^{(k)}(O \circ m_r); \theta_\star) -\nabla \ell(\bar{f}(O \circ m_r); \theta_\star) | f^{(k)}, M = \bm{1}_d\right],
\end{multline*}
we have
\begin{align*}
    \frac{1}{K}\sum_{k =1}^K \left( \nabla L_\IPI^{(k)}(\theta_\star; {\lambda}) - \nabla L_\IPI^{\bar{f}}(\theta_\star; {\lambda})\right) = \frac{1}{R}\sum_{r =1}^R {\lambda}_r \left( o_p(1/\sqrt{n}) + 0\right)  = o_p(1/\sqrt{n}).
\end{align*}

Thus, putting all the components together, 
\begin{align*}
    \hat{\theta}_{\CIPI, \hat{\lambda}} - \theta_\star = -H_{\theta_\star}^{-1}\nabla L_\IPI^{\bar{f}}(\theta_\star; \lambda)  + o_p(1/\sqrt{n}).
\end{align*}

The asymptotic behavior of the CIPI estimator reduces to that of the IPI estimator at the average imputer $\bar{f}$. Conditional CLT gives our result.
\end{proof}

As in the IPI case, the confidence interval validity in Corollary \ref{cor:cipi_ci} follows immediately from the asymptotic normality of the estimator and definition of convergence in distribution.

\subsection{Bootstrap variance estimates}\label{app:boot_var_est}
Here, we discuss how we estimate $\bar{V}_{\theta_\star, \lambda}$. To do so, we draw inspiration from \citet{zrnic_cppi} and propose a bootstrap variance approach given in Algorithm \ref{alg:bootstrap_var}. In particular, for each bootstrap trial $b \in [B]$, we train an imputation function $f^{(b)}$ on a bootstrapped sample of size $N(K-1)/K$. Then, for each sample, we approximate $\bar{f}$ by the average imputation over all bootstrapped $f^{(b)}$ that the sample was \textit{not} trained on.

\begin{algorithm}[H]
\caption{Bootstrap variance estimate}
\label{alg:bootstrap_var}
\begin{algorithmic}[1]
\Require
    Observations $\{O_i=((X_i,U_i)\circ M_i)\}_{i=1}^{N}$
\Ensure Estimate of $\bar{V}_{\theta_\star, \lambda}$ 

\smallskip
\For{$b$ in $1$, ...., $B$}
    \State Draw $\mathcal I_b\subseteq[N]$ bootstrap sample of size $\lfloor N(K-1)/K\rfloor$
    \State Fit imputer $f^{(b)}$ on $\{O_i^{*b}:i\in\mathcal I_b\}$
\EndFor

\smallskip
\Statex \textbf{// — Predictions for \emph{fully} observed rows —}
\For{$j$ with $M_j = \bm{1}_d$ and $r$ in $1, ..., R$}
    \State $\bar{f}^{(*B)}(O_j \circ m_r) = \text{mean}\left\{f^{(b)}(O_j \circ m_r): b\in [B] \text{ s.t. }j \notin \mathcal{I}_b \right\}$
\EndFor

\smallskip
\Statex \textbf{// — Predictions for \emph{partially} observed rows —}
\For{$i$ with $M_i \neq \bm{1}_d$}
    \State $\bar{f}^{(*B)}(O_i) = \text{mean}\left\{f^{(b)}(O_i) : b\in [B] \text{ s.t. }i \notin \mathcal{I}_b \right\}$
\EndFor

\State \Return
       \begin{align*}
           &\widehat{\Var}_n \left(\nabla \ell(X; \theta_\star) - \frac{1}{R}\sum_{r =1}^R \hat{\lambda}_r \nabla \ell(\bar{f}^{(*B)}(O \circ m_r); \theta_\star)\right) + \sum_{r=1}^R \left(\frac{\hat{\lambda}_r}{R}\right)^2 \frac{n}{\tilde{N}_r}\widehat{\Var}_{\tilde{N}_r}\left(\nabla \ell(\bar{f}^{(*B)}(O); \theta_\star)\right)
       \end{align*}
\end{algorithmic}
\end{algorithm}

\section{Power tuning}\label{app:Tuning}

In this section, we describe how to choose $\lambda$ so that the asymptotic variance of the IPI estimator is minimized.

For clarity, let us focus on the case where we are estimating one entry of $\theta_\star$, say $(\theta_\star)_1$. We can then choose $\lambda$ as 
\begin{align*}
    \lambda := \argmin_{\lambda} \Sigma_{11}, \quad \Sigma = H_{\theta_\star}^{-1} V_{\theta_\star, \lambda} H_{\theta_\star}^{-1}, 
\end{align*}
and recall Eq.~\eqref{eq:ipi_loss_asymp_var}. 

Thus, our objective is:
\begin{align*}
g(\lambda)
&= \sum_{r = 1}^R \left(\frac{\lambda_r}{R}\right)^2 \frac{1}{\tilde{N}_r} \left(H_{\theta_\star}^{-1} {\Var}\left(\nabla \ell(f(O); \theta_\star)| M = m_r\right)H_{\theta_\star}^{-1}\right)_{11}\\ &\hspace{3mm} + \frac{1}{n} \left(H_{\theta_\star}^{-1} \Var\left(\nabla \ell(X; \theta_\star) - \frac{1}{R}\sum_{r= 1}^R \lambda_r \nabla \ell(f(O \circ m_r); \theta_\star) \bigg| M = \bm{1}_d \right)H_{\theta_\star}^{-1} \right)_{11}.
\end{align*}

Note that $g(\lambda)$ is quadratic in $\lambda$.   To make that explicit, introduce
\begin{align*}
A &:=\operatorname{Diag}\left[\frac{1}{\tilde{N}_r} \left(H_{\theta_\star}^{-1} {\Var}\left(\nabla \ell(f(O); \theta_\star)| M = m_r\right)H_{\theta_\star}^{-1}\right)_{11}\right]_{r \in [R]}\\
C_{rr'} &:=\frac{1}{n}
           \left(H_{\theta_\star}^{-1}
           \Cov\left(\nabla\ell(f(O \circ m_r);\theta_\star),
                \nabla\ell(f(O \circ m_{r'});\theta_\star) | M = \bm{1}_d\right)
           H_{\theta_\star}^{-1}\right)_{11},
\\
b_r &:=\frac{1}{n}
        \left(H_{\theta_\star}^{-1}
        \Cov\left(\nabla\ell(X;\theta_\star),
             \nabla\ell(f(O \circ m_r);\theta_\star)\big| M = \bm{1}_d\right)
        H_{\theta_\star}^{-1}\right)_{11}.
\end{align*}
With these notations in hand, note
\[
g(\lambda)
  \;=\;
  \frac{1}{R^2} \lambda^\top \Bigl(A+C\Bigr)\lambda  - 2 \frac{1}{R}b^\top\lambda
  \;+\;\text{const}.
\]

Thus, the optimal power-tuning parameter has a closed-form solution
\begin{align*}
    \lambda_\star
    &= \left(\frac{1}{R}\left(A+C\right)\right)^{-1}b.
\end{align*}

We estimate each of the above expressions with their empirical counterparts to get $\hat{\lambda}_\star.$ In particular, 

\begin{align*}
    \hat{A} &:=\operatorname{Diag}\left[\frac{1}{\tilde{N}_r} \left(\hat{H}_n^{-1} \widehat{\Var}_{\tilde{N}_r}\left(\nabla \ell(f(O); \theta_\star)\right)\hat{H}_n^{-1}\right)_{11}\right]_{r \in [R]},\\
\hat{C}_{rr'} &:=\frac{1}{n} \left(\hat{H}_n^{-1} \widehat{\Cov}_n\left(\nabla\ell(f(O \circ m_r);\theta_\star), \nabla\ell(f(O \circ m_{r'});\theta_\star) \right) \hat{H}_n^{-1}\right)_{11},
\\
\hat{b}_r &:=\frac{1}{n}
        \left(\hat{H}_n^{-1}\widehat{\Cov}_n\left(\nabla\ell(X;\theta_\star), \nabla\ell(f(O \circ m_r);\theta_\star)\right)
        \hat{H}_n^{-1}\right)_{11},
\end{align*}
and $\hat{\lambda}_\star := \left(\frac{1}{R}(\hat{A}+\hat{C})\right)^{-1}\hat{b}$.

\begin{lemma} Under regularity assumptions (Assumption \ref{assump:regularity}), $\hat{\lambda}_\star - \lambda_* = o_p(1).$
\end{lemma}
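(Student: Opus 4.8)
The plan is to show that each of the empirical building blocks $\hat{A}$, $\hat{C}$, $\hat{b}$ converges in probability to its population counterpart $A$, $C$, $b$, and then invoke continuity of the map $(A,C,b)\mapsto (\tfrac{1}{R}(A+C))^{-1}b$ at the population values (which is justified because $A+C \succ 0$, since $C$ is a covariance matrix hence PSD and $A$ is diagonal with strictly positive entries by Assumption~\ref{assump:regularity}(3) together with positivity of the pattern probabilities and the assumption that $\tilde{N}_r>0$). The continuous mapping theorem then yields $\hat{\lambda}_\star = (\tfrac1R(\hat A + \hat C))^{-1}\hat b \overset{p}{\to} (\tfrac1R(A+C))^{-1}b = \lambda_\star$, which is the claim.

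The substance is therefore in the three consistency statements. First I would note that $\hat{H}_n = \mathbb{P}_n[\nabla^2 \ell(X;\hat\theta_n)] \overset{p}{\to} H_{\theta_\star}$: this uses $\sqrt{n}$-consistency of $\hat\theta_n$, the local Lipschitz property of the Hessian (Assumption~\ref{assump:regularity}(2)), and the law of large numbers, exactly as in the proof of Theorem~\ref{thm:ipi_clt_mcar}; invertibility of $H_{\theta_\star}$ (Assumption~\ref{assump:regularity}(3)) then gives $\hat H_n^{-1}\overset{p}{\to}H_{\theta_\star}^{-1}$ by continuity of matrix inversion. Next, for the variance and covariance pieces I would argue that the sample (co)variances of $\nabla\ell(X;\hat\theta_n)$ and $\nabla\ell(f(O\circ m_r);\hat\theta_n)$ over the complete-case rows, and of $\nabla\ell(f(O);\hat\theta_n)$ over the pattern-$r$ rows, converge to their population conditional (co)variances. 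This follows because: (i) under MCAR the relevant conditional distributions are just the marginal distributions; (ii) $\nabla\ell$ is continuous in $\theta$ with square-integrable envelope near $\theta_\star$ (consequence of Assumption~\ref{assump:regularity}(2)), so the plug-in of $\hat\theta_n$ in place of $\theta_\star$ contributes only $o_p(1)$; and (iii) $n,\tilde N_r\to\infty$, so a uniform law of large numbers over a neighborhood $\mathcal U$ of $\theta_\star$ applies. Assembling these with Slutsky/continuous mapping (products and sums of the consistent pieces, sandwiched by $\hat H_n^{-1}$) delivers $\hat A\overset{p}{\to}A$, $\hat C\overset{p}{\to}C$, $\hat b\overset{p}{\to}b$; note the $\tfrac{1}{\tilde N_r}$ and $\tfrac1n$ prefactors are the same deterministic (given $\mathcal F^M$) ratios appearing in $A$, $C$, $b$, so they cancel in the limit up to the $p_0/p_r$ factors, which are themselves consistently estimated by $n/\tilde N_r$.

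The main obstacle is the handling of the estimated plug-in $\hat\theta_n$ inside (co)variances of $\nabla\ell(f(O\circ m_r);\cdot)$: one must rule out that the blackbox imputer $f$ interacts badly with perturbations in $\theta$. This is controlled purely through the loss regularity on $\ell$ — the Lipschitz-in-$\theta$ control of $\nabla^2\ell$ from Assumption~\ref{assump:regularity}(2) applied at the (possibly imputed) argument gives a square-integrable modulus of continuity for $\nabla\ell$ in $\theta$ uniformly over $\mathcal U$ — so a standard empirical-process / uniform-LLN argument closes the gap without any assumption on the quality or specification of $f$. A minor secondary point is verifying $A+C\succ 0$ so that the inverse in $\lambda_\star$ is well defined and the map is continuous there; this is immediate from the structure noted above. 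Everything else is routine bookkeeping with Slutsky's theorem and the continuous mapping theorem.
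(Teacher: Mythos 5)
Your argument follows essentially the same route as the paper's proof: consistency of $\hat H_n^{-1}$ from local Lipschitzness of the Hessian together with $\hat\theta_n \overset{p}{\to} \theta_\star$, a law of large numbers for the sample variances/covariances (with the plug-in of $\hat\theta_n$ handled by the same regularity), and then the continuous mapping theorem applied to $(U,v)\mapsto U^{-1}v$ to conclude $\hat\lambda_\star \overset{p}{\to} \lambda_\star$; your treatment is, if anything, more detailed on the uniform-LLN step. The one overstatement is your claim that $A+C\succ 0$ follows from Assumption~\ref{assump:regularity}: that assumption only guarantees $H_{\theta_\star}\succ 0$ and does not rule out degenerate imputed-score variances (e.g., a near-constant imputer makes the corresponding diagonal entry of $A$ zero while $C$ is merely positive semidefinite), which is why the paper treats nonsingularity of $A+C$ as an additional mild hypothesis rather than a consequence of the stated assumptions.
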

\begin{proof}
By the local Lipschitz continuity of the Hessian in
Assumption~\ref{assump:regularity}, together with
\(\hat\theta_n \xrightarrow{p} \theta_\star\),
the sample inverse Hessian \(\hat H_n^{-1}\) is consistent for
\(H_{\theta_\star}^{-1}\).
The weak law of large numbers likewise makes every sample variance or
covariance—whether based on the \(n\) fully observed points or the
\(\tilde N_r\) points in pattern \(r\)—consistent for its population
analogue.
Because each entry of \(\hat A\), \(\hat C\), and \(\hat b\) is a continuous
function of these consistent quantities, we have
\(\hat A \xrightarrow{p} A\), \(\hat C \xrightarrow{p} C\), and
\(\hat b \xrightarrow{p} b\).
Finally, if \(A+C\) is nonsingular (invertible in a neighborhood of its
true value), the map \((U,v)\mapsto U^{-1}v\) is continuous; hence, by the
continuous-mapping theorem, \(\hat\lambda_\star \xrightarrow{p} \lambda_\star\).
\end{proof}

\section{IPI and CIPI under the MCAR first moment assumption}\label{app:away-mcar}
This section develops valid inference procedures for IPI and CIPI estimators under a relaxation of the MCAR assumption. Specifically, we adopt the MCAR first moment assumption (outlined in Assumption \ref{assump:MCAR_first_moment}), which allows for more general missingness mechanisms while still ensuring consistent and asymptotically normal estimators. We begin by analyzing the IPI estimator under this assumption, establishing its asymptotic distribution and diagnostic properties. We then extend these results to the CIPI estimator, incorporating fold-wise imputation and model stability conditions.

\subsection{General IPI}
Note that throughout the MCAR discussion, for ease of presentation, we let our consistent estimator of the Hessian $\hat{H}_n$ be the sample Hessian on the fully observed dataset evaluated at $\hat{\theta}_n$. Here, we consider a more general form of estimator.

\begin{definition}[IPI estimator, general]
    $\hat{\theta}_{\IPI, \hat{\lambda}} = \hat{\theta}_n - \hat{H}_n^{-1} \nabla L_\IPI (\hat{\theta}_n; \hat{\lambda})$, where $\hat{H}_n$ is consistent for 
    \begin{multline*} \tilde{J}_{\tilde{\theta}_\star} := \mathbb{E}\left[\nabla^2 \ell(X; \tilde{\theta}_\star) - \frac{1}{R}\sum_{r = 1}^R \lambda_r \nabla^2 \ell(f(O \circ m_r); \tilde{\theta}_\star) \bigg| M = \bm{1}_d\right]\\ + \frac{1}{R}\sum_{r =1}^R \lambda_r \mathbb{E}\left[\nabla^2 \ell(f(O); \tilde{\theta}_\star) \bigg| M = m_r\right].\end{multline*}
    We assume $\tilde{J}_{\tilde{\theta}_\star} \succ 0$. In the following exposition, we will take $ \hat{H}_n = \nabla^2 L_\IPI(\hat{\theta}_n; \hat{\lambda})$.
\end{definition}

\begin{theorem}\label{thm:general_ipi}
    Assume regularity conditions (Assumption \ref{assump:regularity}), the conditional Lindeberg for IPI (Assumption \ref{assump:cond_lindeberg_mloss}), and the MCAR first moment condition with $f$ (Assumption \ref{assump:MCAR_first_moment}). Then, for $\hat{\lambda} = \lambda + o_p(1)$,

    \begin{align*}
        \sqrt{n} (\hat{\theta}_{\IPI, \hat{\lambda}} - \tilde{\theta}_\star) \overset{d}{\rightarrow} \mathcal{N}(0, \Sigma),
    \quad \Sigma = \tilde{J}_{\tilde{\theta}_\star}^{-1} V_{\tilde{\theta}_\star, \lambda} \tilde{J}_{\tilde{\theta}_\star}^{-1}
    \end{align*}
    with 
    \begin{multline*}
        V_{\tilde{\theta}_\star, \lambda}
    := \operatorname{Var} \left(\nabla \ell(X; {\tilde{\theta}_\star}) - \frac{1}{R}\sum_{r=1}^R \lambda_r \nabla \ell(f(O\circ m_r); {\tilde{\theta}_\star})\Bigg|M = \bm{1}_d\right)\\
     + \sum_{r=1}^R \left(\frac{\lambda_r}{R}\right)^2 \frac{p_0}{p_r} \Var\left(\nabla \ell(f(O); {\tilde{\theta}_\star})\bigg| M = m_r\right). 
    \end{multline*}
\end{theorem}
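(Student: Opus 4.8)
The plan is to run the same one-step-estimator argument used for Theorem~\ref{thm:ipi_clt_mcar}, with $\tilde{\theta}_\star$ in place of $\theta_\star$, $\tilde{J}_{\tilde{\theta}_\star}$ in place of $H_{\theta_\star}$, and Assumption~\ref{assump:MCAR_first_moment} playing the role previously played by MCAR. Two facts get the machine started. First, the complete-case estimator $\hat{\theta}_n$ minimizes $\mathbb{P}_n[\ell(X;\theta)]$, an ordinary empirical risk over the fully observed subpopulation; under Assumption~\ref{assump:regularity} it is a standard M-estimator and hence $\sqrt{n}$-consistent for $\tilde{\theta}_\star=\argmin_\theta\mathbb{E}[\ell(X;\theta)\mid M=\bm{1}_d]$, so we sit in an $O_p(n^{-1/2})$ neighborhood of the target. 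Second, and crucially, Assumption~\ref{assump:MCAR_first_moment} together with the first-order condition $\mathbb{E}[\nabla\ell(X;\tilde{\theta}_\star)\mid M=\bm{1}_d]=0$ makes $\nabla L_{\IPI}(\tilde{\theta}_\star;\lambda)$ conditionally centered given the pattern labels $\mathcal{F}^M_N$: decomposing as in Assumption~\ref{assump:cond_lindeberg_mloss} (all pattern counts positive), the pattern-$r$ imputed-loss block contributes $\tfrac{\lambda_r}{R}\mathbb{E}[\nabla\ell(f(O);\tilde{\theta}_\star)\mid M=m_r]$ and the complete-case debiasing block contributes $\mathbb{E}[\nabla\ell(X;\tilde{\theta}_\star)\mid M=\bm{1}_d]-\tfrac1R\sum_r\lambda_r\mathbb{E}[\nabla\ell(f(O\circ m_r);\tilde{\theta}_\star)\mid M=\bm{1}_d]$; both vanish, so $\mathbb{E}[\nabla L_{\IPI}(\tilde{\theta}_\star;\lambda)\mid\mathcal{F}^M_N]=0$.

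Next I would expand
\begin{align*}
\hat{\theta}_{\IPI,\hat{\lambda}}-\tilde{\theta}_\star
&=(\hat{\theta}_n-\tilde{\theta}_\star)
 -\hat{H}_n^{-1}\bigl(\nabla L_{\IPI}(\hat{\theta}_n;\hat{\lambda})-\nabla L_{\IPI}(\tilde{\theta}_\star;\hat{\lambda})\bigr)\\
&\qquad -\hat{H}_n^{-1}\bigl(\nabla L_{\IPI}(\tilde{\theta}_\star;\hat{\lambda})-\nabla L_{\IPI}(\tilde{\theta}_\star;\lambda)\bigr)
 -\hat{H}_n^{-1}\nabla L_{\IPI}(\tilde{\theta}_\star;\lambda)
\end{align*}
and dispose of the first three pieces. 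The local-Lipschitz Hessian bound in Assumption~\ref{assump:regularity} plus $\sqrt{n}$-consistency of $\hat{\theta}_n$ give, exactly as in the first lemma of the proof of Theorem~\ref{thm:ipi_clt_mcar}, $\nabla L_{\IPI}(\hat{\theta}_n;\hat{\lambda})-\nabla L_{\IPI}(\tilde{\theta}_\star;\hat{\lambda})=(\nabla^2 L_{\IPI}(\tilde{\theta}_\star;\hat{\lambda})+o_p(1))(\hat{\theta}_n-\tilde{\theta}_\star)$; since $L_{\IPI}$ is linear in $\lambda$ and $\hat{\lambda}=\lambda+o_p(1)$, the weak law of large numbers gives $\nabla^2 L_{\IPI}(\tilde{\theta}_\star;\hat{\lambda})\overset{p}{\rightarrow}\tilde{J}_{\tilde{\theta}_\star}$ — this is the one place the MCAR simplification to $H_{\theta_\star}$ is unavailable, the uncancelled cross terms leaving precisely $\tilde{J}_{\tilde{\theta}_\star}$. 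The same ingredients give $\hat{H}_n=\nabla^2 L_{\IPI}(\hat{\theta}_n;\hat{\lambda})\overset{p}{\rightarrow}\tilde{J}_{\tilde{\theta}_\star}\succ0$, so $\hat{H}_n^{-1}\tilde{J}_{\tilde{\theta}_\star}\rightarrow I_p$ and the first two summands combine to $(I_p-\hat{H}_n^{-1}\tilde{J}_{\tilde{\theta}_\star})(\hat{\theta}_n-\tilde{\theta}_\star)=o_p(1)\cdot O_p(n^{-1/2})=o_p(n^{-1/2})$. For the third summand, $\nabla L_{\IPI}(\tilde{\theta}_\star;\hat{\lambda})-\nabla L_{\IPI}(\tilde{\theta}_\star;\lambda)=\tfrac1R\sum_r(\hat{\lambda}_r-\lambda_r)\bigl(\mathbb{P}_{\tilde{N}_r}[\nabla\ell(f(O);\tilde{\theta}_\star)]-\mathbb{P}_n[\nabla\ell(f(O\circ m_r);\tilde{\theta}_\star)]\bigr)$, and Assumption~\ref{assump:MCAR_first_moment} makes each bracketed difference a centered average, hence $O_p(n^{-1/2})$ by the conditional Lindeberg CLT; multiplying by $\hat{\lambda}_r-\lambda_r=o_p(1)$ gives $o_p(n^{-1/2})$.

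The fourth summand carries the distribution. Applying the conditional Lindeberg CLT (Assumption~\ref{assump:cond_lindeberg_mloss}) to the $W_{N,i}$-decomposition of $\nabla L_{\IPI}(\tilde{\theta}_\star;\lambda)$: conditionally on $\mathcal{F}^M_N$ the summands are independent and conditionally mean-zero (by the centering above), with $n\,S_{N,\mathcal{F}^M}=\sum_r(\lambda_r/R)^2(n/\tilde{N}_r)\Var(\nabla\ell(f(O);\tilde{\theta}_\star)\mid M=m_r)+\Var\bigl(\nabla\ell(X;\tilde{\theta}_\star)-\tfrac1R\sum_r\lambda_r\nabla\ell(f(O\circ m_r);\tilde{\theta}_\star)\mid M=\bm{1}_d\bigr)$, which converges in probability to $V_{\tilde{\theta}_\star,\lambda}$ since $n/\tilde{N}_r\overset{p}{\rightarrow}p_0/p_r$; Slutsky then yields $\sqrt{n}\,V_{\tilde{\theta}_\star,\lambda}^{-1/2}\nabla L_{\IPI}(\tilde{\theta}_\star;\lambda)\overset{d}{\rightarrow}\mathcal{N}(0,I_p)$. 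Assembling, $\sqrt{n}(\hat{\theta}_{\IPI,\hat{\lambda}}-\tilde{\theta}_\star)=-\hat{H}_n^{-1}\sqrt{n}\,\nabla L_{\IPI}(\tilde{\theta}_\star;\lambda)+o_p(1)\overset{d}{\rightarrow}\mathcal{N}(0,\tilde{J}_{\tilde{\theta}_\star}^{-1}V_{\tilde{\theta}_\star,\lambda}\tilde{J}_{\tilde{\theta}_\star}^{-1})$ by a final Slutsky step. The only step that differs in substance from the MCAR proof — and hence the main obstacle — is verifying that Assumption~\ref{assump:MCAR_first_moment} is exactly strong enough to center $\nabla L_{\IPI}(\tilde{\theta}_\star;\lambda)$ conditionally on $\mathcal{F}^M_N$ and to deliver the $O_p(n^{-1/2})$ rate on the $\lambda$-perturbation bracket, together with keeping careful track that the relevant Hessian limit is now $\tilde{J}_{\tilde{\theta}_\star}$ rather than $H_{\theta_\star}$; once that is in place the Taylor-remainder control, the WLLN for the Hessian, and the conditional CLT are routine.
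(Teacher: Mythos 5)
Your proposal is correct and follows essentially the same route as the paper: the same four-term one-step decomposition, the same use of Assumption~\ref{assump:MCAR_first_moment} together with the first-order condition at $\tilde{\theta}_\star$ to conditionally center $\nabla L_{\IPI}(\tilde{\theta}_\star;\lambda)$ and to make the $\hat{\lambda}$-perturbation bracket $O_p(n^{-1/2})$, the same identification of the Hessian limit as $\tilde{J}_{\tilde{\theta}_\star}$ (with $\hat{H}_n=\nabla^2 L_{\IPI}(\hat{\theta}_n;\hat{\lambda})$, matching the paper's general IPI estimator definition), and the same conditional Lindeberg CLT plus Slutsky finish. The only nitpick is that the two conditional-mean "blocks" you describe vanish jointly (the pattern-$r$ term cancels against its masked counterpart via Assumption~\ref{assump:MCAR_first_moment}, and the complete-case score vanishes by optimality) rather than each separately, but your stated conclusion $\mathbb{E}[\nabla L_{\IPI}(\tilde{\theta}_\star;\lambda)\mid\mathcal{F}^M_N]=0$ is exactly right.
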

\begin{proof}
    Here, we reiterate the proof for the IPI estimator under MCAR, noting where the MCAR assumption can be replaced by the MCAR first moment assumption.

    Recall the decomposition
    \begin{align*}
        \hat{\theta}_\IPI - \tilde{\theta}_\star 
        &= \hat{\theta}_n - \tilde{\theta}_\star\\
        &\hspace{3mm} - [\nabla^2 L_\IPI(\hat{\theta}_n; \hat{\lambda})]^{-1}\left(\nabla L_\IPI (\hat{\theta}_n; \hat{\lambda}) - \nabla L_\IPI (\tilde{\theta}_\star; \hat{\lambda})\right)\\
        &\hspace{3mm} - [\nabla^2 L_\IPI(\hat{\theta}_n; \hat{\lambda})]^{-1}\left(\nabla L_\IPI (\tilde{\theta}_\star; \hat{\lambda}) - \nabla L_\IPI (\tilde{\theta}_\star;  {\lambda})\right)\\
        &\hspace{3mm} - [\nabla^2 L_\IPI(\hat{\theta}_n; \hat{\lambda})]^{-1}\nabla L_\IPI (\tilde{\theta}_\star; {\lambda}).
    \end{align*}

    By local Lipschitzness of the Hessian and law of large numbers, $\nabla^2 L_\IPI (\hat{\theta}_n; \hat{\lambda}) = \tilde{J}_{\tilde{\theta}_\star} + o_p(1)$.

    The stochastic Taylor expansion argument is the same as in the proof for Theorem \ref{thm:ipi_clt_mcar} in Section \ref{app:proof_ipi_clt_mcar} with $\nabla L_\IPI (\hat{\theta}_n; \hat{\lambda}) - \nabla L_\IPI (\tilde{\theta}_\star; \hat{\lambda}) = \tilde{J}_{\tilde{\theta}_\star}(\hat{\theta}_n - \tilde{\theta}_\star) + o_p(1/\sqrt{n})$.

By the MCAR first moment assumption,  for every $r \in [R]$, the third term accounting for randomness in the estimation of $\hat{\lambda}$ obeys  
    \[
        \mathbb{P}_{\tilde{N}_r} \left[\nabla \ell(f(O); \tilde{\theta}_\star)\right] - \mathbb{P}_{n} \left[\nabla \ell(f(O \circ m_r); \tilde{\theta}_\star)\right] = O_p(1/\sqrt{n}).
    \]
    Thus, by consistency of $\hat{\lambda}$ for $\lambda$, the entire third term is $o_p(1/\sqrt{n})$.

    The last term follows from the conditional CLT. The conditional mean is  
    \begin{multline*}
        \mathbb{E} \left[\nabla L_\IPI (\tilde{\theta}_\star; \lambda) |\mathcal{F}^M\right] 
        = \mathbb{E} \left[\nabla \ell(X; \tilde{\theta}_\star) | M = \bm{1}_d\right] + \frac{1}{R}\sum_{r =1}^R \lambda_r \bigg(\mathbb{E} \left[\nabla \ell(f(O); \tilde{\theta}_\star) | M = m_r) \right] \\
        - \mathbb{E} \left[\nabla \ell(f(O \circ m_r); \tilde{\theta}_\star) | M = \bm{1}_d\right]\bigg)= 0,
    \end{multline*}
    The second equality follows from the MCAR first moment (Assumption \ref{assump:MCAR_first_moment}) and $\tilde{\theta}_\star$ minimizing the smooth convex loss $\mathbb{E}[\nabla \ell(X; \tilde{\theta}_\star)| M = \bm{1}_d]$. A quick calculation leveraging conditional independence of the sample means by missing pattern gives the conditional variance:
    \begin{align*}
        n\Var \left(\nabla L_\IPI (\tilde{\theta}_\star; \lambda)  | \mathcal{F}^M\right) = V_{\tilde{\theta}_\star, \lambda} + o_p(1).
    \end{align*}

    Putting these findings all together gives
    \begin{align*}
        \sqrt{n} (\hat{\theta}_\IPI - \tilde{\theta}_\star) = o_p(1) - \tilde{J}_{\tilde{\theta}_\star}^{-1}(\sqrt{n} V_{\tilde{\theta}_\star, \lambda}^{1/2})V_{\tilde{\theta}_\star, \lambda}^{-1/2}\nabla L_\IPI(\tilde{\theta}_\star; \lambda) \overset{d}{\rightarrow} \mathcal{N}(0, \Sigma). 
    \end{align*}
\end{proof}

\subsubsection{Diagnostics}\label{app:diagnostics}
In this section, we quantify the asymptotic behavior of the test statistic we introduced, namely, 
\begin{align*}
    \hat{T}_\IPI = \frac{1}{R}\sum_{r =1}^R \hat{\lambda}_r \left(\mathbb{P}_{\tilde{N}_r} [\nabla \ell(f(O); \hat{\theta}_n)] - \mathbb{P}_n [\nabla \ell(f(O \circ m_r); \hat{\theta}_n)]\right),
\end{align*}
where $\hat{\lambda}$ is consistent for power-tuning parameter the $\lambda_\star$.

Proof of Proposition \ref{prop:asymp_behav_ipi_test}.
\begin{proof}
\begin{align*}
    \hat{T}_\IPI 
    &= \frac{1}{R}\sum_{r =1}^R \hat{\lambda}_r \left(\mathbb{P}_{\tilde{N}_r} \left[\nabla \ell(f(O); \hat{\theta}_n)\right] - \mathbb{P}_n \left[\nabla \ell(f(O \circ m_r); \hat{\theta}_n)\right]\right)\\
    &= \frac{1}{R}\sum_{r =1}^R \hat{\lambda}_r \left(\mathbb{P}_{\tilde{N}_r} \left[\nabla \ell(f(O); \tilde{\theta}_\star)\right] - \mathbb{P}_n \left[\nabla \ell(f(O \circ m_r); \tilde{\theta}_\star)\right]\right)\\
    &\hspace{3mm} + \frac{1}{R}\sum_{r =1}^R \hat{\lambda}_r \left(\mathbb{P}_{\tilde{N}_r} \left[\nabla \ell(f(O); \hat{\theta}_n) - \nabla \ell(f(O); \tilde{\theta}_\star)\right]\right)\\
    &\hspace{3mm} - \frac{1}{R}\sum_{r =1}^R \hat{\lambda}_r \left(\mathbb{P}_n \left[\nabla \ell(f(O \circ m_r); \hat{\theta}_n) - \nabla \ell(f(O \circ m_r); \tilde{\theta}_\star)\right]\right).
\end{align*}

By local Lipschitzness of the Hessian, the second term simplifies
\begin{align*}
    &\frac{1}{R}\sum_{r =1}^R \hat{\lambda}_r \left(\mathbb{P}_{\tilde{N}_r} \left[\nabla \ell(f(O); \hat{\theta}_n) - \nabla \ell(f(O); \tilde{\theta}_\star)\right]\right) \\
    &= \frac{1}{R}\sum_{r =1}^R \hat{\lambda}_r \left(\mathbb{P}_{\tilde{N}_r} \left[\nabla^2 \ell(f(O);\tilde{\theta}_\star)\right] + o_p(1)\right)(\hat{\theta}_n - \tilde{\theta}_\star) \\
    &= \left(\frac{1}{R}\sum_{r =1}^R \lambda_r \mathbb{E} \left[\nabla^2 \ell(f(O); \tilde{\theta}_\star) \bigg| M = m_r\right] + o_p(1)\right) (\hat{\theta}_n - \tilde{\theta}_\star).
\end{align*}

Similarly, the third term simplifies to
\[
    \left(\frac{1}{R}\sum_{r =1}^R \lambda_r \mathbb{E}\left[\nabla^2 \ell( f(O \circ m_r); \tilde{\theta}_\star) \bigg| M = \bm{1}_d\right] + o_p(1)\right)(\hat{\theta}_n - \tilde{\theta}_\star).
\]

By the smoothness and convexity of $\ell(X; \theta)$, $\hat{\theta}_n - \tilde{\theta}_\star = -H_{\tilde{\theta}_\star}^{-1} \mathbb{P}_n[\nabla \ell(X; \tilde{\theta}_\star)] + o_p(1/\sqrt{n}).$

Then, 
\begin{align*}
    \sqrt{n}\hat{T}_{\IPI} &= \frac{1}{R}\sum_{r =1}^R \lambda_r \mathbb{P}_{\tilde{N}_r} \left[\nabla \ell(f(O); \tilde{\theta}_\star\right] \\
    &\hspace{3mm}- \mathbb{P}_n \left[\frac{1}{R}\sum_{r =1}^R \lambda_r \left(\left(\tilde{H}_{\tilde{\theta}_\star}^r - {H}_{\tilde{\theta}_\star}^r\right)H_{\tilde{\theta}_\star}^{-1}\nabla \ell(X; \tilde{\theta}_\star)  + \nabla \ell(f(O \circ m_r); \tilde{\theta}_\star)\right)\right].
\end{align*}

The proposition follows from conditional CLT and explicit variance calculation.
\end{proof}

The asymptotic behavior for $\hat{T}_\IPI$ justifies a valid asymptotic p-value under the null of the MCAR first moment assumption.

\subsubsection{Diagnostics comparisons}\label{app:expt-diagnostics}
To test the diagnostic tool discussed in Section \ref{sec:moving_away_mcar} and analyzed in Section \ref{app:diagnostics}, we study the distribution of the resulting p-values in various settings for the factor model. In particular we keep $n = 200$, $N = 2000$, and $\Sigma = FF^T+ \Psi$ the same as in the original \textbf{factor model} setting with factors explaining $50\%$ of the variance. Here, we study the resulting p-values from $\hat{T}_\IPI$ and $\hat{T}_\mathrm{full}$ for different violations of the MCAR first moment assumption. Unless otherwise specified, there are the same $R = 10$ nontrivial missing patterns as in Experiment 1 and results are reported for 1000 trials.

\textbf{Additive shifts on imputed gradient depending on missing pattern.}
In this setting, we mimic different imputation behaviors based on the missingness pattern. In particular, we introduce a pattern-specific constant bias $c_r$:
\begin{align*}
    \nabla \ell (f(O \circ m_r); \hat{\theta}_n) = \begin{cases}
         \nabla \ell (f(O \circ m_r); \hat{\theta}_n) + c_r, &\text{ if $M = m_r$},\\
        \nabla \ell (f(O \circ m_r); \hat{\theta}_n), &\text{$M = \bm{1}_d$}.
    \end{cases}.
\end{align*}
While systematic biases are rarely going to behave like a constant shift, this setting allows a controlled environment to explore which settings are more or less detectable by $\hat{T}_\IPI$ and $\hat{T}_\text{full}.$ In particular, we consider three settings: (a) Null: $c_r = 0$ for all patterns $r$ so that the first-moment MCAR assumption is indeed satisfied; however, the total number of patterns $R$ increases, thus increasing the dimension of $\hat{T}_\text{full}$. (b) Uniform shift: $c_r = c$ for $c = 0.01, 0.05, 0.10$ observes the same additive shift across all missingness patterns. (c) Single-pattern shift: $c_{\tilde{r}} = 0$ for all $\tilde{r} \neq r$ and $c_r = 0.1$ for each of $r = 1, ..., 10$. That is, only one pattern exhibits a 0.1 additive shift. Results are presented in Figure \ref{fig:diagnostics-comparison}.

\begin{figure}[h!]
    \centering
    \includegraphics[width=\linewidth]{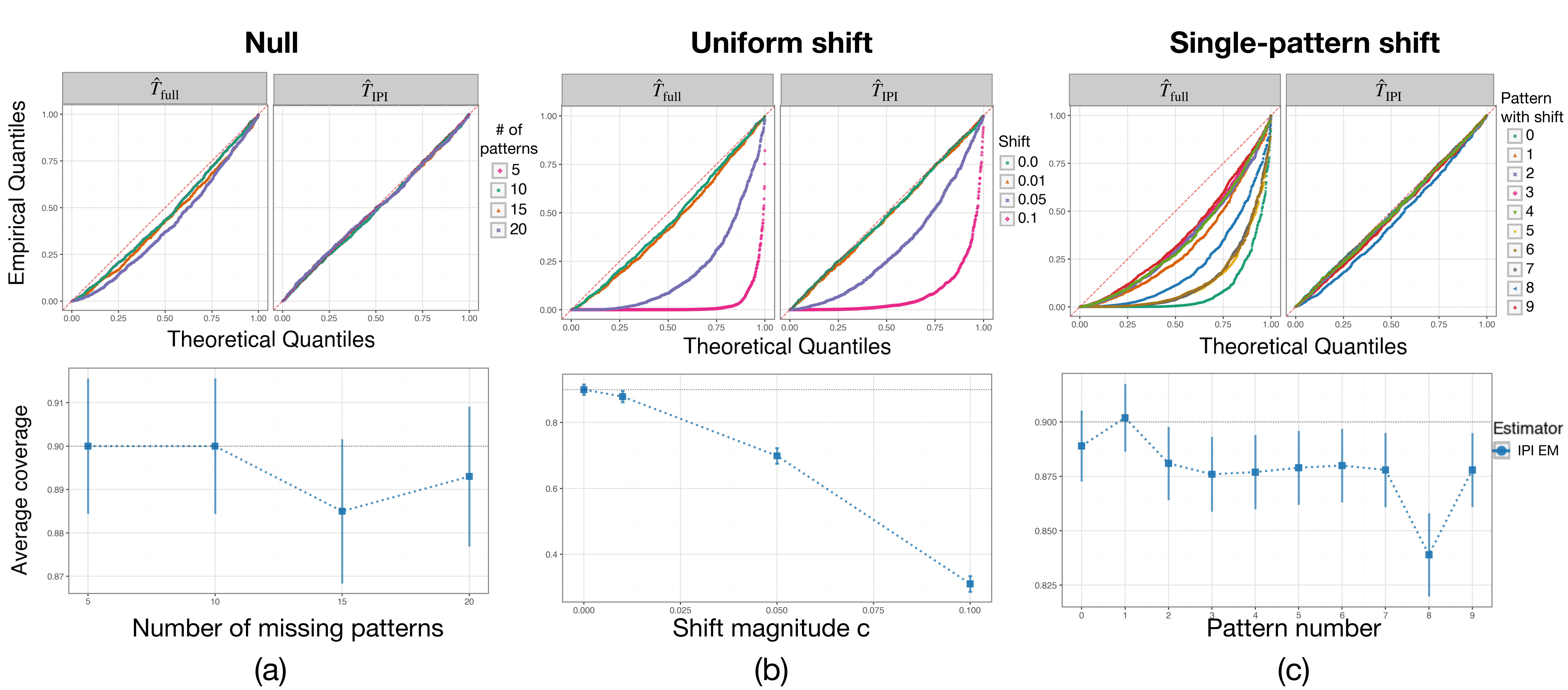}
    \caption{(Upper row) QQ plot of p-values for 100 runs corresponding to each test. (Lower row) Average coverage rates of the IPI method corresponding to each run. (a) Results when the first-moment MCAR assumption holds. (b) Results for uniform additive shifts across all missingness patterns, adjusting the magnitude of the shift. (c) Results for 0.1 additive shift only added to a single pattern. }
    \label{fig:diagnostics-comparison}
\end{figure}

In Figure \ref{fig:diagnostics-comparison}(a), we observe that even when the first-moment MCAR assumption holds, the p-values from the higher dimensional $\hat{T}_\mathrm{full}$ is sub-uniform while those from $\hat{T}_\IPI$ remain uniform. Under uniform shift (Figure \ref{fig:diagnostics-comparison}(b)), both diagnostics are able to detect large deviations from the MCAR first moment assumption. Lastly, in Figure \ref{fig:diagnostics-comparison}(c), we observe that $\hat{T}_\mathrm{full}$ is highly sensitive to sparse shifts for each pattern. In comparison, $\hat{T}_\IPI$ really only is sub-uniform with pattern 8, which in previous experiments was shown to be the best performing pattern and to have a large power-tuning $\lambda$ weight. From this exploration, $\hat{T}_\IPI$ p-values could be more aligned with a practitioner's goal. More rigorous evaluation of the finite-sample type I error and power of these tests is left for future work. 

\subsection{General CIPI}
We now evaluate CIPI under general missingness.

\begin{definition}[CIPI estimator, general]
    $\hat{\theta}_{\CIPI, \hat{\lambda}} = \hat{\theta}_n - \hat{H}_n^{-1} \nabla L_\CIPI (\hat{\theta}_n; \hat{\lambda})$, where $\hat{H}_n {=} \tilde{J}_{\tilde{\theta}_\star, \bar{f}} + o_p(1)$ with
    \begin{multline*} 
    \tilde{J}_{\tilde{\theta}_\star, \bar{f}} := \mathbb{E}\left[\nabla^2 \ell(X; \tilde{\theta}_\star) -\frac{1}{R}\sum_{r = 1}^R \lambda_r \nabla^2 \ell(\bar{f}(O \circ m_r); \tilde{\theta}_\star) \bigg| M = \bm{1}_d \right] \\ +  \frac{1}{R}\sum_{r =1}^R \lambda_r \mathbb{E}\left[\nabla^2 \ell(\bar{f}(O); \tilde{\theta}_\star) \bigg| M = m_r \right],\end{multline*}
    where we assume $\tilde{J}_{\tilde{\theta}_\star, \bar{f}} \succ 0.$ In the following exposition, we will take $ \hat{H}_n = \nabla^2 L_\CIPI(\hat{\theta}_n; \hat{\lambda})$.
\end{definition}

The main assumption we make for CIPI under general missingness is the following.
\begin{assumption}[MCAR first moment for CIPI]\label{assump:mcar-firstmoment-cipi}
For every missing pattern \(m_r\;(r\in[R])\) the mean imputed score is
unchanged by masking, both fold-wise and in the long-run average model:
\begin{align*}
    \frac{1}{K}\sum_{k =1}^K \mathbb{E} \left[\nabla \ell(f^{(k)}(O); \theta_\star) \bigg| M = m_r, f^{(k)}\right] &= \frac{1}{K}\sum_{k =1}^K \mathbb{E} \left[\nabla \ell(f^{(k)}(O \circ m_r); \theta_\star) \bigg| M = 
    \bm{1}_d,f^{(k)} \right],\\
    \mathbb{E}\left[\nabla \ell (\bar{f}(O); \theta_\star) | M = m_r\right] &= \mathbb{E}\left[\nabla \ell(\bar{f}(O \circ m_r); \theta_\star) | M = \bm{1}_d\right].
\end{align*}
\end{assumption}

Since removing the MCAR assumption makes the Hessian term harder to interpret, we introduce an additional stability assumption. This assumption restores a similar interpretability to that found in IPI under MCAR first-moment missingness.
\begin{assumption}[Stability of the Hessian]\label{assump:cipi-hessian-stability}
For us to better interpret the cross-fitted Hessian, we add the extra stability assumption that for each $r \in [R]$ and $\tilde{r} \in \{0, r\}$,
\begin{align*}
    \mathbb{E}[\nabla^2 \ell(f^{(1)}(O \circ m_r); \theta) - \nabla^2 \ell(\bar{f}(O \circ m_r); \theta)|f^{(1)}, M= m_{\tilde{r}}] = o_p(1),
\end{align*}
where again, $\bar{f}$ can be thought of as the long-run average of the fitted imputor on $N(K-1)/K$ samples.
\end{assumption}

\begin{theorem}\label{thm:general-cipi}
Assume regularity (Assumption \ref{assump:cipi_regularity}), stability (Assumptions \ref{assump:stability assumptions} and \ref{assump:cipi-hessian-stability}), conditional Lindeberg condition (Assumption \ref{assump:cond_lindeberg_cipi}), and Assumption \ref{assump:mcar-firstmoment-cipi}. Then, for $\hat{\lambda} = \lambda + o_p(1)$,
\[
    \sqrt{n}\left(\hat{\theta}_{\CIPI, \hat{\lambda}} - \tilde{\theta}_\star\right) \overset{d}{\rightarrow} \mathcal{N}(0, \bar{\Sigma}), \quad 
    \bar{\Sigma} = \tilde{J}_{\tilde{\theta}_\star,\bar{f}}^{-1} V_{\tilde{\theta}, \lambda} \tilde{J}_{\tilde{\theta}_\star,\bar{f}}^{-1}
\]
and $V_{\tilde{\theta}_\star, \lambda}$ is as defined in Eq.~\eqref{eq:cipi_asymp_var}.
\end{theorem}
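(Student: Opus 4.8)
The plan is to mirror the proof of Theorem~\ref{thm:cipi_clt_mcar} with $\theta_\star$ replaced by $\tilde\theta_\star$, reducing the CIPI estimator to the IPI estimator at the long-run average imputer $\bar f$, and then invoke the general-IPI argument of Theorem~\ref{thm:general_ipi} at $\bar f$, substituting the MCAR identities used there by the fold-wise and average first-moment identities of Assumption~\ref{assump:mcar-firstmoment-cipi}. Concretely, I would start from the decomposition in Eq.~\eqref{eq:cipi-decomp1},
\[
\hat{\theta}_{\CIPI,\hat\lambda}-\tilde\theta_\star = -\hat H_n^{-1}\nabla L_\IPI^{\bar f}(\tilde\theta_\star;\lambda) + \Big(\hat\theta_n-\tilde\theta_\star-\hat H_n^{-1}\big(\nabla L_\CIPI(\hat\theta_n;\hat\lambda)-\nabla L_\IPI^{\bar f}(\tilde\theta_\star;\lambda)\big)\Big),
\]
and expand the bracketed remainder exactly as in Eq.~\eqref{eq:cipi-decomp2} into (i) the Taylor remainder $\nabla L_\CIPI(\hat\theta_n;\hat\lambda)-\nabla L_\CIPI(\tilde\theta_\star;\hat\lambda)$, (ii) the $\hat\lambda$-versus-$\lambda$ fluctuation $\nabla L_\CIPI(\tilde\theta_\star;\hat\lambda)-\nabla L_\CIPI(\tilde\theta_\star;\lambda)$, and (iii) the cross-fitting reduction $\frac1K\sum_{k}\big(\nabla L_\IPI^{(k)}(\tilde\theta_\star;\lambda)-\nabla L_\IPI^{\bar f}(\tilde\theta_\star;\lambda)\big)$. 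The goal is to show all three are $o_p(1/\sqrt n)$ and that $\hat H_n=\nabla^2 L_\CIPI(\hat\theta_n;\hat\lambda)=\tilde J_{\tilde\theta_\star,\bar f}+o_p(1)$.

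First I would establish Hessian consistency. By local Lipschitzness of the Hessian and the law of large numbers (Assumption~\ref{assump:cipi_regularity}), each fold-wise Hessian $\nabla^2 L_\IPI^{(k)}(\hat\theta_n;\hat\lambda)$ concentrates around a conditional-expectation object built from $f^{(k)}$ over the patterns $m_r$ and the fully-observed mask $\mathbf 1_d$; the Hessian-stability Assumption~\ref{assump:cipi-hessian-stability} then replaces $f^{(k)}$ by $\bar f$ inside those conditional expectations, so averaging over folds yields $\tilde J_{\tilde\theta_\star,\bar f}+o_p(1)$. The same ingredients, together with $\sqrt n$-consistency of the complete-case estimator $\hat\theta_n$ for $\tilde\theta_\star$ (valid because $\tilde\theta_\star$ minimizes $\mathbb E[\ell(X;\theta)\mid M=\mathbf 1_d]$), give step~(i): $\nabla L_\CIPI(\hat\theta_n;\hat\lambda)-\nabla L_\CIPI(\tilde\theta_\star;\hat\lambda)=\tilde J_{\tilde\theta_\star,\bar f}(\hat\theta_n-\tilde\theta_\star)+o_p(1/\sqrt n)$. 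For step~(ii), linearity of $L_\CIPI$ in $\lambda$ reduces it to $(\hat\lambda-\lambda)=o_p(1)$ multiplied by the fold-averaged differences $\frac1K\sum_k\big(\mathbb P_{\tilde N_{r,k}}[\nabla\ell(f^{(k)}(O);\tilde\theta_\star)]-\mathbb P_{n_k}[\nabla\ell(f^{(k)}(O\circ m_r);\tilde\theta_\star)]\big)$; the fold-wise part of Assumption~\ref{assump:mcar-firstmoment-cipi} makes the conditional mean of this fold-average (given $\{M_i\}$ and $\{f^{(k)}\}$) vanish, so a conditional CLT makes it $O_p(1/\sqrt n)$ and the product is $o_p(1/\sqrt n)$.

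Step~(iii) is handled exactly as in the proof of Theorem~\ref{thm:cipi_clt_mcar}: Lemmas~\ref{lemm:cross-fit-partial} and~\ref{lemm:cross-fit-fullyobs}, whose proofs use only the stability Assumption~\ref{assump:stability assumptions} and not MCAR, strip the fold-wise randomness from $\nabla\ell(f^{(k)}(\cdot);\tilde\theta_\star)-\nabla\ell(\bar f(\cdot);\tilde\theta_\star)$, leaving the difference of fold-averaged conditional means over $M=m_r$ and $M=\mathbf 1_d$; the fold-wise and $\bar f$-versions of Assumption~\ref{assump:mcar-firstmoment-cipi} together force this difference to cancel to $0$, so step~(iii) is $o_p(1/\sqrt n)$. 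Combining, $\hat\theta_{\CIPI,\hat\lambda}-\tilde\theta_\star=-\tilde J_{\tilde\theta_\star,\bar f}^{-1}\nabla L_\IPI^{\bar f}(\tilde\theta_\star;\lambda)+o_p(1/\sqrt n)$, reducing the claim to a CLT for $\nabla L_\IPI^{\bar f}(\tilde\theta_\star;\lambda)$: its conditional mean given $\sigma(\{M_i\})$ is zero by the $\bar f$-part of Assumption~\ref{assump:mcar-firstmoment-cipi} and $\mathbb E[\nabla\ell(X;\tilde\theta_\star)\mid M=\mathbf 1_d]=0$; its $n$-scaled conditional variance converges to $V_{\tilde\theta_\star,\lambda}$ of Eq.~\eqref{eq:cipi_asymp_var} by conditional independence of the per-pattern sample means; and the conditional Lindeberg condition (Assumption~\ref{assump:cond_lindeberg_cipi} for $\bar f$) plus Slutsky yield $\sqrt n(\hat\theta_{\CIPI,\hat\lambda}-\tilde\theta_\star)\overset{d}{\rightarrow}\mathcal N(0,\tilde J_{\tilde\theta_\star,\bar f}^{-1}V_{\tilde\theta_\star,\lambda}\tilde J_{\tilde\theta_\star,\bar f}^{-1})$.

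I expect the main obstacle to be the Hessian term. Unlike the MCAR case, where the pattern-weighted imputed Hessians cancel and the curvature collapses to $H_{\theta_\star}$, here they do not cancel, so one must carry the full object $\tilde J_{\tilde\theta_\star,\bar f}$ and, crucially, transport the fold-wise imputed Hessians $\nabla^2\ell(f^{(k)}(O\circ m_r);\cdot)$ to their average-imputer versions; this is precisely the role of Assumption~\ref{assump:cipi-hessian-stability}, and the care needed is to apply it uniformly over the neighborhood $\mathcal U$ of $\tilde\theta_\star$ and across all $R$ patterns and $K$ folds before taking the fold average. The remaining steps are structural repeats of the MCAR-CIPI and general-IPI proofs with MCAR replaced pointwise by the fold-wise and average first-moment conditions of Assumption~\ref{assump:mcar-firstmoment-cipi}.
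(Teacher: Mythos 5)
Your proposal is correct and follows essentially the same route as the paper's proof: the paper likewise reuses the decomposition in Eqs.~\eqref{eq:cipi-decomp1}--\eqref{eq:cipi-decomp2}, keeps the stochastic Taylor step, disposes of the $\hat\lambda$-fluctuation via the fold-wise part of Assumption~\ref{assump:mcar-firstmoment-cipi}, handles the cross-fitting reduction via the stability lemmas plus both parts of Assumption~\ref{assump:mcar-firstmoment-cipi}, uses Assumption~\ref{assump:cipi-hessian-stability} to identify $\nabla^2 L_{\CIPI}(\hat\theta_n;\hat\lambda)$ with $\tilde{J}_{\tilde{\theta}_\star,\bar{f}}$, and concludes with the conditional CLT for $\nabla L_{\IPI}^{\bar f}(\tilde\theta_\star;\lambda)$. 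Your write-up simply fills in the details that the paper compresses into its ``functionally reiterate'' summary.
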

\begin{proof}
    As in the proof for Theorem \ref{thm:general_ipi}, we functionally reiterate the proof, highlighting where we replace MCAR with MCAR first moment assumptions.

    Recall the decomposition in Equations \eqref{eq:cipi-decomp1} and \eqref{eq:cipi-decomp2}. 
    The stochastic Taylor expansion of $\nabla L_\CIPI$ around $\tilde{\theta}_\star$ argument stays the same; the second term capturing the uncertainty in $\hat{\lambda}$ is $o_p(1/\sqrt{n})$ by the fold-wise MCAR first moment assumption; and the last term continues to be $o_p(1/\sqrt{n})$ from stability (Assumption \ref{assump:stability assumptions}) and the MCAR first moment assumption for both the fold-wise and long-run average imputations.

    Therefore, by conditional central limit theorem, 
    \begin{align*}
        \hat{\theta}_{\CIPI, \hat{\lambda}} - \tilde{\theta}_\star = -\left[\nabla^2 L_{\CIPI} (\hat{\theta}_n; \hat{\lambda})\right]^{-1}\nabla L_\IPI^{\bar{f}}(\theta_\star; \lambda) + o_p(1/\sqrt{n}) \overset{d}{\rightarrow} \mathcal{N}\left(0,\bar{\Sigma}\right).
    \end{align*}
\end{proof}

\section{Experimental details}\label{app:exptl_details}
In this section, we provide further experimental details supplementing the discussion in the main text. 

\subsection{Baseline method details}
\paragraph{Complete-case method.}
The complete-case method is based on the asymptotics of the minimizer on the fully observed samples: 
\[
    \hat{\theta}_n = \argmin_\theta \mathbb{P}_n[\nabla \ell(X; \theta)].
\]
\paragraph{AIPW baseline.}
The AIPW baseline refers to the approach discussed in Appendix \ref{app:aipw_expts} (see Eq \eqref{app-eq:aipw_class1_augmentation} for details). This adapts the Class 1 estimators in \citet{tsiatis2006semiparametric} to the setting where the restricted moment condition $E[Y|X] = X^\top \theta$ may not hold and instead we target the estimand $\theta_\star$ satisfying $E[X(Y- X^\top \theta_\star] = 0$.

\paragraph{Naive single imputation.}
We first create a single complete dataset \(\tilde X\) by imputing missing entries, then perform inference as if \(\tilde X\) were fully observed (ignoring imputation uncertainty). For an M-estimator with loss \(\ell\),
\[
    \hat X = \begin{cases}
        X,  & \text{ if } M = \mathbf{1}_d,\\
        f(O), & \text{ otherwise };
    \end{cases} \qquad
    \hat{\theta}_N = \arg\min_{\theta}\, \mathbb{P}_N\!\left[\ell(\tilde X;\theta)\right].
\] 

\begin{itemize}
  \item \textbf{EM (factor model).} Fit a factor model to the incomplete data via EM to estimate the covariance; impute missing entries by the model’s conditional means.
  \item \textbf{MissForest (LightGBM).} Iteratively impute each variable using tree-based models; here \texttt{LightGBM} replaces random forests for speed~\cite{stekhoven2012missforest}.
  \item \textbf{Hot-deck.} For each missing entry, copy the value from a “donor” record—typically the nearest neighbor with that value observed~\cite{andridge2010review_hotdeck}.
  \item \textbf{Mean imputation.} Replace each missing entry with the column mean computed from observed values.
  \item \textbf{Zero imputation.} Replace missing entries with \(0\).
\end{itemize}

\paragraph{Single-pattern IPI.} The single-pattern IPI baseline method is the IPI method applied to a single missing pattern. That is, single-pattern IPI for missingness pattern $m_r$ applies to the loss:
\[
    L_\IPI^r(\theta) = \mathbb{P}_n \left[\nabla \ell(X; {\theta})\right] + \tilde{\lambda}_r \left(\mathbb{P}_{\tilde{N}_r} \left[\nabla \ell(f(O); {\theta})\right] - \mathbb{P}_{n} \left[\nabla \ell(f(O \circ m_r); {\theta})\right]\right),
\]
where $\tilde{\lambda}_r$ may be different from the $\lambda_r$ for full-data IPI. We consider this a baseline comparison because under semi-supervised settings for the linear regression setting, this single-pattern IPI one-step estimator aligns exactly with the PPI and PPI++ estimator.

\subsection{Factor model details}\label{app:fm_expts_more}
In this section, we expand upon the main paper results to provide a more thorough exploration of the factor model setting.

\subsubsection{Imputation methods and naive baselines}
In the same setting as Experiment 1 in Section \ref{expt:mcar}, we compare the effective sample size and average coverage over 1000 trials of IPI with naive single imputation approaches, using the imputation methods listed above -- namely, EM, MissForest, hot-deck, mean, and zero. We also include the AIPW baseline in our comparisons. Note that the color-coding is different from most of the figures as colors are now used to distinguish between different imputation methods (e.g. EM vs. MissForest). As seen in Figure  \ref{fig:fm_ipi_classical_full_comparison}, the IPI methods achieve roughly nominal coverage while their single imputation counterparts fail to achieve nominal coverage. This illustrates the usefulness of IPI for capturing the bias of single imputation methods. Moreover, the efficiency of IPI depends on the imputation method; however, we see that they all observe improved efficiency over the complete-case approach.
\begin{figure}[H]
    \centering
    \includegraphics[width=\linewidth]{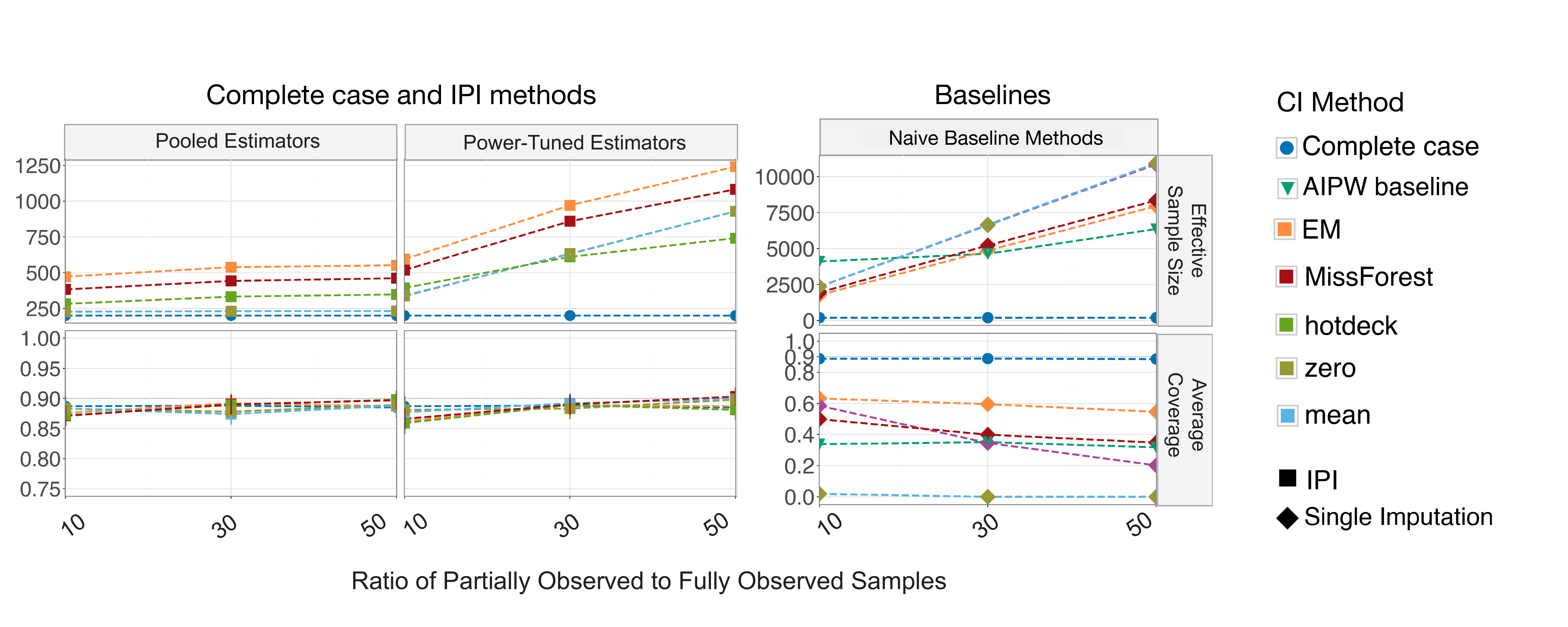}
    \caption{Full comparison of IPI methods with their naive single imputation baselines across five different imputation methods. AIPW and complete-case methods are provided also for completion.}
    \label{fig:fm_ipi_classical_full_comparison}
\end{figure}

\subsubsection{Non-MCAR experiment}
To illustrate the importance of being careful about which estimand one is targeting when we move away from non-MCAR settings, we provide an illustrative experiment where $\tilde{\theta}_\star$ progressively shifts further away from $\theta_\star$. In particular, we simulate data in the factor model setting with $R = 3$ missingness patterns, in which $\tilde{\theta}_\star = c \cdot \theta_\star$ where $c = \{1.05, 1.10, 1.15, \hdots, 1.50\}$. We report sample efficiency, average coverage of $\theta_\star$, and average coverage of $\tilde{\theta}_\star$ over 1000 trials.
\begin{figure}[h!]
    \centering
    \includegraphics[width=0.7\linewidth]{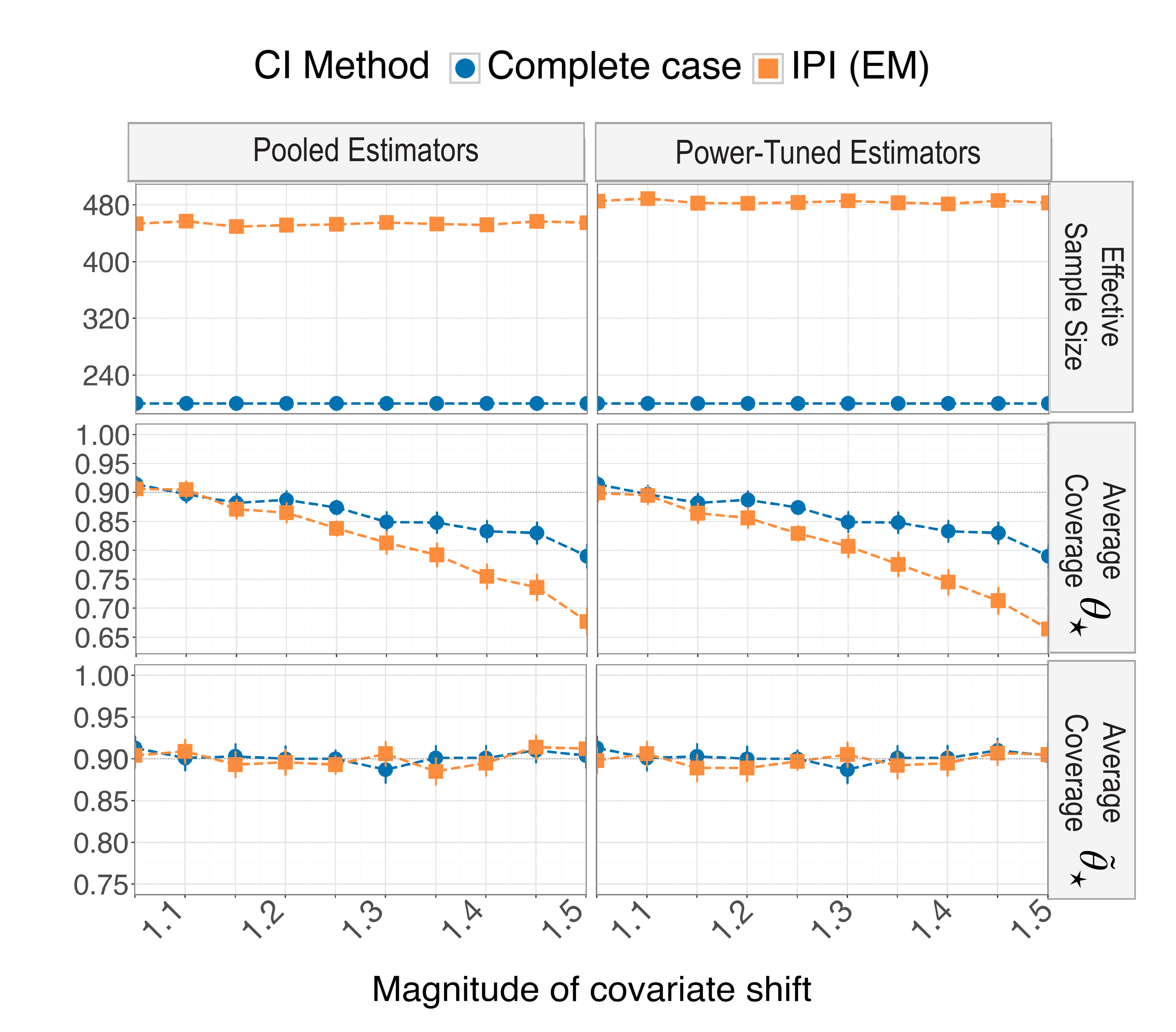}
    \caption{Effective sample size (top row); average coverage for $\theta_\star$ (middle row); and average coverage for $\tilde{\theta}_\star$ (bottom row), when $\tilde{\theta}_\star \neq \theta_\star$ trials for complete-case and IPI EM methods for the factor model.}
    \label{fig:nonmcar}
\end{figure}

In Figure \ref{fig:nonmcar}, we observe that as the estimand's shift magnitude becomes stronger, IPI methods suffer from undercoverage of $\theta_\star$ with IPI EM being more sensitive to undercoverage in these settings; however, note that both IPI and complete-case methods roughly achieve nominal coverage for $\tilde{\theta}_\star$.

\subsection{Census survey}
The original ACS Census survey \cite{census_survey_folktables} is comprised of 380,091 samples and 288 features. To align with previous PPI literature, we limited ourselves to the set of features used in previous analyses. After filtering out any partially observed datapoints, so we still have access to the ground truth on the finite population of fully observed datapoints, we get 255,111 samples and 17 features (Table \ref{tab:acs-feature-codes}).  We use a random subsample of these data for each experimental trial (using random seeds 0 through 999).

\begin{table}[H]
  \caption{Census feature codes. More information provided at the \href{https://usa.ipums.org/usa/resources/codebooks/DataDict1822.pdf}{ACS data dictionary}.}
  \label{tab:acs-feature-codes}
  \centering
  \begin{tabular}{lll}
    \cmidrule(r){1-3}
    \textbf{Name}     & \textbf{Description}    & \textbf{Data type}  \\
    \midrule
    AGEP     & age                          & scalar (0-99)\\
    SCHL     & educational attainment       & ordinal (0-24 by school year/degree)  \\
    MAR      & marital status               & categorical (5 values)   \\
    DIS      & disability                   & binary\\
    ESP      & employment status of parents & categorical\\
    CIT      & citizenship status           & categorical \\
    MIG      & mobility status              & categorical \\
    MIL      & military service             & categorical\\
    ANC1P    & ancestry                     & categorical (1000 values)\\
    NATIVITY & nativity                     & binary\\
    DEAR     & hearing difficulty           & binary\\
    DEYE     & vision difficulty            & binary\\
    DREM     & cognitive difficulty         & binary\\
    SEX      & sex                          & binary\\
    RAC1P    & race                         & categorical (9 values)\\
    COW      & class of worker              & categorical\\
    PINCP    & person's income              & scalar\\
    \bottomrule
  \end{tabular}
\end{table}

\textbf{Collecting missing patterns.} 
Due to missing values or masked values for privacy concerns, the ACS census survey contains allocation flags for each variable indicating whether the sample value was imputed. We thus tracked these flags in the partially observed dataset (124,980 samples with 8\% of the table entries imputed) and chose the top ten missing patterns with at least one missing feature for age, schooling level, or income. The resulting missing patterns are depicted in Figure \ref{fig:acs-results}(a).

\textbf{Training and experimental details.} 
For the imputation model, we adapted a Python implementation of MissForest (initial code available at \texttt{\href{https://github.com/yuenshingyan/MissForest}{https://github.com/yuenshingyan/MissForest}}) to better integrate with our codebase. We extended the original functionality to allow more flexible configuration of the gradient-boosted tree training parameters. Specifically, we used a learning rate of 0.02, 32 leaves, a bagging fraction of 90\%, bagging frequency of 100\%, feature fraction of 90\%, and a maximum tree depth of 6.

We did not report results from a cross-fitting experiment due to the substantial computational cost of re-fitting MissForest for the bootstrap variance estimation. This high cost stems from the use of an iterative imputation approach, where a LightGBM model is trained for each of 9 partially observed covariates—each trial involving at least 50 training iterations of 9 gradient-boosted trees.

\subsection{Allergen dataset}
For the allergen chip challenge dataset experiment, we focused on the data for one IgE tracking chip, which comprised of demographic, clinical, and IgE chip readings for 2,351 samples with 184 features (112 IgE chip readings).

\textbf{Pre-processing.}
We began by filtering the dataset to include only patients with detectable IgE levels, yielding 2,114 samples. We then selected clinical and demographic features: multiple allergies, conjunctivitis, asthma, month of assessment, average airway-related IgE, region in France, age, sex, and rhinitis severity. For IgE-specific analysis, we focused on peanut allergens Ara h 1, 2, 3, 6, and 8. Features are provided in Table \ref{fig:allergen-feature-codes}. Finally, to ensure sufficient data per missing pattern, we excluded patterns with fewer than 50 samples, resulting in a final dataset of 1,760 samples.

\begin{table}[H]\label{table:acs-feature-codes}
  \caption{Allergen features. More information provided at the \href{https://usa.ipums.org/usa/resources/codebooks/DataDict1822.pdf}{ACS data dictionary}.}
  \label{fig:allergen-feature-codes}
  \centering
  \begin{tabular}{lll}
    \cmidrule(r){1-3}
    \textbf{Name}     & \textbf{Description}    & \textbf{Data type}  \\
    \midrule
    AGE & age & scalar\\
    SEX & sex & binary\\
    REGION      & region in France & categorical\\
    MONTH      & month of collection                   & categorical \\
    POLY     & presence of multiple allergies                          & binary\\
    RHIN     & rhinitis severity               & ordinal (6 values)   \\
    PINK   & conjunctivitis/pink eye       & binary (0-24 by school year/degree)  \\
    ASTHM     & asthma               & ordinal (6 values)   \\
    COFACTOR      & co-factors          & categorical (9 values), each binary encoded  \\
    ANAPH & anaphylaxis category & categorical (9 values), each binary encoded\\
    AIRWAY\_IgE      & average airway/. IgE reading             & scalar\\
    Ara h 1,..., 8 IgEs   &   IgE reading         & scalar\\
    \bottomrule
  \end{tabular}
\end{table}

\begin{figure}[H]
    \centering
    \includegraphics[width=\linewidth]{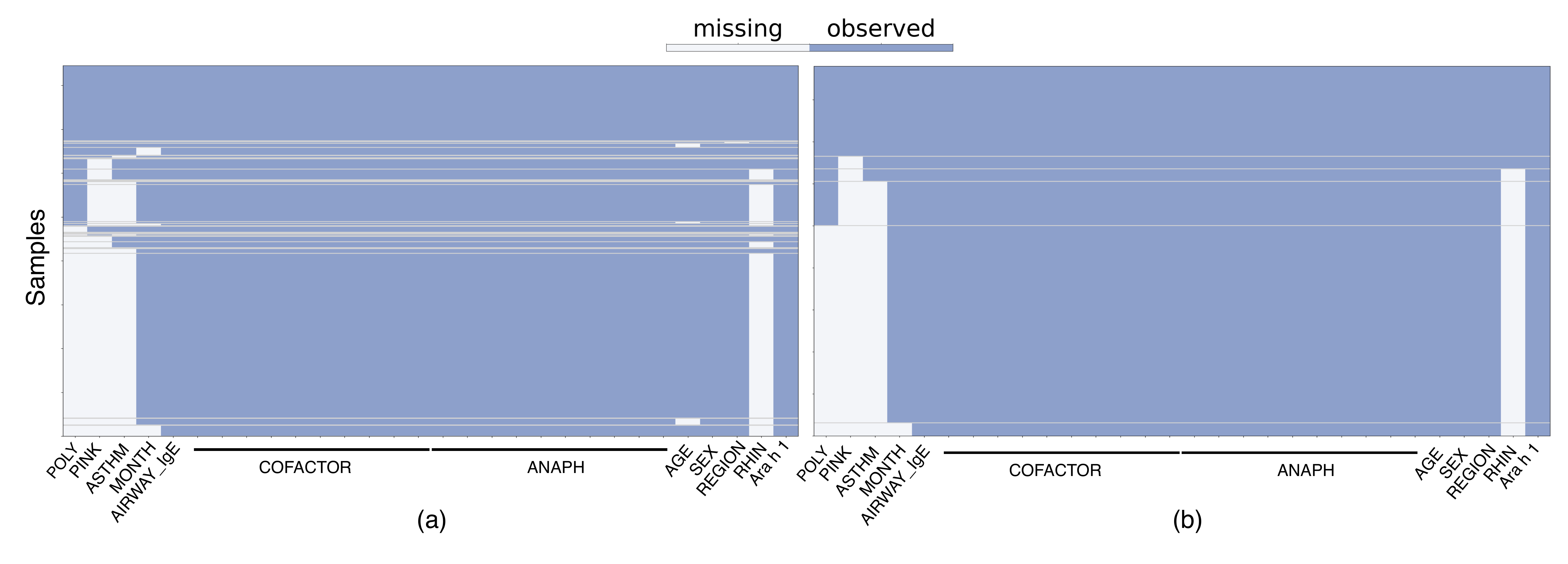}
    \caption{Missingness pattern ratios in the observed dataset after selecting this subset of features. (a) Prior to filtering out patterns with fewer than 50 samples. (b) After filtering out patterns with fewer than 50 samples.}
    \label{fig:allergendata_missing_patterns}
\end{figure}

\end{document}